\renewcommand{\title}[1]{

\begin{center} \Large \bf #1 \end{center}
}
\renewcommand{\author}[2]{
 \begin{center} #1  \vspace{3mm} \\
  #2 \\
 \end{center}
\addvspace{\baselineskip}
}
\newtheorem{theorem}{Theorem}[section]
\newtheorem{proposition}[theorem]{Proposition}
\newtheorem{corollary}[theorem]{Corollary}
\newtheorem{lemma}[theorem]{Lemma}
\newtheorem{example}[theorem]{Example}
\theoremstyle{definition}
\newtheorem{definition}[theorem]{Definition}
\theoremstyle{remark}
\begin{document}
\baselineskip 5mm
\title{Categorical Perspective on\\ Quantization of Poisson Algebra}
\author{${}^1$ Jumpei Gohara ${}^2$ Yuji Hirota and~ ${}^1$ Akifumi Sako}
{
${}^1$  Tokyo University of Science,\\ 1-3 Kagurazaka, Shinjuku-ku, Tokyo, 162-8601, Japan\\
${}^2$
Azabu University,\\ 1-17-71 Fuchinobe, Chuo-ku, Sagamihara, Kanagawa, 252-5201, Japan}
\noindent
\vspace{1cm}

\abstract{
We propose a generalization of quantization using a categorical approach. For a fixed Poisson algebra, quantization categories are defined as subcategories of the $R$-module category equipped with the structure of classical limits. We then construct the generalized quantization categories including matrix regularization, strict deformation quantization, prequantization, and Poisson enveloping algebra. It is shown that the categories of strict deformation quantization, prequantization, and matrix regularization with certain conditions are equivalent categories. On the other hand, the categories of Poisson enveloping algebra are not equivalent to the other categories. }
%
%
%
\section{Introduction}
Noncommutative geometry is regarded as one of the key concepts for formulating the
quantum gravity theory or non-perturbative string theory.
There are many ways to construct noncommutative geometry, 
including deformation quantization, geometric quantization, $C^*$-algebra,
matrix regularization, and so on. 
To find the best approach for quantum gravity or other physics, a unified perspective and a more general formulation containing the existing quantization models would be useful.
\bigskip

In this article, we define a generalized quantization of a Poisson algebra as a subcategory of the category of modules over a commutative algebra.
It is shown that matrix regularization, strict deformation quantization, and prequantization 
are included in the generalization, and each pair of them are equivalent categories under some conditions described later.
In addition, universal enveloping algebra derived from a Poisson algebra
is also formulated as the generalized quantization method.\\
\bigskip

In preparation for the following sections, we review several definitions of noncommutative geometries or quantizations.\\

Dirac introduces the quantization rule as replacing the Poisson brackets by commutators. 
When we regard quantization as a map $~\hat{}~$ from functions to operators
acting linearly on a Hilbert space, it is generally considered that
the axioms for quantization maps ~$\hat{}$~ satisfy the following conditions:
(1)$ \widehat{(H_1 + H_2)} ={\hat{H}_1} + {\hat{H}_2} $.~
(2)$
{\widehat{(\lambda H)}} = \lambda {\hat H}, \quad \lambda \in {\mathbb R}
$.~
(3)$
[ \hat{H}_1, \hat{H}_2 ] = i \widehat{\{ H_1 , H_2 \}}
$
where
$[\hat{H}_1, \hat{H}_2] = \hat{H}_1 \hat{H}_2 - \hat{H}_2 \hat{H}_1$.~
(4)
${\hat 1}= Id$  ($1$ is a constant $1$ and $Id$ is an identity operator.)
(5)
${\hat q^i }$ is the multiplication operator of the function $q^i$, and 
${\hat p^i }= \frac{1}{i} \frac{\partial}{\partial q^i} $.~
However, because no theory satisfies all these conditions, each quantization is defined under weaker conditions or with some restrictions.\\

Let us consider the definition of matrix regularization of a symplectic manifold $(M,~\omega)$. Matrix regularization \cite{matrix1} has evolved from the ideas of Berezin-Toeplitz quantization \cite{berezin1,berezin2}, Fuzzy space \cite{fuzzy1}, and so on. We employ the definition by \cite{arnlind}.
\begin{definition}\label{matrixreg}
Let $N_1,N_2,\ldots $ be a strictly increasing sequence of positive integers and $\hbar$ be a real-valued strictly positive decreasing function such that $\lim_{N\to \infty}N\hbar(N)$ converges. Let $T_k$ be a linear map from $C^\infty(M)$ to $N_k\times N_k$ Hermitian matrices for $k=1,2,\ldots$. If the following conditions are satisfied, then we call the pair $(T_k,~\hbar)$ a $C^1$-convergent matrix regularization of $(M,~\omega)$.
\begin{enumerate}
  \item $\displaystyle \lim_{k\to \infty}\|T_k(f)\|<\infty$, \label{qcon1}
  \item $\displaystyle \lim_{k\to \infty}\|T_k(fg)-T_{k}(f)T_{k}(g)\|=0$, \label{qcon2}
  \item $\displaystyle \lim_{k\to \infty}\|\frac{1}{i\hbar(N_k)}[T_k(f),T_k(g)]-T_k(\{f,g\})\|=0$, \label{qcon3}
  \item $\displaystyle \lim_{k\to \infty}2\pi\hbar(N_k){\rm Tr}T_k(f)=\int_M f\omega$,
\end{enumerate}
where $\|~\|$ is the operator norm, $\omega $ is a symplectic form on $M$ and $\{~,~\}$ is the Poisson bracket induced by $\omega$.
\end{definition}\par
\bigskip
%
%
Formal deformation quantization is defined as follows \cite{bayen1,bayen2,DeW-Lec,Fedosov,Kontsevich,Omori}.
\begin{definition}
Let $\cal F$ be a set of formal power series in $\hbar$ with
coefficients of $C^{\infty}$
functions on Poisson or Symplectic manifold $M$
\begin{eqnarray}\label{formalpower}
{\cal F} := \left\{  f \ \Big| \ 
f = \sum_k \hbar^k f_k, ~f_k \in C^\infty (M)
\right\} ,
\end{eqnarray}
where $\hbar$ is a noncommutative parameter.
A star product is defined on ${\cal F}$ by 
\begin{eqnarray}
f * g = \sum_k \hbar^k C_k (f,g), 
\end{eqnarray}
such that the product satisfies the following conditions.
\begin{enumerate}
\item $*$ is an associative product.
\item $C_k$ is a bidifferential operator.
\item $C_0$ and $C_1$ are defined as 
\begin{eqnarray}
&& C_0 (f,g) = f g,  \\
&&C_1(f,g)-C_1(g,f) = i \{ f, g \}, \label{weakdeformation}
\end{eqnarray}
where $\{ f, g \}$ is the Poisson bracket.
\item $ f * 1 = 1* f = f$.
\end{enumerate}
\end{definition}
Several variations of the deformation quantization with some minor changes from this definition exist. In this definition, the algebra is treated as a set of formal power series of smooth functions. For an arbitrary Poisson manifold, there exists a deformation quantization \cite{Kontsevich}. The formal deformation quantization has been widely adopted. However, it is difficult to regard the deformation quantization as a theory of physics when the theory remains formal. We thus employ a strict deformation quantization introduced by Rieffel \cite{rieffel1,rieffel2}. There are various definitions of this quantization. We use a definition similar to that of \cite{strict2} in this article. Before defining the strict deformation quantization, therefore, we will consider the definition of strict quantization in \cite{strict2}.
\begin{definition}\label{strict1}
Let $\mathcal{A}_0$ be a Poisson algebra which is densely contained in the self-adjoint part $\mathcal{C}^0_{\mathbb{R}}$ of an abelian $C^*$-algebra $\mathcal{C}^0$. Let $I$ be a subset of real numbers which contains $0$. Strict quantization of the Poisson algebra $\mathcal{A}$ is a family of maps $(\mathcal{Q}^\hbar:\mathcal{A}_0\to \mathcal{C}^\hbar_{\mathbb{R}})$, where
\begin{enumerate}
  \item $\mathcal{C}^\hbar$ is a $C^*$-algebra with an associative product $\times_\hbar$ and a $C^*$-norm $\|~\|_\hbar$. For $a,b\in \mathcal{C}^\hbar_{\mathbb{R}}$ which is the self-adjoint part of $\mathcal{C}^\hbar$,
\begin{align*}
a\star_\hbar b&:=\frac{1}{2}(a\times_\hbar b+b\times_\hbar a),\\
[a,b]_\hbar&:=(a\times_\hbar b-b\times_\hbar a).
\end{align*}
  \item $\forall \hbar\in I$, $\mathcal{Q}^\hbar:\mathcal{A}_0\to \mathcal{C}^\hbar_{\mathbb{R}}$ is $\mathbb{R}$-linear and $\mathcal{Q}^0$ is just the inclusion map such that
\begin{enumerate}
  \item[$(a)$] For $f\in \mathcal{A}_0$, the map $\hbar \to \|\mathcal{Q}^\hbar (f)\|_\hbar$ is continuous.
  \item[$(b)$] For $f,g\in \mathcal{A}_0$, $\|\mathcal{Q}^\hbar (f)\star^\hbar \mathcal{Q}^\hbar(g)-\mathcal{Q}^\hbar(fg)\|_\hbar \to 0$ as $\hbar \to 0$.
  \item[$(c)$] For $f,g\in \mathcal{A}_0$, $\|(i\slash \hbar)[\mathcal{Q}^\hbar (f),\mathcal{Q}^\hbar(g)]_\hbar-\mathcal{Q}^\hbar(\{f,g\})\|_\hbar \to 0$ as $\hbar\to 0$.
  \item[$(d)$] $\forall \hbar\in I$, $\mathcal{Q}^\hbar(\mathcal{A}_0)$ is dense in $\mathcal{C}^\hbar_{\mathbb{R}}$.
\end{enumerate}
\end{enumerate}
\end{definition}
We define a strict deformation quantization based on this strict quantization.
\begin{definition}\label{strict2}
If $\forall \hbar \in I$, $\mathcal{Q}^\hbar (\mathcal{A}_0)$ is a subalgebra of $\mathcal{C}^\hbar_{\mathbb{R}}$ and $\mathcal{Q}^\hbar$ is injective, a strict quantization $\mathcal{Q}^\hbar$ is called a strict deformation quantization.
\end{definition}
Especially, a strict deformation quantization of a Poisson manifold $M$ is defined by a Poisson subalgebra $\mathcal{A}_0(M)$ of $C^\infty(M)$ composed of bounded functions.
In this article, we use not formal but strict deformation quantization $(\mathcal{A}_0(M),\mathcal{Q}^\hbar)$.\par
\bigskip
The prequantization is defined as follows. (See for example \cite{prequantization0,prequantization1}. For the general Poisson manifolds case, Vaisman established the geometric quantization \cite{prequantization2}.)
\begin{definition}\label{def_prequantization}
Prequantization for Poisson manifolds is a procedure to assign to each Poisson manifold 
a hermitian line bundle with a connection whose curvature is represented in terms of each Poisson bivector.  
\end{definition}
One finds that the prequantization of a Poisson manifold $M$ gives rise to a representation of $C^\infty(M)$ 
on the space of smooth sections of the hermitian line bundle $K\overset{\pi}{\to} M$, that is, 
a mapping $\hat{}~:C^\infty(M)\to {\rm End}(\varGamma(K))$ satisfying that, for $H_1$, $H_2$ and $H\in C^\infty(M)$,
\begin{enumerate}
\item $ \widehat{(H_1 + H_2)} ={\hat{H}_1} + {\hat{H}_2}, $
\item${\widehat{(\lambda H)}} = \lambda {\hat H}, \quad \lambda \in {\mathbb R}$
\item$[ \hat{H}_1, \hat{H}_2 ] = i \hbar\widehat{\{ H_1 , H_2 \}}$
where
$[\hat{H}_1, \hat{H}_2] = \hat{H}_1 \hat{H}_2 - \hat{H}_2 \hat{H}_1$.
\item
${\hat 1}= Id$,  ($1$ is a constant $1$ and $Id$ is an identity operator.)
\end{enumerate}
The following theorem is important to understand the prequantization through the concrete construction of the quantization map $\hat{}$~.
\begin{theorem}
Let $(M,\omega)$ be a prequantizable symplectic manifold,
$L$ be a line bundle called prequantum line bundle, and
$S$ be the subset of smooth square integrable sections of $L$ with compact support. 
There exist a quantization map 
$C^\infty (M) \rightarrow  Op (S): f \mapsto \hat{f}$
satisfying the above 1-4. Here, $Op (S)$ is a set of linear operators acting on $S$.
The map is constructed concretely:
\begin{equation}\label{6_1_37}
\hat{f}: s \mapsto 
\frac{\hbar}{i}  (\nabla_{X_f})s + fs , 
\end{equation}
where $X_f$ is a Hamilton vector field of $f$ and $\nabla_{X_f} s$ is a covariant derivative of the section $s$ along $X_f$.
\end{theorem}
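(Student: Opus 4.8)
The plan is to take the explicit operator $\hat f = \frac{\hbar}{i}\nabla_{X_f} + f$ (where $f$ on the right denotes multiplication by $f$) and verify the four axioms of Definition \ref{def_prequantization} in turn. First I would record well-definedness: for $s\in S$ both $fs$ and $\nabla_{X_f}s$ are smooth and supported inside $\mathrm{supp}(s)$, so $\hat f$ maps $S$ into $S$ and indeed $\hat f\in Op(S)$. Axioms 1 and 2 are then immediate, because each ingredient is $\mathbb{R}$-linear in $f$: the assignment $f\mapsto X_f$ is linear (it is fixed by $\iota_{X_f}\omega = df$, so $X_{f+g}=X_f+X_g$ and $X_{\lambda f}=\lambda X_f$), $\nabla$ is linear in its vector-field slot, and multiplication by $f$ is linear in $f$. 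Axiom 4 follows since the constant function has $d1=0$, hence $X_1=0$ and $\hat 1 s = 0 + 1\cdot s = s$, i.e. $\hat 1 = Id$.

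The substance is Axiom 3, the Dirac condition $[\hat f,\hat g] = i\hbar\,\widehat{\{f,g\}}$, which I would prove by direct expansion. Writing $c=\hbar/i$ and using the Leibniz rule $\nabla_X(gs)=(Xg)s+g\nabla_X s$, the first-order mixed terms $c\,g\nabla_{X_f}s$ and $c\,f\nabla_{X_g}s$ occur identically in $\hat f\hat g s$ and $\hat g\hat f s$ and so cancel, while the purely multiplicative part $(fg-gf)s$ vanishes. What survives is
\begin{equation}
[\hat f,\hat g]s = c^2[\nabla_{X_f},\nabla_{X_g}]s + c\bigl((X_f g)-(X_g f)\bigr)s .
\end{equation}

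Next I would rewrite the two surviving terms geometrically. For the first, I use the curvature identity $[\nabla_{X_f},\nabla_{X_g}] = \nabla_{[X_f,X_g]} + R(X_f,X_g)$ together with the standard symplectic facts $[X_f,X_g] = -X_{\{f,g\}}$ and $\omega(X_f,X_g)=\{f,g\}$; the \emph{prequantization hypothesis enters here}, in the form that the curvature of the prequantum connection equals $R=\frac{i}{\hbar}\omega$, so $R(X_f,X_g)=\frac{i}{\hbar}\{f,g\}$. For the second term I use $X_f g = -\{f,g\}$, giving $(X_fg)-(X_gf)=-2\{f,g\}$. Substituting $c=-i\hbar$, $c^2=-\hbar^2$, the derivative contributions collect to $\hbar^2\nabla_{X_{\{f,g\}}}s$ and the scalar contributions to $(-i\hbar+2i\hbar)\{f,g\}s = i\hbar\{f,g\}s$, which together are exactly $i\hbar\,\widehat{\{f,g\}}s$.

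I expect the main obstacle to be precisely this final coefficient bookkeeping: the curvature term and the ``derivative of the function'' term each contribute a multiple of $\{f,g\}$, and it is only because the prequantization condition pins the curvature to the specific value $\frac{i}{\hbar}\omega$ that these two pieces combine into the single factor $i\hbar$ demanded by the axiom. Keeping the signs consistent forces one to commit to a single convention for $X_f$ (here $\iota_{X_f}\omega=df$) and to propagate it through $[X_f,X_g]=-X_{\{f,g\}}$, $X_fg=-\{f,g\}$, and the sign of $R$; with any one consistent choice the powers of $i$ and $\hbar$ close up. The existence of a Hermitian line bundle carrying a connection with this curvature is not something to be proved but is exactly the prequantizability assumption on $(M,\omega)$.
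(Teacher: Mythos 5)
Your verification is correct, and the coefficient bookkeeping does close up under the conventions you fix: with $c=\hbar/i=-i\hbar$ one gets $c^2[\nabla_{X_f},\nabla_{X_g}]=\hbar^2\nabla_{X_{\{f,g\}}}-i\hbar\{f,g\}$ from the curvature identity and $c\bigl((X_fg)-(X_gf)\bigr)=2i\hbar\{f,g\}$ from the multiplicative cross terms, which sum to $\hbar^2\nabla_{X_{\{f,g\}}}+i\hbar\{f,g\}=i\hbar\,\widehat{\{f,g\}}$. Note, however, that the paper does not prove this theorem at all: it is quoted as standard background on prequantization (with references to Bates--Weinstein, Lerman, and Vaisman), so there is no in-paper argument to compare against. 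Your proof is the classical direct computation found in those sources, and your explicit identification of where the prequantizability hypothesis enters --- namely that the Hermitian connection can be chosen with curvature a fixed multiple of $\omega$, which is exactly what makes the curvature term and the $(X_fg)-(X_gf)$ term combine into the single coefficient $i\hbar$ --- is the right emphasis. The only caveat worth recording is the one you already flag: the identities $[X_f,X_g]=-X_{\{f,g\}}$, $X_fg=-\{f,g\}$, and $R=\frac{i}{\hbar}\omega$ are convention-dependent, so the computation is only valid as a package; with the opposite sign for $\iota_{X_f}\omega=df$ or for the curvature, individual lines change sign but the final identity survives.
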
 
\bigskip

In this article, we put forward a new framework of quantization including the above 
quantization theories by using categorical methods.
In Section \ref{secqc}, the category of quantization of Poisson algebras 
is defined. 
In Sections \ref{secmr}, \ref{secdq} and \ref{secpq}, we introduce categories including matrix regularization, strict deformation quantization, and prequantization.
We show that they are the categories of quantization of the Poisson algebras.
In Section \ref{secenv}, we study the universal enveloping algebra derived from Poisson algebras.
It is found that the categories including matrix regularization, strict deformation quantization and prequantizaton are equivalent categories under some conditions in Section \ref{secce}.
In Section \ref{seccon}, we summarize all results and make some remarks. In addition, we discuss applications of the quantization category to physics.

\section{Quantization Category}\label{secqc}
In this section, we give a generalization of quantization as a subcategory of the category of modules. We call this subcategory ``quantization category" and show that some quantization theories are included in this category in the following sections. A limit in the category theory corresponds to a classical limit by considering a sequence of morphisms as quantization maps. Before introducing the quantization category, we define a pre-$\mathscr{Q}$ category $\mathscr{P}(\mathcal{A})$.

\begin{definition}\label{preq}
Let $R${\rm Mod} be a category of $R$-module for a commutative algebra $R$ over $\mathbb{C}$. For a Poisson algebra $\mathcal{A}$, a subcategory $\mathscr{P}(\mathcal{A})$ of $R${\rm Mod} is defined as follows. 
\begin{enumerate}
  \item[$1$] $\mathcal{A}\in ob(\mathscr{P})$
  \item[$2$] For arbitrary $M_i\in ob(\mathscr{P}),$ at least a morphism $T_i \in \mathscr{P}(\mathcal{A},~M_i)$ exists. We call $T_i$ a quantization map.
  \item[$3$] For arbitrary $M_i\in ob(\mathscr{P})$, $M_i$ is an associative algebra with a Lie bracket $[~,~]_i$, where the Lie bracket is the commutator for the  associative product of algebra, i.e. $[A,B]=A\cdot B-B\cdot A$ for $A,B\in M_i$.
  \item[$4$] The Lie bracket $[~,~]_k$ satisfies 
\begin{align}\label{lie}
[T_k(f),T_k(g)]_k=\sqrt{-1}\hbar (T_k)T_k(\{f,g\})+\tilde{O}(\hbar^{1+\epsilon} (T_k)) \quad 
(\epsilon>0 ),
\end{align}
where $\hbar$ is noncommutative complex parameter $\hbar:\coprod _k\mathscr{P}(\mathcal{A},M_k)\cup \mathscr{P}(\mathcal{A},\mathcal{A})\to \mathbb{C}$, and $\tilde{O}(\hbar^{1+\epsilon} (T_k))$ is defined in Appendix \ref{ap1}.
  \item[$5$] For arbitrary $A,B\in ob(\mathscr{P})$, if $T_{AB}\in \mathscr{P}(A,B)$ is a linear isomorphism, then there exist $T_{AB}^{-1}\in\mathscr{P}(B,A)$ which satisfies  \label{concon5}
\begin{align*}
T_{AB}T_{AB}^{-1}=id_{B},\quad T_{AB}^{-1}T_{AB}=id_A.
\end{align*} 
\end{enumerate}
We call the subcategory $\mathscr{P}(\mathcal{A})$ the pre-$\mathscr{Q}$ category.
\end{definition}
Note that the Lie bracket $[f,g]_{\mathcal{A}}$ is not the Poisson bracket but $fg-gf=0$. For $T=id_{\mathcal{A}}\in \mathscr{P}(\mathcal{A},\mathcal{A})$, $\hbar(id_{\mathcal{A}})=0$ from $(\ref{lie})$. The pre-$\mathscr{Q}$ category is not uniquely determined by $\mathcal{A}$. As shown later, there are several pre-$\mathscr{Q}$ categories that contain the same $\mathcal{A}$. For example, even a category that has only object $\mathcal{A}$ is a pre-$\mathscr{Q}$ category. \par
Below we denote $\mathscr{P}(\mathcal{A})$ by $\mathscr{P}$ for simplicity.
\begin{definition}\label{chidef}
A map $\chi:ob(\mathscr{P})\to \mathbb{R} $ is defined by the absolute maximum of $\hbar$:
\begin{align*}
\chi(M_i):=\max_{T_i \in \mathscr{P}(\mathcal{A},M_i)} |\hbar(T_i)|.
\end{align*}
We call the map $\chi$ a noncommutative character.
\end{definition}
Let us define an index category of $\mathscr{P}$.
\begin{definition}\label{jdef}
Let $\mathscr{P}$ be a pre-$\mathscr{Q}$ category. $J^\bullet(\mathscr{P})$ is an index category as disjoint union $J^\bullet(\mathscr{P}):=\coprod _\alpha J^\alpha(\mathscr{P})$, where $J^\alpha(\mathscr{P})$ is a connected component of $J^\bullet(\mathscr{P})$. $\alpha$ is the index of the connected component of the index category $J^\bullet(\mathscr{P})$. $J^\alpha(\mathscr{P})$ is defined as follows.
\begin{itemize}
  \item[$({\rm i})$] A set of objects $ob(\mathscr{P})\backslash \{\mathcal{A}\}$ is one to one correspondence with $ob(J^\bullet(\mathscr{P}))$ i.e. $\exists M_i\in ob(\mathscr{P})\backslash \{\mathcal{A}\}\Leftrightarrow \exists i\in ob(J^\bullet(\mathscr{P}))$.
  \item[$({\rm ii})$] $\forall T_{ij}^k\in \mathscr{P}(M_i,M_j),~\chi(M_i)\le \chi(M_j) \Rightarrow  \exists f^k_{ij}\in J^\alpha(\mathscr{P}) (i,j)$.
  \item[$({\rm iii})$] $\forall f^k_{ij}\in J^\alpha(\mathscr{P})(i,j),~\exists T_{ij}^k\in \mathscr{P}(M_i,M_j),~\chi(M_i)\le \chi(M_j)$.
\end{itemize}
\end{definition}
Note that, for $M_i\in ob(\mathscr{P})\backslash \{\mathcal{A}\}$, $J^\alpha(\mathscr{P})$ uniquely exists such that $i\in ob(J^\alpha(\mathscr{P}))$. Thus $J^\bullet(\mathscr{P})$ is uniquely determined by $\mathscr{P}$ and $\chi$. Each connected component $J^\alpha(\mathscr{P})$ is also an index category. \par
Below we denote $J(\mathscr{P})$ by $J$ for simplicity.
\begin{definition}\label{fdef}
$F^\bullet$ is a set of diagrams of $J^\bullet$ as $F^\bullet:=\{F^1, F^2,\cdots \}$, where $F^\alpha$ is a functor called a diagram of $J^\alpha$. $F^\alpha: J^\alpha \to \mathscr{P}$ is defined by
\begin{align*}
i\in ob(J^\alpha)\mapsto M_i\in ob(\mathscr{P}),\quad  f^k_{ij}\in J^\alpha(i,j)\mapsto T_{ij}^k\in \mathscr{P}(M_i,M_j)
\end{align*}
for all $i,j\in ob(J^\alpha)$ and $f^k_{ij}\in J^\alpha(i,j)$.
\end{definition}
Note that, since $J^\alpha$ is a connected category $F(J^\alpha)$ is also connected. However, objects which are connected in $\mathscr{P}$ are not always connected from the condition $({\rm ii})$ of Definition $\ref{jdef}$.
\begin{proposition}\label{limprop}
Let $\mathscr{P}$ be a pre-$\mathscr{Q}$ category. For arbitrary $P\in ob(\mathscr{P})$, if there is only one morphism from $\mathcal{A}$ to $P$, in other words if $\mathcal{A}$ is an initial object, then there exists a limit $M_\infty ^\alpha$ in $\mathscr{P}$ for each $(J^\alpha,~F^\alpha)$.
\end{proposition}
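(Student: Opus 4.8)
The plan is to exhibit the Poisson algebra $\mathcal{A}$ itself as the limit $M_\infty^\alpha$, sitting at the classical (i.e. $\hbar\to 0$) end of the diagram even though it is deliberately excluded from the index category $J^\alpha$. First I would build the limiting cone. By the initiality hypothesis, for every $i\in ob(J^\alpha)$ the hom-set $\mathscr{P}(\mathcal{A},M_i)$ is a singleton; call its unique element $T_i$. For an arrow $f^k_{ij}\colon i\to j$ of $J^\alpha$ with image $F^\alpha(f^k_{ij})=T^k_{ij}\colon M_i\to M_j$, the two morphisms $T^k_{ij}\circ T_i$ and $T_j$ both lie in $\mathscr{P}(\mathcal{A},M_j)$, which is again a singleton, so $T^k_{ij}\circ T_i=T_j$. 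Hence $\{T_i\}_i$ is a cone with apex $\mathcal{A}$, and I set $M_\infty^\alpha:=\mathcal{A}$ with projections $\pi_i:=T_i$. This is exactly where initiality does its essential work: uniqueness of the arrows out of $\mathcal{A}$ is precisely the cone-compatibility condition, so no compatibility needs to be checked by hand.

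Next I would verify the universal property. Given an arbitrary cone $(L,\{\phi_i\})$ over $F^\alpha$, I must produce a unique $u\colon L\to\mathcal{A}$ with $T_i\circ u=\phi_i$ for all $i$. Uniqueness is immediate, since the $T_i$ are the unique arrows out of $\mathcal{A}$: any two candidates $u,u'$ satisfy $T_i\circ u=T_i\circ u'$ for every $i$, and this jointly monic family forces $u=u'$. For existence I would exploit the quantitative content of $(\ref{lie})$: postcomposing a quantization map by a morphism of $\mathscr{P}$ changes the associated value of $\hbar$ only in a controlled way, so that the noncommutative character $\chi$ of Definition~$\ref{chidef}$ is monotone along the arrows of the diagram (this monotonicity is also exactly what underlies clause $({\rm ii})$ of Definition~$\ref{jdef}$). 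Consequently the apex of any cone obeys $\chi(L)\le\chi(M_i)$ for every $i$, hence $\chi(L)\le\inf_i\chi(M_i)$; since along each connected component this infimum is the classical value $0$, the apex $L$ is pinned to the locus $\chi=0$, on which $\mathcal{A}$ is terminal, and this is what delivers the comparison morphism $u\colon L\to\mathcal{A}$ lying in $\mathscr{P}$.

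The hard part is precisely this existence step. Initiality only manufactures morphisms \emph{out of} $\mathcal{A}$, whereas the limiting cone requires the mediating morphism to run \emph{into} $\mathcal{A}$, so no purely formal diagram chase can close the gap: one genuinely needs the metric structure of the definitions — the $\chi$-monotonicity coming from $(\ref{lie})$ together with the fact, visible in matrix regularization, that each connected component descends toward the classical character $\chi=0$ at which $\mathcal{A}$ is the universal object. The cone construction and the uniqueness clause are routine; the crux on which the whole proposition turns is locating $u$ and checking that it is an admissible morphism of $\mathscr{P}$.
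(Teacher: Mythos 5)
Your first paragraph is, in substance, the paper's entire proof: the authors simply observe that condition 2 of Definition \ref{preq} gives a morphism $T_i\in\mathscr{P}(\mathcal{A},M_i)$ for every $i$, and that uniqueness of morphisms out of $\mathcal{A}$ makes the family $\{T_i\}$ automatically compatible with every $T^k_{ij}$, so that $\mathcal{A}$ is ``a candidate for the limit.'' Up to that point your argument is correct and identical in approach. The paper stops there; it never verifies the universal property.

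The remainder of your proposal, where you try to supply that verification, contains genuine gaps. For uniqueness you assert that the family $\{T_i\}$ is jointly monic, but initiality only controls morphisms \emph{out of} $\mathcal{A}$; it says nothing about whether two distinct $u,u'\colon L\to\mathcal{A}$ can satisfy $T_i\circ u=T_i\circ u'$, so the claim is unsupported. For existence your argument rests on three assertions none of which follow from the definitions: that $\chi$ is monotone along arbitrary morphisms of $\mathscr{P}$ (Definition \ref{chidef} defines $\chi$ only via morphisms out of $\mathcal{A}$, and clause (ii) of Definition \ref{jdef} merely selects which morphisms are indexed by $J^\alpha$ — it is not a monotonicity statement for cones with apex $L\neq\mathcal{A}$); that $\inf_i\chi(M_i)=0$ on each connected component (this is false in the paper's own example $\mathscr{Q}_{env}$, where $\chi(U)=1$ and $\chi(M_i)\ge 1$ throughout the diagram, yet $\mathcal{A}$ is still taken as the limit); and that $\mathcal{A}$ is terminal on the locus $\chi=0$, which appears nowhere in the paper and would not by itself produce a morphism $u\colon L\to\mathcal{A}$ lying in $\mathscr{P}$. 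So the mediating morphism is not actually constructed. To be fair, the paper's one-line proof does not construct it either — it only exhibits the cone — so your proposal reproduces what the paper proves and then fails at exactly the point the paper leaves unaddressed.
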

\begin{proof}
From the condition $2$ of Definition $\ref{preq}$ and the uniqueness of the morphism from $\mathcal{A}$, $\mathcal{A}$ is a candidate for the limit of all $(J^\alpha, F^\alpha)$. 
\end{proof}
The definition of the limit for $(J,F)$ used in this article can be found, for example, in Chapter $5$ in \cite{limit2} and \cite{limit1}. The limit $M_\infty$ is usually written as a combination with projections $\pi$ like $(M_\infty,\pi)$. We often denote $(M_\infty, \pi)$ as $M_\infty$ for short.\par
\bigskip
We can now define the quantization category.
\begin{definition}\label{qc}
A pre-$\mathscr{Q}$ category $\mathscr{P}$ is called a quantization category $\mathscr{Q}(\mathscr{P},J^\bullet,F^\bullet,\chi)$, when a limit $M_\infty^\alpha$ exists for each $(J^\alpha, F^\alpha)$ and the following conditions are satisfied at each limit $M_\infty ^\alpha$:\\
For arbitrary $f,g\in \mathcal{A}$ and $T:=T_\infty \in \mathscr{P}(\mathcal{A},M^\alpha_\infty)$, 
\begin{enumerate}
  \item[{\rm (Q1)}] $T(fg)-T(f)T(g)=0$,
  \item[{\rm (Q2)}] $\left[T(f),T(g)\right]_\infty-i\hbar(T)T(\{f,g\})=0$.
\end{enumerate}
\end{definition}
We denote $\mathscr{Q}(\mathscr{P},J^\bullet, F^\bullet, \chi)$ by $\mathscr{Q}$ for short. Note that all morphisms are linear maps since $\mathscr{Q}\subset R{\rm Mod}$. In general, therefore, a quantization map $T_i$ is not an algebraic homomorphism.

\bigskip

We show that this definition includes matrix regularization, strict deformation quantization, prequantization, and Poisson enveloping algebra in the following sections. 
\section{Matrix Regularization}\label{secmr}
In this section, we construct a quantization category to include matrix regularization of Definition $\ref{matrixreg}$.
\begin{definition}\label{cmrdef}
Let $\{N_i\}$ be a strictly increasing sequence of positive integers, and let $\hbar$ be a real-valued strictly positive decreasing function such that there exists positive real number $\delta \in \mathbb{R}$, $0<\lim_{N\to \infty}N\hbar(N)^\delta<\infty$. For a Poisson manifold $M$, a subcategory $\mathscr{P}_{MR}$ of $R${\rm Mod} is defined as follows:
\begin{align*}
ob(\mathscr{P}_{MR}):=\{ \mathcal{A}(M), Mat_{N_k}, Mat_{\infty}\mid k=1,2,\ldots \},
\end{align*}
where $\mathcal{A}(M)$ is $(C^\infty(M),\cdot,\{~,~\})$ on a Poisson manifold $M$, $Mat_{N_k}$ is a $N_k\times N_k$ matrix algebra and $Mat_{\infty}$ is a matrix algebra  which contains matrices of the limit of matrix regularization in Definition $\ref{matrixreg}$. The operator norm is defined for all $Mat_{N_i}$ and $Mat_\infty$. A morphism
\begin{align*}
T_i:&\mathcal{A}(M)\to Mat_{N_i}
\intertext{and a morphism}
T_\infty:&\mathcal{A}(M)\to Mat_\infty
\end{align*}
exist and they are quantization maps satisfying \ref{qcon1}., \ref{qcon2}. and \ref{qcon3}. in Definition $\ref{matrixreg}$ for all $Mat_{N_i}\in ob(\mathscr{P}_{MR})$. In addition, if and only if $N_i\le N_j$ $(N_i, N_j\in \mathbb{Z})$, morphisms $T_{ij}\in \mathscr{P}_{MR}(Mat_{N_j}, Mat_{N_i})$ satisfying
\begin{align}\label{eq1}
T_i=T_{ij}\circ T_{j},
\end{align}
exist. Let us define the set of morphisms $Mor(\mathscr{P}_{MR})$ by these quantization maps. We denote the set of quantization maps and identity maps by $QM_{MR}$, and the set of inverse maps of linear isomorphisms except for identities in $QM_{MR}$ by $QM^{-1}_{MR}$. That is,
\begin{align*}
QM_{MR}:&=\{T_i,T_{ij},T_\infty,id_{Mat_{N_i}},id_{Mat_{\infty}},id_{\mathcal{A}(M)}\mid i,j=1,2,\ldots \},\\
QM^{-1}_{MR}:&=\{f^{-1}\mid f\in QM_{MR}\mbox{ is an isomorphism and not an identity}\}.
\end{align*}
Then the set of morphisms $Mor(\mathscr{P}_{MR})$ is
\begin{align*}
Mor(\mathscr{P}_{MR}):=QM_{MR}\cup QM^{-1}_{MR}.
\end{align*}
For all morphisms $T_k\in \mathscr{P}_{MR}(\mathcal{A}(M),Mat_{N_k})~k=1,2,\ldots,$ and $T_\infty \in \mathscr{P}_{MR}(\mathcal{A}(M),Mat_{\infty})$, this codomain is equipped with the Lie bracket $[~,~]_k$ as the commutator such that 
\begin{align}\label{pmreq1}
[T_k(f),T_k(g)]_k=i\hbar(N_k)T_k(\{f,g\})+O(\hbar(N_k)^2)\quad (f,g\in \mathcal{A}(M)).
\end{align}
\end{definition}
This $\hbar$ goes to $0$ as $N\to \infty$ and controls the magnitude of noncommutativity.  The consistency of this definition is confirmed in the proofs of the following lemmas. As we will see later, it is enough to consider $QM^{-1}_{MR}$ is $\emptyset$ or $\{T^{-1}_\infty\}$.
\begin{lemma}\label{lem1}
When $Mat_{N_i}$ and $Mat_{N_j}$ are decomposed into ${\rm Im}~T_i\oplus {\rm Coker}~T_i$ and ${\rm Im}~T_j\oplus {\rm Coker}~T_j$, respectively, for all $T_{ij}\in \mathscr{P}_{MR}(Mat_{N_j},Mat_{N_i})$, $T_{ij}$ is block diagonal, i.e.,
\begin{align*}
T_{ij}=\left(
\begin{array}{c|c}
T_{ij}^{11}&0\\ \hline
0&T_{ij}^{22}
\end{array}\right):
\left(
\begin{array}{c}
{\rm Im}~T_j\\
{\rm Coker}~T_j
\end{array}\right)\to
\left(
\begin{array}{c}
{\rm Im}~T_i\\
{\rm Coker}~T_i
\end{array}\right).
\end{align*}
\end{lemma}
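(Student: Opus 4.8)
The plan is to fix the orthogonal decompositions $Mat_{N_j}={\rm Im}\,T_j\oplus{\rm Coker}\,T_j$ and $Mat_{N_i}={\rm Im}\,T_i\oplus{\rm Coker}\,T_i$, realizing each cokernel as the orthogonal complement of the corresponding image with respect to the Hilbert--Schmidt inner product $\langle A,B\rangle={\rm Tr}(A^\ast B)$, so that ${\rm Coker}\,T=({\rm Im}\,T)^\perp=\ker T^\ast$. Writing $T_{ij}$ as a $2\times2$ block operator in these decompositions, the claim is exactly that both off-diagonal blocks vanish, and I would dispose of them separately since they require quite different inputs.

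The lower-left block, carrying ${\rm Im}\,T_j$ into ${\rm Coker}\,T_i$, vanishes at once. By the defining relation $(\ref{eq1})$ every $x\in{\rm Im}\,T_j$ has the form $x=T_j(f)$, whence $T_{ij}(x)=T_{ij}(T_j(f))=T_i(f)\in{\rm Im}\,T_i$. Thus $T_{ij}({\rm Im}\,T_j)\subseteq{\rm Im}\,T_i$, the ${\rm Coker}\,T_i$-component of $T_{ij}(x)$ is zero, and the lower-left block is $0$; as a byproduct $T_{ij}({\rm Im}\,T_j)={\rm Im}\,T_i$, so the diagonal block $T_{ij}^{11}$ is surjective.

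The upper-right block, carrying ${\rm Coker}\,T_j$ into ${\rm Im}\,T_i$, is the crux, because $(\ref{eq1})$ pins down $T_{ij}$ only on ${\rm Im}\,T_j$ and says nothing about ${\rm Coker}\,T_j$. The reformulation I would use passes to the adjoint: taking adjoints in $(\ref{eq1})$ gives $T_i^\ast=T_j^\ast T_{ij}^\ast$, so for $y\in{\rm Coker}\,T_j=\ker T_j^\ast$ one has $T_i^\ast\big(T_{ij}(y)\big)=T_j^\ast\big(T_{ij}^\ast T_{ij}\,y\big)$. Hence $T_{ij}(y)\in{\rm Coker}\,T_i=\ker T_i^\ast$ for every such $y$ if and only if ${\rm Coker}\,T_j=\ker T_j^\ast$ is invariant under the positive self-adjoint operator $T_{ij}^\ast T_{ij}$ (equivalently, ${\rm Im}\,T_j$ is invariant). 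Establishing this invariance is the main obstacle: it is not a formal consequence of the composition relation but a genuine property of the morphisms $T_{ij}$ of Definition $\ref{cmrdef}$, namely the truncation maps compatible with the inverse system whose limit is $Mat_\infty$. I would verify the invariance for these maps and, at the same time, check that the resulting block-diagonal $T_{ij}$ remains consistent with the Lie-bracket relation $(\ref{pmreq1})$---this is exactly the point at which the consistency of Definition $\ref{cmrdef}$ is being confirmed.

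With both off-diagonal blocks shown to be zero, I would read off the asserted block-diagonal form with $T_{ij}^{11}=T_{ij}|_{{\rm Im}\,T_j}$ and $T_{ij}^{22}=T_{ij}|_{{\rm Coker}\,T_j}$, which completes the argument.
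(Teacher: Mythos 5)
Your analysis of the lower-left block is correct and coincides with the only substantive step in the paper's own proof: since every element of ${\rm Im}\,T_j$ has the form $T_j(f)$, the relation $(\ref{eq1})$ forces $T_{ij}(T_j(f))=T_i(f)\in{\rm Im}\,T_i$, so the block ${\rm Im}\,T_j\to{\rm Coker}\,T_i$ vanishes. You are also right that $(\ref{eq1})$ constrains $T_{ij}$ only on ${\rm Im}\,T_j$ and says nothing about its restriction to ${\rm Coker}\,T_j$. The problem is that your proposal stops exactly at the point where an argument is needed: the reduction of $T_{ij}^{12}=0$ to the statement that $\ker T_j^\ast$ is invariant under $T_{ij}^\ast T_{ij}$ is only a reformulation of what must be proved, and the sentence ``I would verify the invariance for these maps'' is a placeholder, not a verification. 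Nothing in Definition $\ref{cmrdef}$ supplies this invariance --- the $T_{ij}$ there are merely linear maps satisfying $(\ref{eq1})$ --- and for a general such map the claim is simply false: one may add any nonzero block ${\rm Coker}\,T_j\to{\rm Im}\,T_i$ to $T_{ij}$ without disturbing $(\ref{eq1})$, since $T_j(f)$ never has a ${\rm Coker}\,T_j$ component. So as written the proposal does not establish $T_{ij}^{12}=0$.

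For comparison, the paper's proof is a single line: write $T_{ij}$ in block form and conclude from $T_i=T_{ij}\circ T_j$ that both off-diagonal blocks vanish. Read literally this overreaches in the same place your proposal stalls --- only $T_{ij}^{21}=0$ is forced --- and the intended reading, confirmed by the subsequent proof of Lemma $\ref{cmrlem}$ where $T_{ij}^{22}$ is treated as freely choosable data, is that the blocks not determined by $(\ref{eq1})$ are being \emph{normalized} so that $T_{ij}$ respects the two decompositions; the lemma records the resulting form rather than deriving it from some additional analytic property of the $T_{ij}$. Your adjoint/Hilbert--Schmidt machinery is therefore both unnecessary for the part that can be proved and insufficient for the part you set out to prove; the honest options are either to state the vanishing of $T_{ij}^{12}$ as a choice, as the paper effectively does, or to identify a concrete extra hypothesis on the $T_{ij}$ (e.g.\ that they are adjoint-compatible truncations as in Example $\ref{btexample}$) and verify the invariance there explicitly.
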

\begin{proof}
Let us denote $T_{ij}$ as 
\begin{align*}
T_{ij}=\left(
\begin{array}{c|c}
T_{ij}^{11}&T_{ij}^{12}\\ \hline
T_{ij}^{21}&T_{ij}^{22}
\end{array}\right):
\left(
\begin{array}{c}
{\rm Im}~T_j\\
{\rm Coker}~T_j
\end{array}\right)\to
\left(
\begin{array}{c}
{\rm Im}~T_i\\
{\rm Coker}~T_i
\end{array}\right).
\end{align*}
Since $T_{i}=T_{ij}\circ T_j$, $T^{12}_{ij}=T^{21}_{ij}=0$ is derived. 
\end{proof}
\begin{lemma}\label{cmrlem}
$\mathscr{P}_{MR}$ exist as a pre-$\mathscr{Q}$ category equipped with $T_{ij}\in \mathscr{P}_{MR}(Mat_{N_j},Mat_{N_i})$ for all $i,j~(i\le j)$.
\end{lemma}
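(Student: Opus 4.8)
The plan is to verify directly that the data assembled in Definition \ref{cmrdef} satisfy each of the five clauses of Definition \ref{preq}, treating the existence and compositional consistency of the connecting maps $T_{ij}$ as the substantive point and the remaining clauses as bookkeeping. Clause $1$ holds because $\mathcal{A}(M)=(C^\infty(M),\cdot,\{~,~\})$ is listed in $ob(\mathscr{P}_{MR})$. Clause $2$ holds since Definition \ref{cmrdef} posits quantization maps $T_i\colon\mathcal{A}(M)\to Mat_{N_i}$ and $T_\infty\colon\mathcal{A}(M)\to Mat_\infty$ for every object, these being matrix regularizations in the sense of Definition \ref{matrixreg}. Clause $3$ is immediate: each $Mat_{N_k}$ and $Mat_\infty$ is an associative algebra whose Lie bracket is by construction the commutator, while on $\mathcal{A}(M)$ the commutator vanishes, consistent with the remark following Definition \ref{preq}.

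For clause $4$ I would set $\hbar(T_k):=\hbar(N_k)$ and compare the bracket identity \eqref{pmreq1} with the required form \eqref{lie}. Reading $\sqrt{-1}=i$ and taking $\epsilon=1$, the remainder $O(\hbar(N_k)^2)$ is precisely of order $\tilde{O}(\hbar^{1+\epsilon}(T_k))$ in the sense of Appendix \ref{ap1}, so \eqref{lie} holds; the same identification fixes the values of $\hbar$ on the quantization maps and hence the noncommutative character $\chi$ of Definition \ref{chidef}. Clause $5$ is built into the construction, since $Mor(\mathscr{P}_{MR})=QM_{MR}\cup QM^{-1}_{MR}$ already adjoins a two-sided inverse to every linear isomorphism among the quantization maps.

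The real work is to show that $\mathscr{P}_{MR}$ is a bona fide category, i.e.\ that the maps $T_{ij}\in\mathscr{P}_{MR}(Mat_{N_j},Mat_{N_i})$ with $N_i\le N_j$ exist and compose consistently. I would construct $T_{ij}$ block-diagonally with respect to the splittings $Mat_{N_j}=\operatorname{Im}T_j\oplus\operatorname{Coker}T_j$ and $Mat_{N_i}=\operatorname{Im}T_i\oplus\operatorname{Coker}T_i$ supplied by Lemma \ref{lem1}: on the image block set $T_{ij}^{11}(T_j(f)):=T_i(f)$ and on the complementary block take $T_{ij}^{22}:=0$. The factorization $T_i=T_{ij}\circ T_j$ is then automatic, and for a chain $N_i\le N_j\le N_l$ the relations $T_i=T_{ij}\circ T_j$, $T_j=T_{jl}\circ T_l$ and $T_i=T_{il}\circ T_l$ give $(T_{ij}\circ T_{jl}-T_{il})\circ T_l=0$, so the two morphisms agree on $\operatorname{Im}T_l$; together with the block-diagonal normalization (both sides vanish on $\operatorname{Coker}T_l$) this upgrades to the equality $T_{ij}\circ T_{jl}=T_{il}$ on all of $Mat_{N_l}$. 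Hence the morphism set is closed under composition, and associativity together with the presence of identities is inherited from $R${\rm Mod}.

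The step I expect to be the main obstacle is the well-definedness of the image block $T_{ij}^{11}$, which requires the kernel nesting $\ker T_j\subseteq\ker T_i$ so that $T_j(f)=T_j(g)$ forces $T_i(f)=T_i(g)$. This inclusion is exactly the compatibility one must demand of a nested family of matrix regularizations, and it is the crux on which both the existence of $T_{ij}$ and the composition law depend. I would argue for it from the factorization constraint $T_i=T_{ij}\circ T_j$ itself, which any admissible connecting map is required to satisfy, combined with the monotonicity $N_i\le N_j$, restricting if necessary to families of quantization maps whose kernels are nested, as holds for the standard Berezin--Toeplitz and fuzzy-space constructions underlying Definition \ref{matrixreg}. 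Once this is secured, all five clauses are verified and $\mathscr{P}_{MR}$ is a pre-$\mathscr{Q}$ category equipped with the full system of $T_{ij}$.
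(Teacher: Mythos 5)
Your proposal is correct but organizes the key step differently from the paper. The paper builds the connecting maps by chaining adjacent ones: it fixes $T_{i(i+1)}$ with $T_{i(i+1)}T_{i+1}=T_i$, defines $T_{ij}:=T_{i(i+1)}T_{(i+1)(i+2)}\cdots T_{(j-1)j}$ so that $T_{ij}\circ T_{jk}=T_{ik}$ holds by construction (with every $T_{ij}^{22}=0$), and then checks $T_i=T_{ij}\circ T_j$ by telescoping. You instead define each $T_{ij}$ directly by $T_j(f)\mapsto T_i(f)$ on ${\rm Im}\,T_j$ and zero on the complement, and verify the cocycle condition afterwards via agreement on ${\rm Im}\,T_l$ together with vanishing on ${\rm Coker}\,T_l$. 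Both routes rest on the same block-diagonal normalization and are equally valid; yours has the merit of making the well-definedness issue ($\ker T_j\subseteq\ker T_i$) explicit, a point the paper glosses over when it asserts that $T_{i(i+1)}^{11}$ exists merely because $N_i<N_{i+1}$ --- a dimension count alone does not give kernel nesting, so your explicit restriction to nested families is, if anything, more careful. The one item you omit that the paper does address is the second half of the categoricity check: when $QM^{-1}_{MR}\neq\emptyset$ one must confirm that compositions involving inverses remain in $Mor(\mathscr{P}_{MR})$. The paper observes that since each $Mat_{N_k}$ is finite-dimensional while $\mathcal{A}(M)$ is infinite-dimensional, only $T_\infty$ can be an isomorphism, and then $T_k\circ T_\infty^{-1}=T_{k\infty}$ is already a listed morphism; you assert closure under composition but verify it only for the $T_{ij}$ family, so you should append this one-line argument.
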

\begin{proof}
From the definition of $\mathscr{P}_{MR}$, the conditions of Definition $\ref{preq}$ are satisfied, where $\hbar(T_k)=\hbar(N_k)$. We must show that there is no contradiction in composition maps of the two situations. For $\mathscr{P}_{MR}$ to be a category, 
\begin{itemize}
  \item the consistency $T_{ij}\circ T_{jk}=T_{ik}$ should be satisfied for all $i,j$ and $k$ such that $N_i\le N_j\le N_k$. 
  \item If $QM^{-1}_{MR}\neq \emptyset$, then $Mor(\mathscr{P}_{MR})$ includes all composition maps with $f^{-1}\in QM^{-1}_{MR}$.
\end{itemize}\par
First, we show that there is no contradiction even if $T_{ik}$ is simply defined as a composition of $T_{ij}\circ T_{jk}$. For adjacent objects $Mat_{N_i}$ and $Mat_{N_{i+1}}$, let us define $T_{i(i+1)}$ so that $T_{i(i+1)}T_{i+1}=T_i$ for every $i$. Since $N_i<N_{i+1}$, $T_{i(i+1)}^{11}$ in Lemma $\ref{lem1}$ exists. So the existence of this $T_{i(i+1)}$ for all $i$ is guaranteed. We define $T_{ij}$ by the ordered product as follows.
\begin{align}\label{eqlem}
T_{ij}:=T_{i(i+1)}T_{(i+1)(i+2)}\cdots T_{(j-1)j}.
\end{align}
These $T_{ij}$ trivially satisfy $T_{ij}\circ T_{jk}=T_{ik}$ for every $N_i\le N_j\le N_k$ at least if we put every $T_{ij}^{22}=0$. Now, we check that they do not contradict the condition $T_i=T_{ij}\circ T_j$. For the block matrix
\begin{align*}
T_{ij}=\left(
\begin{array}{c|c}
T_{ij}^{11}&0\\ \hline
0&T_{ij}^{22}
\end{array}\right),
\end{align*}
$\forall i,j$ and $\forall T^{22}_{ij}$, 
\begin{align*}
T_{ij}T_j&=T_{i(i+1)}T_{(i+1)(i+2)}\cdots T_{(j-2)(j-1)}T_{(j-1)(j)}T_j\\
&=T_{i(i+1)}T_{(i+1)(i+2)}\cdots T_{(j-2)(j-1)}T_{j-1}\\
&~~\vdots \\
&=T_i.
\end{align*}
Thus, $(\ref{eqlem})$ is no contradiction.\par
Next, we show that there is no contradiction even if $QM^{-1}_{MR}\neq \emptyset$. Since a dimension of arbitrary $Mat_{N_k}$ is finite and a dimension of $\mathcal{A}(M)$ is infinite, $\mathcal{A}(M)$ and $Mat_{N_k}$ are not isomorphic. So it is enough to consider the only case in which $T_\infty$ is an isomorphism. From the condition $(\ref{eq1})~T_k=T_{k\infty} \circ T_\infty$, and $T_\infty \circ T_\infty^{-1}=id_{Mat_{\infty}}$ derives
\begin{align*}
T_k\circ T_\infty^{-1}&=T_{k\infty}.
\end{align*}
Thus, $Mor(\mathscr{P}_{MR})$ includes all composition maps.
\end{proof}
From this proof, if $T_{ij}^{22} T_{jk}^{22}=T_{ik}^{22}$ is satisfied, then $\mathscr{P}_{MR}$ is a consistently pre-$\mathscr{Q}$ category.
These morphisms $T_i,~T_{ij}$ for $i,j=1,2,\cdots $ can be specifically configured in the case of Berezin-Toeplitz quantization as follows.
\bigskip
\bigskip
\begin{example}[\cite{Le}]\label{btexample}
Let $(M,\omega )$ be a compact connected K\"{a}hler manifold with a line bundle $L\to M$. We denote $L^{\otimes k}$ by $L^k$ for all $k\ge 1$, and a smooth sections of $L^k$ by $C^\infty(M,L^k)$. The quantum space at $k$ given as
\begin{align*}
\mathcal{H}_k=H^0(M,L^k)
\end{align*}
of global holomorphic sections of $L^k\to M$. Here, $H^0(M,L^k)$ is Dolbeault cohomology. Let $L^2(M, L^k)$ be the completion of $C^\infty(M,L^k)$ with respect to the inner product $\langle\cdot,~\cdot\rangle_k$, and $\Pi_k$ be the orthogonal projector from $L^2(M,L^k)$ to $\mathcal{H}_k$. The Berezin-Toeplitz operator is defined as 
\begin{align*}
T_k(f)=\Pi_k\circ \mu_f\circ \Pi_k,
\end{align*}
where $\mu_f$ is the operator of multiplication by $f\in C^\infty(M)$. That is, $T_k$ is a linear map from the commutative algebra $C^\infty(M)$ to a matrix algebra ${\rm End}(\mathcal{H}_k)$.
\begin{theorem}[\cite{berezin2}]\label{thmbtdim}
The dimension of the $\mathcal{H}_k$ satisfies
\begin{align*}
\dim \mathcal{H}_k=\left(\frac{k}{2\pi}\right)^n{\rm vol}(M)+O(k^{n-1})
\end{align*}
when $k$ goes to infinity, where $2n=\dim M$.
\end{theorem}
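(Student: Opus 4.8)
The plan is to reduce the computation of $\dim \mathcal{H}_k$ to a topological Euler characteristic and then apply the Hirzebruch--Riemann--Roch theorem. Since $L$ is a prequantum line bundle, it is positive (its curvature represents $\omega$ up to a constant), so the Kodaira vanishing theorem gives $H^q(M, L^k) = 0$ for all $q \ge 1$ once $k$ is large enough. This converts the analytically defined space $\mathcal{H}_k$ into a cohomological invariant: for large $k$,
\[
\dim \mathcal{H}_k = \dim H^0(M, L^k) = \sum_{q=0}^n (-1)^q \dim H^q(M, L^k) = \chi(M, L^k),
\]
the holomorphic Euler characteristic.

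First I would invoke Hirzebruch--Riemann--Roch to write $\chi(M, L^k) = \int_M {\rm ch}(L^k)\, {\rm Td}(M)$ and expand the Chern character as ${\rm ch}(L^k) = e^{k c_1(L)} = \sum_{j=0}^n \frac{k^j}{j!}\, c_1(L)^j$, where $c_1(L)^j$ has cohomological degree $2j$. Only the degree-$2n$ part of the product ${\rm ch}(L^k)\,{\rm Td}(M)$ survives integration over the real $2n$-dimensional manifold $M$, and the term carrying the top power of $k$ is $\frac{k^n}{n!}\, c_1(L)^n$ paired with the degree-$0$ part of ${\rm Td}(M)$, which equals $1$. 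Every remaining contribution pairs a lower power $c_1(L)^j$ ($j \le n-1$) with a higher-degree component of ${\rm Td}(M)$, hence carries at most $k^{n-1}$; these collect into the error $O(k^{n-1})$.

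Next I would fix the normalization. In the geometric-quantization convention the prequantum line bundle satisfies $c_1(L) = [\omega]/2\pi$, so $\int_M c_1(L)^n = (2\pi)^{-n}\int_M \omega^n$. Since the Liouville volume of a $2n$-dimensional symplectic manifold is ${\rm vol}(M) = \frac{1}{n!}\int_M \omega^n$, the leading term becomes
\[
\frac{k^n}{n!}\int_M c_1(L)^n = \frac{k^n}{n!\,(2\pi)^n}\int_M \omega^n = \left(\frac{k}{2\pi}\right)^n {\rm vol}(M),
\]
which is exactly the claimed main term, with the subleading contributions absorbed into $O(k^{n-1})$.

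The routine parts are the Hirzebruch--Riemann--Roch bookkeeping and the degree count that isolates the leading power of $k$. The step demanding the most care is the normalization: one must confirm that the curvature/Chern-class convention yields $c_1(L) = [\omega]/2\pi$ and that ${\rm vol}(M)$ is the Liouville volume $\frac{1}{n!}\int_M \omega^n$, since a different convention would shift the factors of $2\pi$. The essential conceptual input, without which the whole argument collapses, is the positivity of $L$ that licenses Kodaira vanishing and therefore the identification $\dim \mathcal{H}_k = \chi(M, L^k)$ for all sufficiently large $k$.
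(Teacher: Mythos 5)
The paper does not prove this statement at all; it is imported verbatim from the cited reference \cite{berezin2} as a known result, so there is no in-paper argument to compare against. Your proposal is the standard and correct derivation — Kodaira/Serre vanishing to identify $\dim\mathcal{H}_k$ with the holomorphic Euler characteristic for large $k$, then Hirzebruch--Riemann--Roch with $c_1(L)=[\omega]/2\pi$ and the Liouville normalization ${\rm vol}(M)=\frac{1}{n!}\int_M\omega^n$ to extract the leading term — and this is essentially how the result is established in the literature the paper cites.
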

From this theorem, $T_{ij}$ is constructed as follows. Let $|n\rangle _k$ be an orthonormal basis that diagonalizes $\Pi_k$, i.e., $\Pi_k=\sum_n^{\dim \mathcal{H}_k}|n\rangle_k {}_k\langle n|$. For $\dim \mathcal{H}_j\ge \dim \mathcal{H}_i$,
\begin{align*}
T_{ij}T_j(f):&=T_{ij}(\Pi_j\circ \mu_f\circ \Pi_j)\\
&=U_{ij}\Pi_jU^\dagger _{ij} \circ \mu_f\circ U_{ij} \Pi_j U_{ij}^\dagger\\
&=\Pi_i\circ \mu_f \circ \Pi_i\\
&=T_i(f),
\end{align*}
where $U_{ij}=\sum_n^{\dim \mathcal{H}_i} |n\rangle_i{}_j\langle n|$.
\end{example}
For the Berezin-Toeplitz quantization map $T_k$, the following theorem is known.
\begin{theorem}[\cite{berezin2}]\label{thmbt}
Let $(M,\omega )$ be a compact, connected, K\"{a}hler manifold and $T_k$ be the Berezin-Toeplitz operator.
\begin{enumerate}
  \item[(a)] For any $f\in C^\infty(M)$, 
\begin{align*}
\|T_k(f)\|\le \|f\|_\infty,
\end{align*}
where $\|T\|$ stands for the operator norm and $\|T\|_\infty$ stands for the uniform norm.
 \item[(b)] For any $f,g\in C^\infty(M)$,
\begin{align*}
T_k(f)T_k(g)=T_k(fg)+O(k^{-1}).
\end{align*}
 \item[(c)] For any $f,g\in C^\infty(M)$,
\begin{align}\label{btc}
[T_k(f),T_k(g)]=\frac{1}{ik}T_k(\{f,g\})+O(k^{-2}).
\end{align}
\end{enumerate}
\end{theorem}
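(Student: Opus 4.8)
First I would dispose of part (a), which is purely formal. For any holomorphic section $s\in\mathcal{H}_k$ we have $\Pi_k s=s$, and since $\Pi_k$ is an orthogonal projector it is a contraction, $\|\Pi_k\|\le 1$. Multiplication by $f$ is bounded pointwise, so $\|\mu_f s\|_k^2=\int_M|f|^2|s|^2\,dV\le\|f\|_\infty^2\|s\|_k^2$, i.e. $\|\mu_f\|\le\|f\|_\infty$. Combining these,
\[
\|T_k(f)s\|_k=\|\Pi_k\mu_f\Pi_k s\|_k\le\|\mu_f s\|_k\le\|f\|_\infty\|s\|_k,
\]
which gives $\|T_k(f)\|\le\|f\|_\infty$, proving (a).

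For (b) and (c) I would work with the reproducing (Bergman) kernel $B_k(x,y)$ of $\mathcal{H}_k\subset L^2(M,L^k)$, that is, the Schwartz kernel of $\Pi_k$. On $\mathcal{H}_k$ the operator $T_k(f)$ has integral kernel $T_k(f)(x,y)=\int_M B_k(x,z)\,f(z)\,B_k(z,y)\,dV(z)$, and hence
\[
\big(T_k(f)T_k(g)\big)(x,y)=\int_M\int_M B_k(x,u)\,f(u)\,B_k(u,z)\,g(z)\,B_k(z,y)\,dV(u)\,dV(z).
\]
The plan is to insert the near-diagonal asymptotic expansion of the Bergman kernel (Tian--Zelditch--Catlin, in the form of Ma--Marinescu),
\[
B_k(x,y)=\Big(\frac{k}{2\pi}\Big)^n e^{k\psi(x,y)}\big(b_0(x,y)+k^{-1}b_1(x,y)+\cdots\big),
\]
where the phase $\psi$ satisfies $\mathrm{Re}\,\psi\le 0$ with equality exactly on the diagonal and $b_0|_{\mathrm{diag}}=1$ (consistent with Theorem \ref{thmbtdim}, since $\dim\mathcal{H}_k=\int_M B_k(x,x)\,dV$), and to evaluate the iterated integral by the Laplace (stationary-phase) method.

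Because $\mathrm{Re}\,\psi<0$ off the diagonal, the integrand concentrates as $k\to\infty$ on $u=z=x=y$; rescaling normal coordinates by $\sqrt{k}$ and performing the Gaussian integration yields an asymptotic series in $k^{-1}$ whose coefficients are bidifferential operators in $f$ and $g$ evaluated on the diagonal. The leading term reproduces the pointwise product, giving $T_k(f)T_k(g)=T_k(fg)+O(k^{-1})$, which is (b). For (c) I would subtract the analogous expansion of $T_k(g)T_k(f)$: the symmetric first-order contributions cancel, and the surviving antisymmetric part of the $k^{-1}$ coefficient is precisely the contraction of $df$ and $dg$ against the Kähler Poisson tensor, i.e. $\frac{1}{ik}T_k(\{f,g\})$, with remainder $O(k^{-2})$.

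The main obstacle is the stationary-phase analysis together with the passage from these pointwise kernel asymptotics to the operator-norm estimates in which the statement is phrased: one must justify the concentration estimate uniformly, control the off-diagonal remainder of the Bergman kernel (exponential decay away from the diagonal), and pin down the first subleading coefficient precisely enough to recognize the Poisson bracket. This last identification --- verifying that the antisymmetrization of the order-$k^{-1}$ bidifferential operator equals $\frac{1}{i}\{f,g\}$ --- is the delicate step; it can be carried out either by a direct normal-coordinate computation from $b_0$ and $b_1$, or, more structurally, by lifting to the Boutet de Monvel--Guillemin calculus of generalized Toeplitz operators on the unit circle bundle of $L^{-1}$, where the symbol product and its Poisson-bracket correction are intrinsic to the calculus.
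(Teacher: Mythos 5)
You should first note that the paper does not prove this statement at all: Theorem \ref{thmbt} is imported verbatim from Schlichenmaier \cite{berezin2} (the Bordemann--Meinrenken--Schlichenmaier results) and is used as a black box to verify that Berezin--Toeplitz quantization fits the axioms of $\mathscr{P}_{MR}$. So there is no in-paper argument to compare against; your proposal has to stand on its own. Your part (a) does: $T_k(f)s=\Pi_k(\mu_f s)$ for $s\in\mathcal{H}_k$, the projector is a contraction, and $\|\mu_f s\|_k\le\|f\|_\infty\|s\|_k$ because the Hermitian metric on $L^k$ gives $|fs|=|f|\,|s|$ pointwise; that is a complete and correct proof of (a).

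For (b) and (c), however, what you have written is a program rather than a proof, and the gap sits exactly where the theorem lives. The composition formula for the kernels and the Tian--Zelditch--Catlin expansion are the right ingredients, but you defer (and explicitly label as ``the main obstacle'') the two steps that constitute the entire content of the statement: (i) upgrading near-diagonal pointwise kernel asymptotics to \emph{operator-norm} estimates, which requires uniform control of the stationary-phase remainders together with the off-diagonal decay of $B_k$ (e.g.\ via a Schur-type test), and (ii) computing the order-$k^{-1}$ bidifferential operator explicitly enough to see that its antisymmetrization is $\tfrac{1}{i}\{f,g\}$ with an $O(k^{-2})$ (not merely $o(k^{-1})$) error. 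Neither is routine: the second point is precisely the Karabegov--Schlichenmaier identification of the Berezin--Toeplitz star product's first-order term, and in the cited source it is obtained not by direct Laplace expansion but by lifting to the Boutet de Monvel--Guillemin calculus of generalized Toeplitz operators on the unit circle bundle of $L^{-1}$, where the symbol calculus supplies both the norm control and the Poisson-bracket correction structurally. You mention this route only as an aside; either it or a genuinely carried-out normal-coordinate computation of $b_0,b_1$ would have to be supplied before (b) and (c) can be considered proved.
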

Thus, the Berezin-Toeplitz quantization map $T_k$ is the homomorphism of algebra between $C^\infty(M)$ and ${\rm End}(\mathcal{H}_k)$ when $k\to \infty$. Comparing $(\ref{pmreq1})$ and $(\ref{btc})$, to regard Berezin-Toeplitz quantization as the category of Definition $\ref{cmrdef}$, $\hbar$ should be defined by $\hbar=-1\slash k$. In addition, from Theorem $\ref{thmbtdim}$, we find that $\delta$ in Definition $\ref{cmrdef}$ is given by $\delta=n$ when $\dim M=2n$. Therefore, Berezin-Toeplitz quantization is an example of $\mathscr{P}_{MR}$.

\bigskip

\bigskip
$\mathscr{P}_{MR}(\mathcal{A}(M),Mat_{N_i})$ has only one quantization map $T_i$ by the definition. In such a case, $\chi(Mat_{N_i})=\hbar(T_i)$. Therefore the character $\chi $ is obtained by $\chi(Mat_{N_i})=\hbar(T_i):=\hbar(N_i)$. This $\chi$ gives the index category $J^\bullet_{MR}$ and the set of diagrams $F^\bullet_{MR}$ as follows.  In the case of $\mathscr{P}_{MR}$, $J^\bullet_{MR}$ has only one connected component $J_{MR}$, i.e., $J^\bullet_{MR}=\{J_{MR}\}$.  So, $J_{MR}$ is given as a directed set such that there exists a morphism $j\to i$ if and only if $i\le j$ from Lemma $\ref{cmrlem}$. For $F^\bullet_{MR}:=\{F_{MR}\}$, a diagram $F_{MR}$ of $J_{MR}$ is given by 
\begin{align*}
\begin{array}{rccc}
F_{MR}:&ob(J_{MR})&\to &ob(\mathscr{P}_{MR})\\
{}&\rotatebox{90}{$\in$}&{}&\rotatebox{90}{$\in$}\\
{}&i&\mapsto &Mat_{N_i}
\end{array}, \quad
\begin{array}{rccc}
F_{MR}:&J_{MR}(i,j)&\to &\mathscr{P}_{MR}(Mat_{N_i},Mat_{N_j})\\
{}&\rotatebox{90}{$\in$}&{}&\rotatebox{90}{$\in$}\\
{}&f^k_{ij}&\mapsto &T_{ij}^k
\end{array}.
\end{align*} 
Since $\{N_{i}\}$ is a strictly increasing sequence, $F_{MR}$ is a diagram which satisfies Definition $\ref{fdef}$ from a sequence in $J_{MR}$ 
\begin{align*}
\xymatrix{
\cdots \ar[r]& k\ar[r] \ar @/_12pt/[rr] & j\ar[r]& i \ar[r]& \cdots
}
\end{align*}
to
\begin{align*}
\xymatrix{
\cdots \ar[r]& Mat_{N_k}\ar[r]^{T_{jk}} \ar @/_18pt/[rr]_{T_{ik}} & Mat_{N_j}\ar[r]^{T_{ij}}& Mat_{N_i} \ar[r]& \cdots\\
{}&
}
\end{align*}
in $\mathscr{P}_{MR}$, where $k\ge j\ge i$.
\begin{theorem}\label{qmrthm}
For $(\mathscr{P}_{MR}$, $J_{MR}^\bullet$, $F^\bullet_{MR}$, $\chi)$  given as above, $\mathscr{Q}_{MR}:=\mathscr{Q}(\mathscr{P}_{MR}, J^\bullet_{MR}, F^\bullet_{MR}, \chi)$ is a quantization category of the matrix regularization with the limit $Mat_{\infty}$ of $(J_{MR}, F_{MR})$.
\end{theorem}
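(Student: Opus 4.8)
The plan is to verify the three conditions bundled into Definition \ref{qc}: that $\mathscr{P}_{MR}$ is a pre-$\mathscr{Q}$ category, that the diagram $(J_{MR},F_{MR})$ has a limit in $\mathscr{P}_{MR}$, and that this limit meets the classical-limit conditions (Q1) and (Q2). The first point is precisely Lemma \ref{cmrlem}, and the data $\chi$, $J_{MR}$, $F_{MR}$ were already fixed in the paragraphs preceding the theorem, so the substantive work is to identify the limit with $Mat_\infty$ and to convert the asymptotic estimates of Definition \ref{matrixreg} into exact identities at that limit.

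For the limit, I would realize $Mat_\infty$ as the inverse limit of the system $\{Mat_{N_k}\}$, whose canonical projections are the maps $T_{k\infty}\in\mathscr{P}_{MR}(Mat_\infty,Mat_{N_k})$ supplied by the relation $T_k=T_{k\infty}\circ T_\infty$ used in the proof of Lemma \ref{cmrlem}. With this description the cone condition $T_{ij}\circ T_{j\infty}=T_{i\infty}$ for $i\le j$ is automatic (and compatible with the block-diagonal form of Lemma \ref{lem1}), and it remains to check the universal property. Since $\mathcal{A}(M)$ is the initial object of $\mathscr{P}_{MR}$, Proposition \ref{limprop} already guarantees that a limit exists; to see that it equals $Mat_\infty$, I would observe that a finite factor $Mat_{N_m}$ cannot be the apex of a cone over the whole inverse system, because a morphism $Mat_{N_m}\to Mat_{N_k}$ is available only for $k\le m$. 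Hence the only cones to be tested have apex $\mathcal{A}(M)$, with legs $T_k$, or $Mat_\infty$ itself; the former factors through $Mat_\infty$ by $T_\infty$, uniquely because $\mathscr{P}_{MR}(\mathcal{A}(M),Mat_\infty)=\{T_\infty\}$, so $Mat_\infty$ is terminal among cones and therefore the limit of $(J_{MR},F_{MR})$.

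It then remains to establish (Q1) and (Q2) at $T:=T_\infty$. I would give $Mat_\infty$ the product and commutator obtained as operator-norm limits of those on the $Mat_{N_k}$, which are finite by condition \ref{qcon1}. Condition \ref{qcon2} then forces $T_\infty(fg)-T_\infty(f)T_\infty(g)=0$, namely (Q1). For (Q2), multiplying the estimate of condition \ref{qcon3} (equivalently $(\ref{pmreq1})$) by $\hbar(N_k)$ and using $\hbar(N_k)\to 0$ together with the boundedness of $T_k(\{f,g\})$ shows $\|[T_k(f),T_k(g)]_k\|\to 0$; setting $\hbar(T_\infty)=0$, consistently with $(\ref{lie})$ for the classical limit, gives $[T_\infty(f),T_\infty(g)]_\infty=0=i\hbar(T)T(\{f,g\})$, which is (Q2).

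The step I expect to be the main obstacle is the precise construction of $Mat_\infty$ together with its algebra structure as a genuine object of $R${\rm Mod}, and the verification that the norm limits of conditions \ref{qcon1}--\ref{qcon3} descend to well-defined, representative-independent operations there. In particular one must confirm that the limiting product and commutator are the ones induced by the universal cone, so that the projections $T_{k\infty}$ become algebra maps in the limit and (Q1)--(Q2) hold as exact identities rather than merely asymptotically; once this compatibility is secured, the remaining verifications reduce to the routine limits indicated above.
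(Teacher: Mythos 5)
Your proposal is correct and its skeleton matches the paper's proof (appeal to Lemma \ref{cmrlem} for the pre-$\mathscr{Q}$ structure, identify the limit of $(J_{MR},F_{MR})$ with $Mat_\infty$, then verify (Q1) and (Q2) there), but you take a genuinely more constructive route at the two points where the paper is terse. For the limit, the paper simply notes that the relation $T_i=T_{ij}\circ T_j$ forces the apex to be $\mathcal{A}(M)$ or $Mat_\infty$ and cites Lemma \ref{cmrlem} to settle on $Mat_\infty$ (adding the remark that if $Mat_\infty\simeq\mathcal{A}(M)$ both are limits); you instead realize $Mat_\infty$ as the inverse limit of $\{Mat_{N_k}\}$ and argue explicitly that no finite $Mat_{N_m}$ can be a cone apex and that $\mathscr{P}_{MR}(\mathcal{A}(M),Mat_\infty)=\{T_\infty\}$ gives uniqueness of the factorization — a sharper justification of the same conclusion. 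For (Q1) and (Q2), the paper gets them essentially by fiat, since Definition \ref{cmrdef} stipulates that $Mat_\infty$ ``contains matrices of the limit of matrix regularization'' in the sense of Definition \ref{matrixreg}; you instead derive them from the norm estimates \ref{qcon2} and \ref{qcon3}, with $\hbar(T_\infty)=0$ making both sides of (Q2) vanish. What your approach buys is an honest account of where the content lies: you correctly flag that the limiting product on $Mat_\infty$ must be shown to be well defined and compatible with the universal cone, which is exactly the issue the paper absorbs into the definition of $Mat_\infty$; what the paper's approach buys is brevity, at the cost of leaving that compatibility implicit. Neither version fully discharges that analytic step, so your proposal is at least as complete as the paper's argument.
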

\begin{proof}
From the Lemma $\ref{cmrlem}$, $\mathscr{P}_{MR}$ exists as a pre-$\mathscr{Q}_{MR}$ category. A limit $(M_\infty, \pi)$ of $F_{MR}$ is given as the following commutative diagram for any $i,j$. 
\begin{align*}
\xymatrix{
{}&X \ar@{.>}[d]^T \ar[ld] \ar[rd]\\
Mat_{N_i}&M_\infty \ar[l]^{~~\pi_i} \ar[r]_{\pi_j~~}&Mat_{N_j} \ar @/^17pt /[ll]^{T_{ij}}
}
\end{align*}
From the condition $(\ref{eq1})$ of morphisms, $M_\infty$ is $\mathcal{A}(M)$ or $Mat_\infty$. Lemma $\ref{cmrlem}$ shows that $M_\infty=Mat_\infty$, i.e., $T=T_\infty$, $\pi_i=T_{i\infty}$ and $\pi_j=T_{j\infty}$, and then $\pi_i\circ T=T_i$ and $T_{ij}\circ \pi_j=\pi_i$ for all $i,j$ with $X=\mathcal{A}(M)$. From the definition of $Mat_\infty$, the quantization conditions $(Q1)$ and $(Q2)$ in Definition $\ref{qc}$ are satisfied on $M_\infty$. When $Mat_\infty \simeq \mathcal{A}(M)$, $\mathcal{A}(M)$ is also the limit, and the quantization conditions $(Q1)$ and $(Q2)$ in Definition $\ref{qc}$ are trivially satisfied on $\mathcal{A}(M)$. Thus $\mathscr{Q}_{MR}$ is a quantization category of the matrix regularization.
\end{proof}
\begin{example}
The Berezin-Toeplitz quantization in Example $\ref{btexample}$ is an example of $\mathscr{Q}_{MR}$.
\end{example}
\bigskip
This quantization category $\mathscr{Q}_{MR}$ can be further minimized.
\begin{corollary}\label{qmrcor}
For the quantization category $\mathscr{Q}_{MR}$, a restriction $\left.\mathscr{Q}_{MR}\vphantom{\big|} \right|_{Mat_{N_i}}$ is given as follows:
\begin{align*}
ob\left(\left.\mathscr{Q}_{MR}\vphantom{\big|} \right|_{Mat_{N_i}}\right):=\{\mathcal{A}(M),Mat_{N_i}, Mat_{\infty }\},
\end{align*}
where $i$ is fixed. Morphisms are restricted to $T_i,T_\infty, T_{i\infty}, id_{\mathcal{A}(M)}, id_{Mat_{N_i}}, id_{Mat_{\infty}}$, and $T^{-1}_\infty$ if $T_\infty$ is an isomorphism. Then $\left.\mathscr{Q}_{MR}\vphantom{\big|} \right|_{Mat_{N_i}}$ is a quantization category of the matrix regularization with the limit $Mat_{\infty}$ of $(J_{MR}|_i, F_{MR}|_i)$. Here, $J_{MR}|_i$ is given by $ob(J)=\{i,\infty\}$ and the only morphism of $J_{MR}$ is $i\to \infty$, and $F_{MR}|_i$ is given by $F(i)=Mat_{N_i}$, $F(\infty)=Mat_{\infty}$, $F(i\to \infty)=T_{i\infty}$.
\end{corollary}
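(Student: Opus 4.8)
The plan is to obtain the statement as a direct restriction of Theorem~\ref{qmrthm}: every object, morphism, and verification already carried out for the full $\mathscr{Q}_{MR}$ survives when we discard all $Mat_{N_k}$ with $k\neq i$, so the only genuine work is (a) to check that the \emph{non-full} collection of retained morphisms is closed under composition, and (b) to recompute the limit for the now two-object index category and confirm it is still $Mat_\infty$. I would not re-derive the pre-$\mathscr{Q}$ axioms of Definition~\ref{preq} from scratch: conditions $1$--$5$ hold for $\mathscr{P}_{MR}$ by Lemma~\ref{cmrlem}, and each of them refers only to the retained objects $\mathcal{A}(M)$, $Mat_{N_i}$, $Mat_\infty$ and to the retained maps $T_i$, $T_\infty$, $T_{i\infty}$ (together with the identities and, if $T_\infty$ is an isomorphism, $T_\infty^{-1}$). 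In particular the bracket identity $(\ref{pmreq1})$ for $T_i$ and $T_\infty$ is unchanged, with $\hbar(T_i)=\hbar(N_i)$ and $\hbar(T_\infty)=0$ as before.

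For closure under composition I would check the finitely many cases directly, mirroring the second half of the proof of Lemma~\ref{cmrlem}. The defining relation $(\ref{eq1})$ gives $T_{i\infty}\circ T_\infty = T_i$, and composition with identities is trivial. If $T_\infty$ is a linear isomorphism, then $T_\infty^{-1}$ is the only additional morphism and the remaining composites still close up: $T_\infty\circ T_\infty^{-1}=id_{Mat_\infty}$ and $T_\infty^{-1}\circ T_\infty=id_{\mathcal{A}(M)}$ by condition~$5$ of Definition~\ref{preq}, while $T_i\circ T_\infty^{-1}=T_{i\infty}\circ T_\infty\circ T_\infty^{-1}=T_{i\infty}$. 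Since the only morphisms with domain $\mathcal{A}(M)$ are $T_i$, $T_\infty$, $id_{\mathcal{A}(M)}$ and the only morphisms with codomain $Mat_\infty$ are $T_\infty$, $id_{Mat_\infty}$, this exhausts every composable pair, so $\left.\mathscr{Q}_{MR}\right|_{Mat_{N_i}}$ is a bona fide subcategory and hence a pre-$\mathscr{Q}$ category.

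It then remains to verify that $Mat_\infty$ is the limit of $(J_{MR}|_i,F_{MR}|_i)$ and that (Q1), (Q2) hold there. The diagram $F_{MR}|_i$ has the two objects $Mat_{N_i}$ and $Mat_\infty$ joined by the single non-identity map $T_{i\infty}\colon Mat_\infty\to Mat_{N_i}$; a cone over it is a pair of maps $b\colon X\to Mat_\infty$ and $a\colon X\to Mat_{N_i}$ with $T_{i\infty}\circ b=a$, so $a$ is determined by $b$ and the terminal cone is $Mat_\infty$ itself with projections $id_{Mat_\infty}$ and $T_{i\infty}$. I would then confirm the universal property exactly as in the proof of Theorem~\ref{qmrthm}: the cone coming from $\mathcal{A}(M)$ is the pair $(T_\infty,T_i)$, which commutes by $(\ref{eq1})$, and because $T_\infty$ is the unique retained morphism $\mathcal{A}(M)\to Mat_\infty$ the mediating map is forced to be $T_\infty$; thus $\mathcal{A}(M)$ factors uniquely through $Mat_\infty$, the degenerate possibility $Mat_\infty\simeq\mathcal{A}(M)$ being dispatched as there. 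Finally, since $Mat_\infty$ and $T_\infty$ are literally the same object and map as in the full category, conditions (Q1) and (Q2) at the limit coincide with conditions \ref{qcon2} and \ref{qcon3} of Definition~\ref{matrixreg} in the limit $k\to\infty$, which is how $Mat_\infty$ was defined, and are inherited verbatim.

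The only step demanding care is (a): unlike a full subcategory, the retained morphisms form a proper subset of $\mathscr{Q}_{MR}(-,-)$, so closure is not automatic and must be checked case by case. The potential trouble spot is the composite $T_i\circ T_\infty^{-1}$ when $T_\infty$ is invertible, which could a priori leave the retained set; the identity $T_i\circ T_\infty^{-1}=T_{i\infty}$ derived from $(\ref{eq1})$ is precisely what rescues it, exactly as in Lemma~\ref{cmrlem}. Everything else reduces to bookkeeping on top of Theorem~\ref{qmrthm}, so no new analytic estimate on the matrix regularization is required.
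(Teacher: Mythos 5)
Your proposal is correct and follows essentially the same route as the paper: the paper's own proof merely recalls that the only possible non-identity inverse is $T_\infty^{-1}$ (finite versus infinite dimension), draws the restricted diagram, and declares the rest "derived in the same way as Theorem~\ref{qmrthm}". You have simply made explicit the closure-under-composition check (in particular $T_i\circ T_\infty^{-1}=T_{i\infty}$) and the two-object limit computation that the paper leaves implicit.
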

\begin{proof}
Recall that if $QM_{DQ}^{-1}\neq \emptyset$, then there only exists $T^{-1}=T^{-1}_\infty \in QM_{MR}^{-1}$, because the dimension of $Mat_{N_i}$ is finite, but the dimension of $\mathcal{A}(M)$ is infinite, as we saw in the proof for Lemma \ref{cmrlem}. Since the diagram of $\left.\mathscr{Q}_{MR}\vphantom{\big|} \right|_{Mat_{N_i}}$ is given by
\begin{align}
\xymatrix{
\mathcal{A}(M) \ar@{.>}[d]^T \ar[rd]\\
M_\infty \ar[r]_{\pi_i}&Mat_{N_i} ,
}\label{diagrammr}
\end{align}
this corollary is derived in the same way of Theorem $\ref{qmrthm}$.
\end{proof}
\section{Deformation Quantization}\label{secdq}
In the following, we consider the strict deformation quantization. In other words, we consider not $(\mathcal{F},*)$ but $(\mathcal{A}_0(M),\mathcal{Q}^\hbar)$. (These examples are shown in \cite{rieffel1,rieffel2}.)
\begin{definition}\label{cdqdef}
Let $(\mathcal{A}_0(M),\mathcal{Q}^\hbar)$ be the strict deformation quantization of a Poisson manifold $M~($see Definitions $\ref{strict1}$ and $\ref{strict2})$. $\mathcal{C}^\hbar$ and $\mathcal{C}^\hbar_{\mathbb{R}}$ are those in Definition $\ref{strict1}$. A subcategory $\mathscr{P}_{DQ}$ of $R${\rm Mod} is defined as follows: 
\begin{align*}
ob(\mathscr{P}_{DQ}):=\{\mathcal{A}_0(M), \mathcal{C}^\hbar \mid \forall\hbar \in I \},
\end{align*}
where $I$ is a subset of real numbers that contains $0$. The morphisms of $\mathscr{P}_{DQ}$ are defined by quantization maps $\mathcal{Q}^\hbar:\mathcal{A}_0(M)\to \mathcal{C}^\hbar_{\mathbb{R}}\subset \mathcal{C}^\hbar$ for all $\hbar \in I$. In addition, if and only if $\hbar\ge \hbar^\prime$, $T_{\hbar\hbar^\prime}\in \mathscr{P}_{DQ}(\mathcal{C}^{\hbar^\prime},\mathcal{C}^\hbar)$ satisfying
\begin{align*}
\mathcal{Q}^\hbar=T_{\hbar\hbar^\prime}\circ \mathcal{Q}^{\hbar^\prime}
\end{align*}
exist. Let us define the set of morphisms $Mor(\mathscr{P}_{DQ})$ by these quantization maps. We denote the set of quantization maps and identity maps by $QM_{DQ}$, and the set of inverse maps of linear isomorphisms except for identities in $QM_{DQ}$ by $QM^{-1}_{DQ}$. That is,
\begin{align*}
QM_{DQ}:&=\{Q_\hbar, T_{\hbar \hbar^\prime}, id_{\mathcal{A}_{0}(M)}, id_{\mathcal{C}^\hbar}\mid \hbar ,\hbar^\prime \in I \},\\
QM^{-1}_{DQ}:&=\{f^{-1}\mid f\in QM_{DQ}\mbox{ is an isomorphism and not an identity}\},
\end{align*}
such that $T_{\hbar \hbar^\prime}=id$ when $\hbar=\hbar^\prime$. The set of other morphisms $Comp_{DQ}$ is defined by composition maps such that $\mathscr{P}_{DQ}$ is a category and satisfies the condition $5$ in Definition \ref{preq}. Then the set of morphisms $Mor(\mathscr{P}_{DQ})$ is defined by
\begin{align*}
Mor(\mathscr{P}_{DQ}):=QM_{DQ}\cup QM^{-1}_{DQ} \cup Comp_{DQ}.
\end{align*}
For all morphisms $\mathcal{Q}^\hbar\in \mathscr{P}_{DQ}(\mathcal{A}_0(M),\mathcal{C}^\hbar)$, the codomain is equipped with the Lie bracket $[~,~]_\hbar$ as the commutator such that 
\begin{align*}
[\mathcal{Q}^\hbar(f),\mathcal{Q}^\hbar(g)]_\hbar= \sqrt{-1}\hbar \mathcal{Q}^\hbar(\{f,g\})+O(\hbar ^2)\quad (f,g\in \mathcal{A}_0(M)).
\end{align*}
\end{definition}
\bigskip
From the definition of $\mathscr{P}_{DQ}$, the following lemma is derived immediately.
\begin{lemma}\label{cdqlem}
$\mathscr{P}_{DQ}$ is a pre-$\mathscr{Q}$ category.
\end{lemma}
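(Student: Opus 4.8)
The plan is to verify directly that $\mathscr{P}_{DQ}$ satisfies each of the five conditions of Definition \ref{preq}, with the quantization parameter assignment $\hbar(\mathcal{Q}^\hbar):=\hbar$ (and $\hbar(id_{\mathcal{A}_0(M)})=0$). Conditions $1$ through $4$ follow almost verbatim from the construction in Definition \ref{cdqdef}: condition $1$ holds because $\mathcal{A}_0(M)\in ob(\mathscr{P}_{DQ})$ by definition; condition $2$ holds because each $\mathcal{C}^\hbar$ is the codomain of the quantization map $\mathcal{Q}^\hbar$, which exists for all $\hbar\in I$; condition $3$ holds because each $\mathcal{C}^\hbar$ is a $C^*$-algebra, hence an associative algebra, and is equipped with the commutator Lie bracket $[~,~]_\hbar$; and condition $4$ is precisely the Lie-bracket expansion
\begin{align*}
[\mathcal{Q}^\hbar(f),\mathcal{Q}^\hbar(g)]_\hbar=\sqrt{-1}\hbar\,\mathcal{Q}^\hbar(\{f,g\})+O(\hbar^2)
\end{align*}
stipulated at the end of Definition \ref{cdqdef}, which matches \eqref{lie} with $\epsilon=1$ and remainder absorbed into $\tilde{O}(\hbar^{1+\epsilon})$.

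The step requiring genuine attention is the closure of $\mathscr{P}_{DQ}$ under composition together with condition $5$, the existence of two-sided inverses for linear isomorphisms inside the category. Here I would lean on the fact that the morphism set already includes $Comp_{DQ}$, defined in Definition \ref{cdqdef} exactly so that composites are present and condition $5$ is enforced by fiat; thus the main obligation is to check that this stipulation is consistent, i.e.\ that no composition forces a morphism violating the defining relations $\mathcal{Q}^\hbar=T_{\hbar\hbar'}\circ\mathcal{Q}^{\hbar'}$ for $\hbar\ge\hbar'$. This is the analogue of the consistency argument carried out for $\mathscr{P}_{MR}$ in Lemma \ref{cmrlem}, and I expect it to be the real content of the proof: one verifies that the triangle relations among the $T_{\hbar\hbar'}$ and $\mathcal{Q}^\hbar$ compose associatively, so that declaring the needed composites to lie in $Mor(\mathscr{P}_{DQ})$ introduces no contradiction. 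Because $\mathcal{Q}^0$ is the inclusion and each $\mathcal{Q}^\hbar$ is injective (Definition \ref{strict2}), any isomorphism in the category can be inverted within $R\mathrm{Mod}$, and that inverse is adjoined through $QM^{-1}_{DQ}$, securing condition $5$.

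Since the excerpt states that the lemma follows \emph{immediately} from the definition, the intended proof is short: the heavy lifting has been front-loaded into Definition \ref{cdqdef}, where $Comp_{DQ}$ and $QM^{-1}_{DQ}$ are introduced precisely to make $\mathscr{P}_{DQ}$ a category satisfying condition $5$. Accordingly I would present the argument as a direct checklist against Definition \ref{preq}, remarking that conditions $1$--$4$ are read off from the construction and that the categorical conditions (closure under composition, identities, and invertibility) hold by the very definition of $Mor(\mathscr{P}_{DQ})=QM_{DQ}\cup QM^{-1}_{DQ}\cup Comp_{DQ}$. The only place where one might otherwise stumble is mistaking $\mathscr{P}_{DQ}$ for a more rigidly generated category; the inclusion of $Comp_{DQ}$ removes that difficulty, so no separate consistency computation of the kind needed for $\mathscr{P}_{MR}$ is required here.
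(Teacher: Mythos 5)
Your proposal is correct and matches the paper's treatment: the paper gives no explicit proof, stating only that the lemma ``is derived immediately'' from Definition \ref{cdqdef}, and your checklist against Definition \ref{preq} (with $\hbar(\mathcal{Q}^\hbar):=\hbar$ and the observation that $Comp_{DQ}$ and $QM^{-1}_{DQ}$ are built into $Mor(\mathscr{P}_{DQ})$ precisely to enforce closure and condition $5$) is exactly the intended argument, just spelled out. Your closing remark that no $\mathscr{P}_{MR}$-style consistency computation is needed here is also consistent with the paper, which reserves that work for Lemma \ref{cmrlem} alone.
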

%
%
\par
The character $\chi$ is obtained by $\chi(\mathcal{C}^\hbar)=|\hbar|$. This character $\chi$ gives the index category $J^\bullet_{DQ}$ and the set of diagrams $F^\bullet_{DQ}$ as follows. For $J^\bullet_{DQ}:=\{J_{DQ}\}$, $J_{DQ}$ is given as a directed set. For $F^\bullet_{DQ}:=\{F_{DQ}\}$, a diagram $F_{DQ}$ of $J_{DQ}$ is surjective from $ob(J_{DQ})\to ob(\mathscr{P}_{DQ})\backslash \{\mathcal{A}_0(M)\}$. Since $\chi(\mathcal{C}^\hbar)=|\hbar|$ for all $\hbar$, if and only if $|\hbar |\ge |\hbar^\prime|$ there exist morphisms $\hbar^\prime\in ob(J_{DQ}) \mapsto \hbar\in ob(J_{DQ})$ such that
\begin{align*}
\begin{array}{rccc}
F_{DQ}:&ob(J_{DQ})&\to &ob(\mathscr{P}_{DQ})\\
{}&\rotatebox{90}{$\in$}&{}&\rotatebox{90}{$\in$}\\
{}&\hbar&\mapsto &\mathcal{C}^\hbar
\end{array}, \quad
\begin{array}{rccc}
F_{DQ}:&J_{DQ}(\hbar^\prime,\hbar )&\to &\mathscr{P}_{DQ}(\mathcal{C}^{\hbar^\prime},\mathcal{C}^\hbar)\\
{}&\rotatebox{90}{$\in$}&{}&\rotatebox{90}{$\in$}\\
{}&f^k_{\hbar \hbar^\prime}&\mapsto &T_{\hbar \hbar^\prime}^k
\end{array}.
\end{align*}
We will define $\tilde{\mathscr{P}}_{DQ}$ for the case that there is an only isomorphism $(\mathcal{Q}^0)^{-1}\in QM_{DQ}^{-1}$ or there is no isomorphism $QM^{-1}_{DQ}=\emptyset$.
\begin{definition}
For the pre-$\mathscr{Q}$ category of the strict deformation quantization $\mathscr{P}_{DQ}$, $\tilde{\mathscr{P}}_{DQ}$ is defined as a $\mathscr{P}_{DQ}$ satisfying
\begin{align*}
QM^{-1}_{DQ}=\emptyset \quad \mbox{or}\quad QM^{-1}_{DQ}=\{(\mathcal{Q}^0)^{-1}\}.
\end{align*}
\end{definition}
\begin{theorem}\label{qdqthm}
For $(\tilde{\mathscr{P}}_{DQ},J^\bullet_{DQ},F^\bullet_{DQ},\chi)$ given as above, $\tilde{\mathscr{Q}}_{DQ}:=\mathscr{Q}(\tilde{\mathscr{P}}_{DQ},J^\bullet_{DQ},F^\bullet_{DQ},\chi)$ is a quantization category $\mathscr{Q}_{DQ}$ of the strict deformation quantization with a limit $\mathcal{C}^0$ of $(J_{DQ},F_{DQ})$.
\end{theorem}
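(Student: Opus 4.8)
The plan is to mirror the proof of Theorem~\ref{qmrthm}, since the data $(\tilde{\mathscr{P}}_{DQ},J^\bullet_{DQ},F^\bullet_{DQ},\chi)$ is organized exactly as in the matrix-regularization case. Lemma~\ref{cdqlem} already records that $\tilde{\mathscr{P}}_{DQ}$ is a pre-$\mathscr{Q}$ category, so the only tasks left are to exhibit the limit of $(J_{DQ},F_{DQ})$ and to verify the quantization conditions (Q1) and (Q2) at that limit.

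For the limit I would set up the cone
\begin{align*}
\xymatrix{
{}&\mathcal{A}_0(M) \ar@{.>}[d]^{\mathcal{Q}^0} \ar[ld]_{\mathcal{Q}^\hbar} \ar[rd]^{\mathcal{Q}^{\hbar^\prime}}\\
\mathcal{C}^\hbar&\mathcal{C}^0\ar[l]^{T_{\hbar 0}} \ar[r]_{T_{\hbar^\prime 0}}&\mathcal{C}^{\hbar^\prime}\ar @/^19pt /[ll]^{T_{\hbar\hbar^\prime}}
}
\end{align*}
and argue as follows. Since $\chi(\mathcal{C}^0)=|0|=0$ is the strict minimum of $\chi$ over $ob(J_{DQ})$, the index $0$ is a source in the directed set $J_{DQ}$, so for every $\hbar\in I$ there is a morphism $0\to\hbar$ and hence a projection $\pi_\hbar:=T_{\hbar 0}\colon\mathcal{C}^0\to\mathcal{C}^\hbar$; the consistency $T_{\hbar\hbar^\prime}\circ T_{\hbar^\prime 0}=T_{\hbar 0}$ built into $Comp_{DQ}$ makes these into a cone. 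The only objects of $\tilde{\mathscr{P}}_{DQ}$ carrying a morphism to every $\mathcal{C}^\hbar$ are $\mathcal{A}_0(M)$, through $\mathcal{Q}^\hbar$, and $\mathcal{C}^0$, through $T_{\hbar 0}$, so these are the only candidates for the limit, just as in the matrix case. Because $\mathcal{Q}^\hbar=T_{\hbar 0}\circ\mathcal{Q}^0$, the $\mathcal{A}_0(M)$-cone factors through the $\mathcal{C}^0$-cone via $\mathcal{Q}^0$; moreover any cone $(X,\psi_\hbar)$ has its mediating map into $\mathcal{C}^0$ forced to equal $\psi_0$ since $\pi_0=T_{00}=id_{\mathcal{C}^0}$, which yields uniqueness. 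Hence $M_\infty=\mathcal{C}^0$ and $T_\infty=\mathcal{Q}^0$.

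To finish, I would verify (Q1) and (Q2) for $T=T_\infty=\mathcal{Q}^0$. As $\mathcal{C}^0$ is an abelian $C^*$-algebra and $\mathcal{Q}^0$ is the inclusion of $\mathcal{A}_0(M)$ as a subalgebra, the product of $\mathcal{C}^0$ restricts to the commutative product of $\mathcal{A}_0(M)$, so $\mathcal{Q}^0(fg)=\mathcal{Q}^0(f)\mathcal{Q}^0(g)$, which is (Q1). For (Q2), commutativity of $\mathcal{C}^0$ gives $[\mathcal{Q}^0(f),\mathcal{Q}^0(g)]_\infty=0$, while $\chi(\mathcal{C}^0)=0$ forces $\hbar(\mathcal{Q}^0)=0$, so $i\hbar(\mathcal{Q}^0)\mathcal{Q}^0(\{f,g\})=0$ as well and both terms vanish. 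In the degenerate situation permitted by $\tilde{\mathscr{P}}_{DQ}$, namely when $\mathcal{Q}^0$ is a linear isomorphism and $(\mathcal{Q}^0)^{-1}\in QM^{-1}_{DQ}$, the object $\mathcal{A}_0(M)$ is isomorphic to $\mathcal{C}^0$ and is therefore also a limit; there (Q1) and (Q2) hold trivially for $T=id_{\mathcal{A}_0(M)}$, because $\mathcal{A}_0(M)$ is commutative and $\hbar(id_{\mathcal{A}_0(M)})=0$.

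The step I expect to demand the most care is pinning down $\mathcal{C}^0$ as the genuine categorical limit rather than merely a convenient cone apex: one must exclude competing limits and secure uniqueness of the mediating morphism. This is exactly the role of passing from $\mathscr{P}_{DQ}$ to $\tilde{\mathscr{P}}_{DQ}$, for restricting $QM^{-1}_{DQ}$ to $\emptyset$ or $\{(\mathcal{Q}^0)^{-1}\}$ forbids spurious linear isomorphisms among the $\mathcal{C}^\hbar$ that could otherwise promote some $\mathcal{C}^{\hbar^\prime}$ with $\hbar^\prime\neq 0$ to a rival limit. The only admissible isomorphism is then the classical inclusion $\mathcal{Q}^0$, and whether it is present or absent is precisely the dichotomy between the two candidate limits treated above.
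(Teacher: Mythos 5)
Your proposal is correct and follows essentially the same route as the paper's proof, which simply invokes Lemma~\ref{cdqlem}, exhibits the same cone diagram with apex $\mathcal{C}^0$, and declares (Q1)--(Q2) trivially satisfied there. You merely fill in the details the paper leaves implicit — the uniqueness of the mediating morphism via $\pi_0=T_{00}=id$, the explicit verification of (Q1) and (Q2) from commutativity of $\mathcal{C}^0$ and $\hbar(\mathcal{Q}^0)=0$, and the degenerate case where $\mathcal{Q}^0$ is an isomorphism — all of which are consistent with the paper's intent.
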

\begin{proof}
The proof is given in a completely parallel manner as the proof of Theorem $\ref{qmrthm}$. From Lemma $\ref{cdqlem}$, $\tilde{\mathscr{P}}_{DQ}$ is a pre-$\mathscr{Q}$ category. We consider a limit $M_\infty$ of $F_{DQ}$ as the following commutative diagram.
\begin{align*}
\xymatrix{
{}&\mathcal{A}_0(M) \ar@{.>}[d]^{\mathcal{Q}^0} \ar[ld]_{\mathcal{Q}^\hbar} \ar[rd]^{\mathcal{Q}^{\hbar^\prime}}\\
\mathcal{C}^\hbar&M_\infty=\mathcal{C}^0\ar[l]^{T_{\hbar 0}~~} \ar[r]_{~~T_{\hbar^\prime 0}}&\mathcal{C}^{\hbar^\prime}\ar @/^19pt /[ll]^{T_{\hbar\hbar^\prime}}
}
\end{align*}
Thus, the limit of $F_{DQ}$ is $\mathcal{C}^0$. Trivially, the quantization conditions are satisfied on $\mathcal{C}^0$, so $\tilde{\mathscr{Q}}_{DQ}$ is a quantization category of the strict deformation quantization.
\end{proof}
The quantization category $\tilde{\mathscr{Q}}_{DQ}$ can be further minimized in a manner similar to $\mathscr{Q}_{MR}$.
\begin{corollary}\label{qdqcor}
For the quantization category $\tilde{\mathscr{Q}}_{DQ}$, a restriction $\left.\tilde{\mathscr{Q}}_{DQ}\vphantom{\big|} \right|_{\mathcal{C}^\hbar}$ is given as follows:
\begin{align*}
ob\left(\left.\tilde{\mathscr{Q}}_{DQ}\vphantom{\big|} \right|_{\mathcal{C}^\hbar}\right):=\{\mathcal{C}^0,\mathcal{C}^\hbar, \mathcal{A}_0(M)\},
\end{align*}
where $\hbar$ is fixed. Similarly, the morphisms are restricted to $\mathcal{Q}^0, \mathcal{Q}^\hbar, T_{\hbar 0}$. 
\begin{align}
\xymatrix{
\mathcal{A}_0(M) \ar@{.>}[d]^{\mathcal{Q}^0} \ar[rd]^{\mathcal{Q}^{\hbar}}  \\
M_\infty=\mathcal{C}^0 \ar[r]_{~~T_{\hbar 0}}&\mathcal{C}^{\hbar}
}\label{diagramdq}
\end{align}
Then $\left.\tilde{\mathscr{Q}}_{DQ}\vphantom{\big|} \right|_{\mathcal{C}^\hbar}$ is a quantization category of the strict deformation quantization with the limit $\mathcal{C}^0$. 
\end{corollary}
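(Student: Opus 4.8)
The plan is to run the same reduction that carries Theorem \ref{qdqthm} down to a single fixed $\hbar$, exactly as Corollary \ref{qmrcor} is obtained from Theorem \ref{qmrthm}. First I would confirm that the retained data form a subcategory of $R{\rm Mod}$. The surviving morphisms are $\mathcal{Q}^0$, $\mathcal{Q}^\hbar$, $T_{\hbar 0}$, the three identities, and---only when $\mathcal{Q}^0$ is a linear isomorphism---the inverse $(\mathcal{Q}^0)^{-1}$; the latter is exactly the freedom built into $\tilde{\mathscr{P}}_{DQ}$, whose definition forces $QM^{-1}_{DQ}$ to be $\emptyset$ or $\{(\mathcal{Q}^0)^{-1}\}$. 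Closure under composition is immediate: among the quantization maps the only nontrivial composite is $T_{\hbar 0}\circ\mathcal{Q}^0=\mathcal{Q}^\hbar$, which is already present, and in the isomorphic case $\mathcal{Q}^\hbar\circ(\mathcal{Q}^0)^{-1}=T_{\hbar 0}$, $(\mathcal{Q}^0)^{-1}\circ\mathcal{Q}^0=id_{\mathcal{A}_0(M)}$ and $\mathcal{Q}^0\circ(\mathcal{Q}^0)^{-1}=id_{\mathcal{C}^0}$ produce nothing outside the list.

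Next I would check the five axioms of Definition \ref{preq} for the restriction. They are inherited from $\tilde{\mathscr{P}}_{DQ}$ by Lemma \ref{cdqlem}: $\mathcal{A}_0(M)$ is kept as an object, each surviving codomain retains its associative product and commutator bracket, the asymptotic relation $(\ref{lie})$ for $\mathcal{Q}^\hbar$ is unchanged, and axiom $5$ holds precisely because $(\mathcal{Q}^0)^{-1}$ has been carried along whenever $\mathcal{Q}^0$ is invertible. Restricting the index data then gives $J_{DQ}|_\hbar$ with $ob(J_{DQ}|_\hbar)=\{0,\hbar\}$, whose single arrow $0\to\hbar$ is forced by $\chi(\mathcal{C}^0)=0\le|\hbar|=\chi(\mathcal{C}^\hbar)$ via condition $({\rm ii})$ of Definition \ref{jdef}, and $F_{DQ}|_\hbar$ sends that arrow to $T_{\hbar 0}$.

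The heart of the argument is to confirm that the limit of $(J_{DQ}|_\hbar,F_{DQ}|_\hbar)$ is still $\mathcal{C}^0$. I would establish this directly from the shape of the diagram: it is the single morphism $T_{\hbar 0}\colon\mathcal{C}^0\to\mathcal{C}^\hbar$, and the limit of a one-arrow diagram is its source, because the cone $(\mathcal{C}^0;\,id_{\mathcal{C}^0},\,T_{\hbar 0})$ is terminal---any cone $(X;\,g,h)$ with $T_{\hbar 0}\circ g=h$ factors through it uniquely via $g$. Taking $X=\mathcal{A}_0(M)$, $g=\mathcal{Q}^0$, $h=\mathcal{Q}^\hbar$ reproduces the commutative triangle $(\ref{diagramdq})$. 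The quantization conditions at the limit are then immediate, since $\mathcal{Q}^0$ is the inclusion into the abelian $C^*$-algebra $\mathcal{C}^0$: one gets $\mathcal{Q}^0(fg)=\mathcal{Q}^0(f)\mathcal{Q}^0(g)$ for $({\rm Q1})$, and $[\mathcal{Q}^0(f),\mathcal{Q}^0(g)]_0=0=i\hbar(\mathcal{Q}^0)\mathcal{Q}^0(\{f,g\})$ for $({\rm Q2})$, using $\hbar(\mathcal{Q}^0)=0$.

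I expect the only step that requires genuine care to be the limit claim, because discarding objects from a diagram can in principle move the limit; one must verify that the universal property survives over the shrunken diagram and that $\mathcal{C}^0$, not $\mathcal{A}_0(M)$, is selected as the apex. This is exactly what the one-arrow computation above settles, and it is the direct analogue of the reduction used in passing from Theorem \ref{qmrthm} to Corollary \ref{qmrcor}.
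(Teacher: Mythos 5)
Your proposal is correct and follows essentially the same route as the paper, which leaves this corollary without an explicit proof and simply points to the analogous reduction of Theorem \ref{qmrthm} to Corollary \ref{qmrcor} via the commutative triangle $(\ref{diagramdq})$. You merely make explicit the details the paper leaves implicit (closure of the restricted morphism set, inheritance of the pre-$\mathscr{Q}$ axioms, the one-arrow limit computation selecting $\mathcal{C}^0$, and the verification of $({\rm Q1})$--$({\rm Q2})$ at $\mathcal{C}^0$), all of which are consistent with the paper's treatment.
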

We showed that at least $\tilde{\mathscr{Q}}_{DQ}$ is a quantization category. However, the following case is not a quantization category.
\begin{theorem}\label{theoQM-1}
For a pre-$\mathscr{Q}$ category $\mathscr{P}_{DQ}$, if there exists an isomorphism $(\mathcal{Q}^{\hbar})^{-1}\in QM_{DQ}^{-1}$ for $\hbar\neq 0$, and $\mathcal{Q}^0\circ (\mathcal{Q}^{\hbar})^{-1}\circ T_{\hbar 0}\neq id_{\mathcal{C}^0}$, then $\mathscr{Q}_{DQ}:=\mathscr{Q}(\mathscr{P}_{DQ},J^\bullet_{DQ},F^\bullet_{DQ},\chi)$ is not a quantization category.
\end{theorem}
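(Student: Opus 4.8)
The plan is to argue by contradiction: assume $\mathscr{Q}_{DQ}$ is a quantization category, so that a limit of $(J_{DQ},F_{DQ})$ exists and the conditions (Q1), (Q2) of Definition \ref{qc} hold at it, and then extract a contradiction from the extra isomorphism $(\mathcal{Q}^\hbar)^{-1}$. First I would package the hypothesis into a single endomorphism of the classical object: set $w:=(\mathcal{Q}^\hbar)^{-1}\circ T_{\hbar 0}:\mathcal{C}^0\to\mathcal{A}_0(M)$ and $e:=\mathcal{Q}^0\circ w=\mathcal{Q}^0\circ(\mathcal{Q}^\hbar)^{-1}\circ T_{\hbar 0}\in\mathscr{P}_{DQ}(\mathcal{C}^0,\mathcal{C}^0)$. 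Using the defining relation $T_{\hbar 0}\circ\mathcal{Q}^0=\mathcal{Q}^\hbar$ one gets $w\circ\mathcal{Q}^0=id_{\mathcal{A}_0(M)}$, hence $e^2=e$; and by hypothesis $e\neq id_{\mathcal{C}^0}$. Thus $e$ is a nontrivial idempotent on $\mathcal{C}^0$ whose image is the subalgebra $\mathcal{Q}^0(\mathcal{A}_0(M))\subsetneq\mathcal{C}^0$.

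Next I would feed $e$ into the index category. Since $e\in\mathscr{P}_{DQ}(\mathcal{C}^0,\mathcal{C}^0)$ and $\chi(\mathcal{C}^0)\le\chi(\mathcal{C}^0)$, condition $({\rm ii})$ of Definition \ref{jdef} forces a morphism of $J_{DQ}$ at the vertex $0$ that is distinct from the identity and is sent by $F_{DQ}$ to $e$. Consequently every limit cone $(M_\infty,\pi)$ must be compatible with this self-loop, i.e. $e\circ\pi_0=\pi_0$. This is the decisive point: it rules out the commutative object $\mathcal{C}^0$ as the limit, because the cone $\mathcal{C}^0$ with $\pi_0=id_{\mathcal{C}^0}$ (the limit produced in Theorem \ref{qdqthm}) would require $e=id_{\mathcal{C}^0}$, contradicting the hypothesis.

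I would then identify what the limit must be. The relation $e\circ\pi_0=\pi_0$ forces $\mathrm{im}\,\pi_0\subseteq\mathcal{Q}^0(\mathcal{A}_0(M))$, so every cone factors through the subobject $\mathcal{A}_0(M)$; since the cones $\mathcal{A}_0(M)$ (with legs $\mathcal{Q}^{\hbar'}$) and $\mathcal{C}^\hbar$ (with legs $p_{\hbar'}:=\mathcal{Q}^{\hbar'}\circ(\mathcal{Q}^\hbar)^{-1}$) are isomorphic via $\mathcal{Q}^\hbar$ and remain admissible, the limit is isomorphic to $\mathcal{C}^\hbar$. Testing the conditions at the representative $\mathcal{C}^\hbar$, the quantization map is $T_\infty=\mathcal{Q}^\hbar$ with $\hbar(T_\infty)=\hbar\neq0$, and (Q1) reads $\mathcal{Q}^\hbar(fg)=\mathcal{Q}^\hbar(f)\times_\hbar\mathcal{Q}^\hbar(g)$. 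As $\mathcal{A}_0(M)$ is commutative, this would make the images $\mathcal{Q}^\hbar(\mathcal{A}_0(M))$ commute; but the defining relation of the strict deformation quantization, $[\mathcal{Q}^\hbar(f),\mathcal{Q}^\hbar(g)]_\hbar=\sqrt{-1}\hbar\,\mathcal{Q}^\hbar(\{f,g\})+O(\hbar^2)$, is nonzero for a nontrivial Poisson bracket and $\hbar\neq0$, a contradiction. Hence (Q1) fails at the limit and $\mathscr{Q}_{DQ}$ is not a quantization category.

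The hard part will be the bookkeeping around the limit in the third step. Because limits are unique only up to isomorphism whereas (Q1), (Q2) are not isomorphism-invariant --- they hold at the representative $\mathcal{A}_0(M)$ (where $T_\infty=id$ and $\hbar=0$) but fail at $\mathcal{C}^\hbar$ --- I must justify that the genuinely noncommutative representative $\mathcal{C}^\hbar$ is a legitimate limit at which the conditions are tested. The entire weight of the argument rests on the self-loop $e$ excluding the commutative limit $\mathcal{C}^0$, which is precisely what the hypothesis $\mathcal{Q}^0\circ(\mathcal{Q}^\hbar)^{-1}\circ T_{\hbar 0}\neq id_{\mathcal{C}^0}$ guarantees; verifying that $e$ is indeed realized as a non-identity arrow of $J_{DQ}$ under Definition \ref{jdef}, and that no competing compatibility drags the limit back to $\mathcal{C}^0$, is the technical crux.
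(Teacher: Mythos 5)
Your proposal is correct and follows essentially the same route as the paper's proof: the same non-identity idempotent $e=\mathcal{Q}^0\circ(\mathcal{Q}^\hbar)^{-1}\circ T_{\hbar 0}$ (the paper's $e_B$) is used to disqualify $\mathcal{C}^0$ as the limit, and the contradiction is then extracted from the isomorphic limit $\mathcal{A}_0(M)\simeq\mathcal{C}^\hbar$, where (Q1) and (Q2) cannot both hold for $\hbar\neq 0$. The only differences are cosmetic: you phrase the exclusion of $\mathcal{C}^0$ via the cone condition over the self-loop $e$ rather than via non-uniqueness of the mediating morphism, and you omit the paper's first case ($\mathcal{Q}^0$ an isomorphism), which is in any event vacuous under the hypothesis $e\neq id_{\mathcal{C}^0}$ because $e\circ\mathcal{Q}^0=\mathcal{Q}^0$.
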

\begin{proof}
We consider the case that there exists $(\mathcal{Q}^\hbar)^{-1}$for $\hbar \neq 0$ such that $\mathcal{Q}^\hbar \circ (\mathcal{Q}^{\hbar})^{-1}=id_{\mathcal{C}^\hbar}, (\mathcal{Q}^\hbar)^{-1}\circ \mathcal{Q}^\hbar=id_{\mathcal{A}_0(M)}$.
\begin{align*}
\xymatrix{
\mathcal{A}_0(M) \ar@{.>}[d]_{\mathcal{Q}^0} \ar@<-0.4ex>[rd]_{\mathcal{Q}^{\hbar}}  \\
\mathcal{C}^0 \ar[r]_{T_{\hbar 0}}&\mathcal{C}^{\hbar} \ar@<-0.4ex>[ul]_{(\mathcal{Q}^{\hbar})^{-1}}
}
\end{align*}
We show the proof by cases:
\begin{enumerate}
  \item $\mathcal{Q}^0$ is an isomorphism.
  \item $\mathcal{Q}^0$ is not an isomorphism.
\end{enumerate}
First, let $\mathcal{Q}^0$ be an isomorphism. Then $\mathcal{A}_0(M)\simeq \mathcal{C}^0\simeq \mathcal{C}^\hbar$, and $\mathcal{A}_0(M)$, $\mathcal{C}^0$, and $\mathcal{C}^\hbar$ are the limit. However, since $\hbar\neq 0$ for $\mathcal{C}^\hbar$ and $\mathcal{A}_0(M)$ is a commutatve algebra, if the quantization condition $(Q1)$ in Definition $\ref{qc}$ is satisfied, the quantization condition $(Q2)$ is not satisfied. Thus, this case is not a quantization category. \par
Next, we assume $\mathcal{Q}^0$ be not an isomorphism. As $e_B$  in Proposition \ref{apeB3}, there exists a morphism $\mathcal{Q}^0\circ (\mathcal{Q}^{\hbar})^{-1}\circ T_{\hbar 0}\in \mathscr{Q}_{DQ}(\mathcal{C}^0,\mathcal{C}^0)$ when $\mathcal{Q}^\hbar$ is an isomorphism. Then the following diagram is commuted by $\mathcal{Q}^0\circ (\mathcal{Q}^{\hbar})^{-1}\circ T_{\hbar 0}$ and $id_{\mathcal{C}^0}$,
\begin{align*}
\xymatrix{
\mathcal{C}^0 \ar@<-0.5ex>[d]_{\mathcal{Q}^0\circ (\mathcal{Q}^{\hbar})^{-1}\circ T_{\hbar 0}}\ar@<0.5ex>[d]^{id} \ar[rd]^{T_{\hbar 0}}  \\
\mathcal{C}^0 \ar[r]_{T_{\hbar 0}}&\mathcal{C}^{\hbar}.
}
\end{align*}
Since the morphism $\mathcal{C}^0\to \mathcal{C}^0$ that commutes this diagram is not unique, the limit is not $\mathcal{C}^0$. If the limit is defined, the limit satisfying the quantization conditions $(Q1)$ and $(Q2)$ in Definition $\ref{qc}$ is $\mathcal{A}_0(M)$. However, since $\mathcal{A}_0(M)\simeq \mathcal{C}^\hbar$, $\mathcal{C}^\hbar$ is also the limit when $\mathcal{A}_0(M)$ is the limit. Thus, this case is not a quantization category as in the first case.
\end{proof}
\section{Prequantization}\label{secpq}
It is possible to construct a quantization category including the prequantization whose definition is given in Definition $\ref{def_prequantization}$.
\begin{definition}\label{cpqdef}
Let $(M,~\{~,~\})$ be a Poisson manifold with a prequantum line bundle $L\to M$ and $\Gamma_{hol}(M,L)$ be a holomorphic global section on $L$. Let $\mathcal{A}(M)$ be a Poisson algebra $(C^\infty(M), \cdot, \{~,~\})$ for the Poisson manifold. A subcategory $\mathscr{P}_{PQ}$ of $R${\rm Mod} is defined as follows. The set of objects is :
\begin{align*}
ob(\mathscr{P}_{PQ}):=\{\mathcal{A}(M), {\rm End}(\Gamma_{hol}(M,L)), \mathcal{T}_{\mathcal{A}(M)}\},
\end{align*}
where $\mathcal{T}_{\mathcal{A}(M)}$ is the set of multiplication operators by arbitrary functions in $\mathcal{A}(M)$ acting on $\Gamma_{hol}(M,L)$. We denote ${\rm End}(\Gamma_{hol}(M,L))$ by $End$ for simplicity. Morphisms of $\mathscr{P}_{PQ}$ are given by identity maps and following maps:
\begin{align*}
\begin{array}{rccc}
P:&\mathcal{A}(M)&\to &End\\
{}&\rotatebox{90}{$\in$}&{}&\rotatebox{90}{$\in$}\\
{}&f&\mapsto&P(f)
\end{array},
\begin{array}{rccc}
\iota:&\mathcal{A}(M)&\to &\mathcal{T}_{\mathcal{A}(M)}\\
{}&\rotatebox{90}{$\in$}&{}&\rotatebox{90}{$\in$}\\
{}&f&\mapsto&\mu_f .
\end{array} 
\end{align*}
Here $P(f):=\hat{f}$ in Definition $\ref{def_prequantization}$, and $\mu_f$ is an operator of multiplication by $f\in \mathcal{A}(M)$. In addition, there is a map $T\in \mathscr{P}_{PQ}(\mathcal{T}_{\mathcal{A}(M)},End)$ satisfying 
\begin{align*}
P=T\circ \iota.
\end{align*}
We denote the set of quantization maps and identity maps by $QM_{PQ}$, and the set of inverse $\iota$ in $QM_{PQ}$ by $QM^{-1}_{PQ}$. That is,
\begin{align*}
QM_{PQ}:&=\{P, \iota, T,id_{\mathcal{A}(M)}, id_{End}, id_{\mathcal{T}_{\mathcal{A}(M)}}\},\\
QM^{-1}_{PQ}:&=\{\iota^{-1}\}.
\end{align*}
Then the set of morphisms $Mor(\mathscr{P}_{PQ})$ is
\begin{align*}
Mor(\mathscr{P}_{PQ}):=QM_{PQ}\cup QM^{-1}_{PQ},
\end{align*}
For $P\in\mathscr{P}_{PQ}(\mathcal{A}(M), End)$, the codomain is equipped with the Lie bracket $[~,~]$ as the commutator such that 
\begin{align*}
[P(f),P(g)]=i\hbar P(\{f,g\}).
\end{align*}
\end{definition}
\bigskip

Note that a morphism $\iota$ is clearly an isomorphism between $\mathcal{A}(M)$ and $\mathcal{T}_{\mathcal{A}(M)}$. So there exists $\iota ^{-1}\in QM^{-1}_{PQ}$. For $\iota^{-1}$, its a composition map is given by
\begin{align}
P\circ\iota^{-1}=T\circ \iota \circ \iota^{-1}=T. \label{eqpreq}
\end{align}
Thus $\mathscr{P}_{PQ}$ is a category. From the definition of $\mathscr{P}_{PQ}$, the following lemma is derived immediately.
\begin{lemma}\label{cpqlem}
$\mathscr{P}_{PQ}$ is a pre-$\mathscr{Q}$ category.
\end{lemma}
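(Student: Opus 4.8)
The plan is to verify that $\mathscr{P}_{PQ}$ satisfies each of the five conditions in Definition \ref{preq}, exactly as was done for $\mathscr{P}_{MR}$ in Lemma \ref{cmrlem} and for $\mathscr{P}_{DQ}$ in Lemma \ref{cdqlem}. Since the paper has already established that $\mathscr{P}_{PQ}$ is a category (via the composition identity \eqref{eqpreq}, $P\circ\iota^{-1}=T$), the remaining work is to check the defining axioms of a pre-$\mathscr{Q}$ category one by one. First I would note that $\mathcal{A}(M)\in ob(\mathscr{P}_{PQ})$ by construction, giving condition $1$. For condition $2$, the quantization maps $P\colon\mathcal{A}(M)\to End$ and $\iota\colon\mathcal{A}(M)\to\mathcal{T}_{\mathcal{A}(M)}$ supply a morphism from $\mathcal{A}(M)$ to each object, so every object receives at least one quantization map.

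Next I would address condition $3$: each object must be an associative algebra whose Lie bracket is the commutator. Here $\mathcal{A}(M)=(C^\infty(M),\cdot,\{~,~\})$ is associative (with vanishing commutator, as remarked after Definition \ref{preq}), $End={\rm End}(\Gamma_{hol}(M,L))$ is an associative operator algebra with its natural commutator bracket, and $\mathcal{T}_{\mathcal{A}(M)}$, being a space of multiplication operators, is a commutative associative algebra whose commutator again vanishes. For condition $4$, I would invoke the bracket relation declared in Definition \ref{cpqdef}, namely $[P(f),P(g)]=i\hbar P(\{f,g\})$; setting $\hbar(P)=\hbar$ this is precisely the form \eqref{lie} with the higher-order term $\tilde{O}(\hbar^{1+\epsilon})$ equal to zero, which is the strongest possible case. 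One also checks $\hbar(\iota)=0$ since $\mathcal{T}_{\mathcal{A}(M)}$ is commutative, consistent with the remark that $\hbar(id_{\mathcal{A}})=0$.

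The only condition demanding genuine attention is condition $5$, concerning inverses of linear isomorphisms. I expect this to be the main (though still mild) obstacle, since it requires ruling out unexpected isomorphisms among the three objects. The map $\iota$ is explicitly an isomorphism $\mathcal{A}(M)\xrightarrow{\sim}\mathcal{T}_{\mathcal{A}(M)}$, and its inverse $\iota^{-1}\in QM^{-1}_{PQ}$ has already been included in $Mor(\mathscr{P}_{PQ})$, with all resulting composites accounted for by \eqref{eqpreq}. The remaining maps $P$ and $T$ are not isomorphisms: $P$ cannot be bijective because the commutator bracket on $End$ is nontrivial while that on $\mathcal{A}(M)$ vanishes, so $P$ fails to be surjective onto a noncommutative target (equivalently, a linear isomorphism would force $i\hbar P(\{f,g\})=[P(f),P(g)]$ to vanish identically, contradicting $\hbar\neq 0$), and $T$ likewise inherits nontrivial structure on $End$. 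Hence $QM^{-1}_{PQ}=\{\iota^{-1}\}$ is exactly the right closure, and condition $5$ holds.

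Having confirmed all five conditions, I would conclude that $\mathscr{P}_{PQ}$ is a pre-$\mathscr{Q}$ category, completing the proof. The argument is essentially bookkeeping: the substantive categorical content was front-loaded into the construction of Definition \ref{cpqdef} and the composition identity \eqref{eqpreq}, leaving only the routine verification that the axioms are met, with the isomorphism analysis of condition $5$ being the single step that requires a short justification rather than an immediate appeal to the definition.
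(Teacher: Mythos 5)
Your overall approach is the same as the paper's: the paper gives no written proof at all, stating only that the lemma ``is derived immediately'' from Definition \ref{cpqdef}, and your verification of conditions $1$--$4$ is exactly the bookkeeping that remark is standing in for. Those parts are correct.

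The one place where your argument goes wrong is the treatment of condition $5$. You claim $P$ cannot be a linear isomorphism ``because the commutator bracket on $End$ is nontrivial while that on $\mathcal{A}(M)$ vanishes,'' and equivalently that a linear isomorphism ``would force $i\hbar P(\{f,g\})=[P(f),P(g)]$ to vanish identically.'' This conflates linear isomorphisms with algebra isomorphisms: condition $5$ of Definition \ref{preq} concerns linear isomorphisms only, and a linear bijection between a commutative algebra and a noncommutative one is perfectly possible, so noncommutativity of the target is no obstruction and forces nothing about the bracket. The correct (and easy) reason $P$ is not surjective is that its image consists only of operators of the form $\frac{\hbar}{i}\nabla_{X_f}+f$, a proper subspace of ${\rm End}(\Gamma_{hol}(M,L))$; alternatively one can simply observe, as the paper implicitly does, that $QM^{-1}_{PQ}=\{\iota^{-1}\}$ is imposed as part of Definition \ref{cpqdef}, and the paper defers the discussion of a hypothetical $P^{-1}$ to the remark after Theorem \ref{qpqthm}, where it notes that admitting $P^{-1}$ would spoil the quantization-category structure rather than the pre-$\mathscr{Q}$ structure. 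Your conclusion for condition $5$ is therefore right, but the justification you give for it is not.
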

%
%
%
%
The character $\chi$ is obtained by $\chi(End)=|\hbar|$ and $\chi(\mathcal{T}_{\mathcal{A}(M)})=0$. A diagram $F_{PQ}^\bullet$ maps from
\begin{align*}
\xymatrix{
0\ar[r]& 1
}
\end{align*}
to
\begin{align*}
\xymatrix{
\mathcal{T}_{\mathcal{A}(M)}\ar[r] &End.
}
\end{align*}
\begin{theorem}\label{qpqthm}
For $(\mathscr{P}_{PQ},J^\bullet_{PQ},F^\bullet_{PQ},\chi)$ given as above, $\mathscr{Q}_{PQ}:=\mathscr{Q}(\mathscr{P}_{PQ},J^\bullet_{PQ},F^\bullet_{PQ},\chi)$ is a quantization category $\mathscr{Q}_{PQ}$ of the prequantization with the limit $\mathcal{T}_{\mathcal{A}(M)}$ of $(J_{PQ},F_{PQ})$.
\end{theorem}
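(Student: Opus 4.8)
The plan is to follow the same three-step pattern used in the proofs of Theorem~\ref{qmrthm} and Theorem~\ref{qdqthm}: first invoke the pre-$\mathscr{Q}$ structure, then identify the limit of $(J_{PQ},F_{PQ})$ explicitly among the objects of $\mathscr{P}_{PQ}$, and finally verify the quantization conditions (Q1) and (Q2) of Definition~\ref{qc} at that limit. The pre-$\mathscr{Q}$ property is already supplied by Lemma~\ref{cpqlem}, so only the last two steps require work, and both are markedly simpler here than in the matrix case because $J_{PQ}$ consists of the single arrow $0\to 1$.

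First I would compute the limit. Under $F_{PQ}$ the index arrow $0\to 1$ is sent to the single morphism $T\colon\mathcal{T}_{\mathcal{A}(M)}\to End$, so a cone over $F_{PQ}$ with apex $X$ is a pair $(\psi_0\colon X\to\mathcal{T}_{\mathcal{A}(M)},\ \psi_1\colon X\to End)$ satisfying $T\circ\psi_0=\psi_1$. By Proposition~\ref{limprop} the initial object $\mathcal{A}(M)$ is a candidate for the limit, and since $\iota$ is an isomorphism with $P=T\circ\iota$, the pair $(\iota,P)$ is indeed a cone. I would then eliminate $End$ as a possible apex: $\mathscr{P}_{PQ}$ contains no morphism $End\to\mathcal{T}_{\mathcal{A}(M)}$, so $End$ admits no projection $\psi_0$ and cannot be the limit object. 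The only remaining candidates are $\mathcal{A}(M)$ and $\mathcal{T}_{\mathcal{A}(M)}$, which $\iota$ identifies; for any cone $(X,\psi_0,\psi_1)$ the morphism $\iota^{-1}\circ\psi_0$ provides the required factorization through $\mathcal{A}(M)$ (using $P\circ\iota^{-1}=T$ from~(\ref{eqpreq})), and it is unique because $\iota$ is invertible. This pins the limit down to $M_\infty=\mathcal{T}_{\mathcal{A}(M)}$ with projections $\pi_0=id_{\mathcal{T}_{\mathcal{A}(M)}}$, $\pi_1=T$, and quantization map $T_\infty=\iota\in\mathscr{P}_{PQ}(\mathcal{A}(M),\mathcal{T}_{\mathcal{A}(M)})$.

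It then remains to check (Q1) and (Q2) at $M_\infty=\mathcal{T}_{\mathcal{A}(M)}$ with $T_\infty=\iota$. Since $\iota(f)=\mu_f$ and multiplication operators compose as $\mu_f\mu_g=\mu_{fg}$, condition (Q1) holds exactly: $\iota(fg)-\iota(f)\iota(g)=\mu_{fg}-\mu_f\mu_g=0$. For (Q2) the Lie bracket on $\mathcal{T}_{\mathcal{A}(M)}$ is the commutator of multiplication operators, which vanishes because pointwise multiplication is commutative, so $[\iota(f),\iota(g)]=\mu_{fg}-\mu_{gf}=0$; on the other hand $\chi(\mathcal{T}_{\mathcal{A}(M)})=0$ forces $\hbar(\iota)=0$, whence the right-hand side $i\hbar(\iota)\iota(\{f,g\})$ also vanishes and (Q2) holds.

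The one point that needs care—rather than any serious computation—is the limit identification: I must rule out $End$ and confirm that the universal arrow is produced by $\iota^{-1}$, so that the limit genuinely lands at the commutative object $\mathcal{T}_{\mathcal{A}(M)}$, where the multiplication operators realize $\iota$ as an exact algebra homomorphism. Once the limit is placed there, both quantization conditions become exact identities, in contrast to the $\hbar\to 0$ asymptotics that appear in the matrix regularization and strict deformation quantization cases.
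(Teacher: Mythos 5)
Your argument is correct and follows the same route as the paper's proof: invoke Lemma~\ref{cpqlem} for the pre-$\mathscr{Q}$ structure, read the limit $M_\infty=\mathcal{T}_{\mathcal{A}(M)}$ off the commutative triangle $P=T\circ\iota$, and observe that (Q1) and (Q2) hold there. The paper leaves the limit identification and the verification of (Q1)--(Q2) as assertions, whereas you supply the missing details (ruling out $End$ for lack of a projection to $\mathcal{T}_{\mathcal{A}(M)}$, uniqueness of the mediating arrow via $\iota^{-1}$, and the exact identities $\mu_f\mu_g=\mu_{fg}$ together with $\hbar(\iota)=0$ from $\chi(\mathcal{T}_{\mathcal{A}(M)})=0$); these are consistent with the paper's definitions and strengthen rather than alter its argument.
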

\begin{proof}
From Lemma $\ref{cpqlem}$, $\mathscr{Q}_{PQ}$ is a pre-$\mathscr{Q}$ category. We consider a limit $M_\infty$ of $F_{PQ}$ as the following commutative diagram.
\begin{align}\label{diagrampq}
\xymatrix{
\mathcal{A}(M) \ar@{.>}[d]^{\iota} \ar[rd]^{P}\\
M_\infty=\mathcal{T}_{\mathcal{A}(M)} \ar[r]_{T}&End 
}
\end{align}
Thus, the limit of $F_{PQ}$ is $\mathcal{T}_{\mathcal{A}(M)}$. The quantization conditions are satisfied on $\mathcal{T}_{\mathcal{A}(M)}$, so $\mathscr{Q}_{PQ}$ is a quantization category of the pre-quantization.
\end{proof}
We comment the reason why the $QM_{PQ}^{-1}$ is restricted to $\{\iota^{-1}\}$. If there exists $P^{-1}\in QM_{PQ}^{-1}$, then $End$ is also the limit. So $\mathscr{Q}_{PQ}$ is not a quantization category as similar to Theorem $\ref{theoQM-1}$.
\section{Poisson Enveloping Algebra}\label{secenv}
In this section, we describe the relationship between the quantization category and Poisson enveloping algebra. 

First, we review the definition of the Poisson enveloping algebra.
\begin{definition}[\cite{oh,oh2}] \label{def.env}
Let $\mathcal{A}=(\mathcal{A},\cdot,\{~,~\})$ be a Poisson algebra. For an algebra $U$, let $\alpha _0$ be an algebra homomorphism from $(\mathcal{A},\cdot)$ to $U$ and $\beta _0$ be a Lie homomorphism from $(\mathcal{A}, \{~,~\})$ to $U$. These morphisms satisfy the following conditions for all $a,b\in \mathcal{A}$.
\begin{align}
\alpha_0(\{a,b\})&=\beta_0(a)\alpha_0(b)-\alpha_0(b)\beta_0(a),\label{eq.alpha}\\
\beta_0(ab)&=\alpha_0(a)\beta_0(b)+\alpha_0(b)\beta_0(a).\label{eq.beta}
\end{align}
Let $X_k$ be an arbitrary algebra such that there exist an algebra homomorphism $\alpha_k$ from $(\mathcal{A},\cdot)$ to $X_k$ and a Lie homomorphism $\beta_k$ from $(\mathcal{A},\{~,~\})$ to $X_k$ such that 
\begin{align}
\alpha_k(\{a,b\})&=\beta_k(a)\alpha_k(b)-\alpha_k(b)\beta_k(a),\label{eq.gamma}\\
\beta_k(ab)&=\alpha_k(a)\beta_k(b)+\alpha_k(b)\beta_k(a).\label{eq.delta}
\end{align}
If there exists a unique algebra homomorphism $h_k$ from $U$ to $X_k$ that makes the following diagram $(\ref{eq.env})$ commutative, then $(U,\alpha_0,\beta_0)$ is called a Poisson enveloping algebra of $\mathcal{A}$.
\begin{align}
\xymatrix{
{}&X_k \\
(\mathcal{A},\cdot) \ar[ru]^{\alpha_k }\ar[r]_{\alpha_0 } & U\ar@{.>}[u]_{h_k}&(\mathcal{A},\{~,~\}) \ar[lu]_{\beta_k }\ar[l]^{\beta_0 } 
 } \label{eq.env}
\end{align}

\end{definition}
\begin{example}[\cite{um}] \label{thm.um}
A Poisson symplectic algebra $P_n$ is given by the polynomial algebra $R[x_1,\ldots,x_n,y_1,\ldots,y_n]$ that is equipped with the Poisson bracket defined by 
\begin{align*}
\{x_i,y_j\}=\delta_{ij},\quad
\{x_i,x_j\}=0, \quad \{y_i,y_j\}=0,
\end{align*}
where $1\le i,j\le n\in \mathbb{Z}$. Let $A_{2n}$ be the Weyl algebra such that $A_{2n}$ is an associative algebra given by generators $x_1,\ldots,x_{2n}, y_1,\ldots, y_{2n}$ and defined relations
\begin{align*}
[x_i,y_j]=\delta_{ij},\quad [x_i,x_j]=0,\quad [y_i,y_j]=0,
\end{align*}
where $1\le i,j\le 2n\in \mathbb{Z}$. Then, the enveloping algebra $P_n^e$ of $P_n$ and the Weyl algebra $A_{2n}$ are isomorphic.
\end{example}
From Example $\ref{thm.um}$ and so on, we are convinced that the Poisson enveloping algebra is a kind of quantization. 
\begin{definition}\label{cenvdef}
Let $U:=(U,\alpha_0 ,\beta_0)$ be the Poisson enveloping algebra with a Poisson algebra $\mathcal{A}=(\mathcal{A},\cdot~,\{~,~\})$. A subcategory $\mathscr{P}_{env}$ of $R${\rm Mod} is defined as follows:
\begin{align*}
ob(\mathscr{P}_{env}):=\{\mathcal{A}, (\mathcal{A}, \cdot ), U, M_1, M_2, \ldots \},
\end{align*} 
where $M_i$ for $i=1,2,\cdots ,$ are Lie algebras such that the following diagram commutes.
\begin{align*}
\xymatrix{
(\mathcal{A},\cdot)\ar[rd]_{\alpha_0} \ar[rdd]_{\alpha_i}&\ar[l]_{~~\Pi_1}\mathcal{A}\ar[r]^{\Pi_2~~}\ar[d]_{\alpha_0 \circ \Pi_1} \ar[d]^{\beta_0 \circ \Pi_2}&\ar[ldd]^{\beta_i} \ar[ld]^{\beta_0}(\mathcal{A},\cdot, \{~,~\})\\
{}&U\ar@{.>}[d]^{h_{i}}&{}\\
{}&M_i&{}
}
\end{align*}
Morphisms of $\mathscr{P}_{env}$ are given by 
\begin{align*}
\begin{array}{rccccrccccc}
\Pi_1:&\mathcal{A}&\to &(\mathcal{A},\cdot)&,~&\Pi_2:&\mathcal{A}&\to &(\mathcal{A},\cdot,\{~,~\}),\\
\end{array}
\end{align*}
where $\Pi_1$ is the identity forgetting the $\{~,~\}$ structure and $\Pi_2$ is the identity, and the homomorphisms and Lie homomorphisms are as follows:
\begin{align*}
\alpha_0:&(\mathcal{A},\cdot)\to U,\\
\beta_0:&(\mathcal{A},\{~,~\})\to U,\\
\alpha_k:&(\mathcal{A},\cdot)\to M_k,\\
\beta_k:&(\mathcal{A},\{~,~\})\to M_k,\\
h_k:&U\to M_k,
\end{align*}
for $k=1,2,\ldots$. These homomorphisms and Lie homomorphisms satisfy $(\ref{eq.alpha})-(\ref{eq.env})$, respectively. In addition, if there exists some linear isomorphism $T_{iso}$ in the morphism of $\mathscr{P}_{env}$, then there also exists an inverse $T_{iso}^{-1}$ of $T_{iso}$. The other morphisms are simply defined by composition maps such that $\mathscr{P}_{env}$ is a category and satisfies the condition 5 in Definition \ref{preq}. 
\end{definition}\bigskip
A character $\chi$ gives the index category $J^\bullet_{env}$ and $F^\bullet_{env}$ as follows. For $J^\bullet_{env}:=\{J_{env}\}$, $J_{env}$ is given as a directed set. For $F^\bullet_{env}:=\{F_{env}\}$, a diagram $F_{env}$ of $J_{env}$ is surjective from $ob(J_{env})\to ob(\mathscr{P}_{env})\backslash \{\mathcal{A}\}$. For the object $U$, $\hbar(\alpha_0\circ \Pi_1)=0$ and $\hbar(\beta _0\circ \Pi_2)=1\slash \sqrt{-1}$ because $\alpha$ is an algebra homomorphism and $\beta$ is a Lie homomorphism, i.e.,
\begin{align}
0=[\alpha(a),\alpha(b)]&=\sqrt{-1}\hbar(\alpha_0\circ \Pi_1)\alpha(\{a,b\}) \label{alpha}\\
\intertext{and} 
\beta_0(\{a,b\})=[\beta_0(a),\beta_0(b)]&=\sqrt{-1}\hbar(\beta_0\circ \Pi_2)\beta_0(\{a,b\}).\label{beta}
\end{align}
Therefore $\chi(U)=|1\slash \sqrt{-1}|=1$. Since $\chi(\mathcal{A})=\chi((\mathcal{A},\cdot))=\chi((\mathcal{A},\{~,~\}))=0$ and $\chi(U)=1$, there at least exist morphisms between them in $F_{env}(J_{env})$. If $\chi(M_i)\ge 1$, $h_i\in \mathscr{P}_{env}(U, M_i)$ exists in $F_{env}(J_{env})$.
\begin{lemma}\label{cenvlem}
$\mathscr{P}_{env}$ is a pre-$\mathscr{Q}$ category.
\end{lemma}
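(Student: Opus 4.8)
The plan is to check the five defining conditions of Definition \ref{preq} for $\mathscr{P}_{env}$ in turn, since the construction in Definition \ref{cenvdef} is engineered so that most of them hold by fiat. Condition~1 is immediate because $\mathcal{A}\in ob(\mathscr{P}_{env})$. For Condition~2 I would exhibit a quantization map out of $\mathcal{A}$ for each object: $\Pi_1$ reaches $(\mathcal{A},\cdot)$, the composites $\alpha_0\circ\Pi_1$ and $\beta_0\circ\Pi_2$ reach $U$, and $\alpha_k\circ\Pi_1=h_k\circ\alpha_0\circ\Pi_1$ together with $\beta_k\circ\Pi_2=h_k\circ\beta_0\circ\Pi_2$ reach each $M_k$, where the two descriptions agree by the commutativity of diagram (\ref{eq.env}). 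Condition~3 holds because $U$ and each $M_k$ (which plays the role of $X_k$ in Definition \ref{def.env}) are associative algebras, and I equip each with its commutator bracket. Condition~5 is imposed verbatim in Definition \ref{cenvdef}, which adjoins an inverse to every linear isomorphism appearing among the morphisms.

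The substantive step is Condition~4, i.e. verifying the bracket law (\ref{lie}) for each quantization map; this is precisely the content of the identities (\ref{alpha}) and (\ref{beta}) recorded just before the statement. For the algebra-homomorphism maps $\alpha_0\circ\Pi_1$ (and $h_k\circ\alpha_0\circ\Pi_1=\alpha_k\circ\Pi_1$), commutativity of $(\mathcal{A},\cdot)$ forces $[\alpha_0(f),\alpha_0(g)]=\alpha_0(fg)-\alpha_0(gf)=0$, so (\ref{lie}) holds with $\hbar=0$ and a vanishing error term. For the Lie-homomorphism maps $\beta_0\circ\Pi_2$ (and $h_k\circ\beta_0\circ\Pi_2=\beta_k\circ\Pi_2$), the defining property of a Lie homomorphism gives $[\beta_0(f),\beta_0(g)]=\beta_0(\{f,g\})$, whence (\ref{lie}) holds with $\sqrt{-1}\,\hbar=1$, that is $\hbar=1/\sqrt{-1}$. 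A point worth stressing is that, unlike matrix regularization or deformation quantization, all of these relations are exact: the correction $\tilde{O}(\hbar^{1+\epsilon})$ may be taken identically zero. Since $h_k$ is an algebra homomorphism it commutes with the commutator bracket, so the $\hbar$-values attached to the maps into $M_k$ coincide with those attached to the corresponding maps into $U$, and the assignment of $\hbar$ is consistent throughout the category.

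The hard part will be confirming that $\mathscr{P}_{env}$ is a genuine category — that the composition maps declared in Definition \ref{cenvdef} close up associatively, respect identities, and do not covertly force additional morphisms that would spoil the previous checks. Here I would argue that every morphism is a composite of the structure maps $\Pi_1,\Pi_2,\alpha_0,\beta_0,h_k$ and the inverses demanded by Condition~5, and that the relations $\alpha_k=h_k\circ\alpha_0$ and $\beta_k=h_k\circ\beta_0$ from (\ref{eq.env}) make any two composites between a fixed pair of objects compatible, so no inconsistency arises. With category-hood established and Conditions~1--5 verified, $\mathscr{P}_{env}$ is a pre-$\mathscr{Q}$ category.
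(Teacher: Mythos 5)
Your proposal is correct and follows essentially the same route as the paper: the paper's proof likewise reduces everything to verifying condition (\ref{lie}) via the exact identities (\ref{alpha}) and (\ref{beta}), treating the remaining conditions of Definition \ref{preq} as holding by construction. Your version merely spells out those routine checks and the consistency of compositions in more detail than the paper does.
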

\begin{proof}
It is sufficient to prove that $\mathscr{P}_{env}$ satisfies the condition $(\ref{lie})$ in Definition $\ref{preq}$. For all quantization maps $T_k\in Mor(\mathscr{P}_{env})$ whose domain is $\mathcal{A}$, its codomain is equipped with the Lie bracket $[~,~]_k$ as the commutator.  From $(\ref{alpha})$, $(\ref{beta})$ and so on, we obtain
\begin{align*}
[T_k(f),T_k(g)]_k=\sqrt{-1}\hbar (T_k)T_k(\{f,g\}).
\end{align*} 
\end{proof}
\begin{theorem}
For $(\mathscr{P}_{env},J^\bullet_{env},F^\bullet_{env},\chi)$ given as above, $\mathscr{Q}_{env}:=\mathscr{Q}(\mathscr{P}_{env},J^\bullet_{env},F^\bullet_{env},\chi)$ is a quantization category $\mathscr{Q}_{env}$ with a limit $\mathcal{A}$ of $(J_{env},F_{env})$.
\end{theorem}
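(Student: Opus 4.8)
The plan is to follow the template already used for Theorems \ref{qmrthm}, \ref{qdqthm} and \ref{qpqthm}: invoke Lemma \ref{cenvlem} to know that $\mathscr{P}_{env}$ is a pre-$\mathscr{Q}$ category, then exhibit the limit of $(J_{env},F_{env})$ explicitly and identify it with $\mathcal{A}$, and finally check the quantization conditions (Q1) and (Q2) of Definition \ref{qc} at that limit. Since $\mathcal{A}$ is commutative and $J^\bullet_{env}=\{J_{env}\}$ has a single connected component, I expect the last step to be immediate; the substance lies in pinning down the limit.

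First I would determine the limit of $F_{env}$. Its objects are $ob(\mathscr{P}_{env})\backslash\{\mathcal{A}\}=\{(\mathcal{A},\cdot),U,M_1,M_2,\dots\}$, and the morphisms internal to the diagram are the algebra-homomorphism structure maps $\alpha_0:(\mathcal{A},\cdot)\to U$, $\alpha_i:(\mathcal{A},\cdot)\to M_i$ and $h_i:U\to M_i$, which satisfy $h_i\circ\alpha_0=\alpha_i$ by the defining property $(\ref{eq.env})$ of the Poisson enveloping algebra. Thus $(\mathcal{A},\cdot)$ is the initial object of $F_{env}$: it maps to every other object, and these maps are compatible with the $h_i$. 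A cone with apex $X$ is therefore determined by its single leg $X\to(\mathcal{A},\cdot)$, the remaining legs being recovered by post-composition with $\alpha_0$, $\alpha_i$, $h_i$; consequently $\lim F_{env}=(\mathcal{A},\cdot)$. Because $\Pi_1:\mathcal{A}\to(\mathcal{A},\cdot)$ is a linear isomorphism, condition $5$ of Definition \ref{preq} supplies $\Pi_1^{-1}$, so $\mathcal{A}\cong(\mathcal{A},\cdot)$ and I may take $M_\infty=\mathcal{A}$, with projection $\Pi_1$ to $(\mathcal{A},\cdot)$ and $\alpha_0\circ\Pi_1$ to $U$.

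The delicate point — and the one that distinguishes this case from Proposition \ref{limprop} — is that $\mathcal{A}$ is \emph{not} an initial object of $\mathscr{P}_{env}$: there are two morphisms $\mathcal{A}\to U$, namely $\alpha_0\circ\Pi_1$ through the associative structure and $\beta_0$ through the Lie structure, so the hypothesis of Proposition \ref{limprop} fails verbatim. I would resolve this using that $J_{env}$ is a directed set, hence a poset in which at most one arrow joins any two indices: the arrow from the class of $(\mathcal{A},\cdot)$ to that of $U$ is realized by $F_{env}$ as the homomorphism $\alpha_0$, while the Lie homomorphisms $\beta_0,\beta_i$ — which issue from the Poisson factor $(\mathcal{A},\cdot,\{~,~\})=\mathcal{A}$, i.e.\ the excluded apex — are legs of the cone rather than internal morphisms of $F_{env}$. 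Compatibility of any cone with $\alpha_0$ then forces its projection to $U$ to be $\alpha_0\circ\Pi_1$, and uniqueness of the comparison map is governed solely by the isomorphism $\Pi_1$. This is precisely the universal property realized by the triangles $(\ref{diagrammr})$, $(\ref{diagramdq})$ and $(\ref{diagrampq})$ of the previous theorems, now built from $\Pi_1,\alpha_0,h_i$ with apex $\mathcal{A}$.

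Finally I would verify (Q1) and (Q2) at $M_\infty=\mathcal{A}$ with $T:=T_\infty=id_{\mathcal{A}}$. Since $\hbar(id_{\mathcal{A}})=0$ by the remark following Definition \ref{preq}, and since the commutator on $\mathcal{A}$ is $fg-gf=0$, both conditions collapse: (Q1) reads $T(fg)-T(f)T(g)=fg-fg=0$, and (Q2) reads $[T(f),T(g)]_\mathcal{A}-i\hbar(T)T(\{f,g\})=0-0=0$. Hence $\mathscr{Q}_{env}$ is a quantization category with limit $\mathcal{A}$ of $(J_{env},F_{env})$. I expect the main obstacle to be the limit identification of the two middle paragraphs — specifically, arguing cleanly that the Lie-homomorphism legs $\beta_0,\beta_i$ do not obstruct the universal property, so that the classical object $\mathcal{A}$ (rather than the quantum object $U$) is the limit — while the quantization conditions themselves are purely formal.
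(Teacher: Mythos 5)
Your overall architecture (invoke Lemma \ref{cenvlem}, identify the limit, then check (Q1)--(Q2) with $T=id_{\mathcal{A}}$ and $\hbar(id_{\mathcal{A}})=0$) matches the paper's, and your final verification of the quantization conditions agrees with the paper's one-line argument that the only quantization map into $\mathcal{A}$ is the identity. The divergence --- and the gap --- is in the middle step. You claim that $(\mathcal{A},\cdot)$ is initial in the diagram $F_{env}$, hence is the limit, and then transport along the isomorphism $\Pi_1$ to take $M_\infty=\mathcal{A}$. The paper argues the exact opposite: it \emph{rejects} $(\mathcal{A},\cdot)$ as the limit precisely because there is no morphism $f\in\mathscr{P}_{env}(\mathcal{A},(\mathcal{A},\cdot))$ with $\alpha_0\circ f=\beta_0\circ\Pi_2$, and only on that basis concludes that the limit is $\mathcal{A}$ (written there as $(\mathcal{A},Hom(\mathcal{A},F_{env}(i)))$, i.e.\ with \emph{both} structure maps $\alpha_0\circ\Pi_1$ and $\beta_0\circ\Pi_2$ retained as projections to $U$).

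Your dismissal of the $\beta$-maps as ``legs of the cone rather than internal morphisms of $F_{env}$'' is where the argument breaks under the paper's own definitions. Since $\Pi_1$ is a linear isomorphism, condition $5$ of Definition \ref{preq} (which you invoke yourself) puts $\Pi_1^{-1}$ into $Mor(\mathscr{P}_{env})$, and the closure of $Mor(\mathscr{P}_{env})$ under composition then yields $\beta_0\circ\Pi_2\circ\Pi_1^{-1}\in\mathscr{P}_{env}((\mathcal{A},\cdot),U)$. By condition $({\rm ii})$ of Definition \ref{jdef}, every morphism $(\mathcal{A},\cdot)\to U$ with $\chi((\mathcal{A},\cdot))\le\chi(U)$ contributes an arrow of $J_{env}$; so the diagram contains \emph{two} parallel arrows $(\mathcal{A},\cdot)\to U$, namely $\alpha_0$ and $\beta_0\circ\Pi_2\circ\Pi_1^{-1}$, and they are distinct because $\alpha_0$ is an algebra homomorphism while $\beta_0$ is a Lie homomorphism. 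Consequently $(\mathcal{A},\cdot)$ equipped with the identity is not even a cone over $F_{env}$ (the two triangles to $U$ cannot both commute), so it is not initial in the diagram, and your reduction to Proposition-\ref{limprop}-style reasoning fails at its first step. To repair this you would need either to reproduce the paper's argument that the obstruction $\alpha_0\circ\Pi_1\neq\beta_0\circ\Pi_2$ is exactly what disqualifies $(\mathcal{A},\cdot)$ and pushes the limit up to $\mathcal{A}$ itself, or to justify excluding the arrow $\beta_0\circ\Pi_2\circ\Pi_1^{-1}$ from $F_{env}(J_{env})$; as written, your identification of the limit is not established.
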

\begin{proof}
From the Lemma $\ref{cenvlem}$, $\mathscr{P}_{env}$ is a pre-$\mathscr{Q}$ category. We consider a limit $(M_\infty, \pi)$ of $F_{env}$ as the following commutative diagram. 
\begin{align*}
\xymatrix{
(\mathcal{A},\cdot)\ar[rd]_{\alpha_0} \ar[rdd]_{\alpha_i}&\ar[l]_{~~\Pi_1}\mathcal{A}\ar[r]^{\Pi_2~~}\ar[d]_{\alpha_0 \circ \Pi_1} \ar[d]^{\beta_0 \circ \Pi_2}&\ar[ldd]^{\beta_i} \ar[ld]^{\beta_0}(\mathcal{A}, \cdot, \{~,~\})\\
{}&U\ar@{.>}[d]^{h_i}&{}\\
{}&M_i&{}
}
\end{align*}
Candidates of the limit for a domain of $\mathscr{P}_{env}(\cdot,F_{env}(i))$ for all $i\in ob(J_{env})$ are $\mathcal{A}$ and $(\mathcal{A}, \cdot)$. However, there is no morphism $f\in \mathscr{P}_{env}(\mathcal{A}, (\mathcal{A},\cdot))$ such that $\alpha_0 \circ f=\beta_0\circ \Pi_2=\beta_0$, then $(\mathcal{A}, \cdot)$ is not the limit. Thus, a limit of $F_{env}$ is $(\mathcal{A}, Hom(\mathcal{A},F_{env}(i)))$. Since the quantization map to $\mathcal{A}$ is the only identity, the quantization conditions are satisfied on $\mathcal{A}$ trivially. Therefore $\mathscr{Q}_{env}$ is a quantization category of the Poisson enveloping algebra.
\end{proof}
\section{Categorical Equivalence}\label{secce}
In this section, we discuss the equivalence of the quantization categories appearing in Section \ref{secmr}-\ref{secenv}.\par
\bigskip

Let $\displaystyle \left.\mathscr{Q}_{MR}\vphantom{\big|} \right|_{Mat_{N_i}}$ and $\displaystyle \left.\tilde{\mathscr{Q}}_{DQ}\vphantom{\big|} \right|_{\mathcal{C}^\hbar}$ be those appearing in Corollary $\ref{qmrcor}$ and Corollary $\ref{qdqcor}$. The following proposition is trivially obtained.
\begin{proposition}\label{equivmr}
If there are no linear isomorphisms except for identity maps in $QM_{MR}$ and $QM_{DQ}$, or if $QM^{-1}_{MR}=\{T_\infty^{-1}\}$ and $QM^{-1}_{DQ}=\{(\mathcal{Q}^0)^{-1}\}$, then the quantization categories $\left.\mathscr{Q}_{MR}\vphantom{\big|} \right|_{Mat_{N_i}}$ and $\left.\tilde{\mathscr{Q}}_{DQ}\vphantom{\big|} \right|_{\mathcal{C}^\hbar}$ are equivalent categories. More strictly, they are isomorphic.
\end{proposition}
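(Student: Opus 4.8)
The plan is to exhibit an explicit isomorphism of categories $\Phi: \left.\mathscr{Q}_{MR}\vphantom{\big|}\right|_{Mat_{N_i}} \to \left.\tilde{\mathscr{Q}}_{DQ}\vphantom{\big|}\right|_{\mathcal{C}^\hbar}$, since an isomorphism of categories is in particular an equivalence, which is exactly the stronger ``isomorphic'' conclusion. First I would record that both restricted categories have the same underlying shape: three objects, a cone morphism from the Poisson algebra to the limit, a quantization map at the single finite level, a projection from the limit to that level, the three identities, and at most one additional inverse. Concretely, by Corollary \ref{qmrcor} the source category has objects $\mathcal{A}(M), Mat_\infty, Mat_{N_i}$ and nonidentity morphisms $T_\infty, T_{i\infty}, T_i$ subject to $T_i = T_{i\infty}\circ T_\infty$, which is \eqref{eq1}; by Corollary \ref{qdqcor} the target has objects $\mathcal{A}_0(M), \mathcal{C}^0, \mathcal{C}^\hbar$ and nonidentity morphisms $\mathcal{Q}^0, T_{\hbar 0}, \mathcal{Q}^\hbar$ subject to $\mathcal{Q}^\hbar = T_{\hbar 0}\circ \mathcal{Q}^0$.

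Next I would define $\Phi$ on objects by $\mathcal{A}(M)\mapsto \mathcal{A}_0(M)$, $Mat_\infty \mapsto \mathcal{C}^0$, $Mat_{N_i}\mapsto \mathcal{C}^\hbar$, and on morphisms by $T_\infty \mapsto \mathcal{Q}^0$, $T_{i\infty}\mapsto T_{\hbar 0}$, $T_i\mapsto \mathcal{Q}^\hbar$, identities to identities, and (under the second hypothesis) $T_\infty^{-1}\mapsto (\mathcal{Q}^0)^{-1}$. Functoriality then reduces to preservation of the single composition relation, namely $\Phi(T_{i\infty}\circ T_\infty) = T_{\hbar 0}\circ\mathcal{Q}^0 = \mathcal{Q}^\hbar = \Phi(T_i)$, which is immediate from the defining relations just recorded; preservation of identities is built into the definition.

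The one point requiring care --- and the only place the two branches of the hypothesis are actually used --- is bijectivity of $\Phi$ on each $\mathrm{Hom}$-set after the categories are closed under composition. Under the first hypothesis (no nonidentity isomorphisms) every $\mathrm{Hom}$-set of both categories is empty or a singleton, and $\Phi$ matches them directly. Under the second hypothesis one adds $T_\infty^{-1}\in \mathrm{Hom}(Mat_\infty,\mathcal{A}(M))$ and $(\mathcal{Q}^0)^{-1}\in \mathrm{Hom}(\mathcal{C}^0,\mathcal{A}_0(M))$, and I must check that the composition closure produces no further arrows that would spoil the count. The key computation is $T_i\circ T_\infty^{-1} = T_{i\infty}\circ T_\infty\circ T_\infty^{-1} = T_{i\infty}$ (and symmetrically $\mathcal{Q}^\hbar\circ(\mathcal{Q}^0)^{-1} = T_{\hbar 0}$), so the inverse yields no new arrow beyond the projection already present, while all other composites collapse to identities or to listed arrows; thus every $\mathrm{Hom}$-set stays a singleton except $\mathrm{Hom}(Mat_\infty,\mathcal{A}(M)) = \{T_\infty^{-1}\}$ matching $\mathrm{Hom}(\mathcal{C}^0,\mathcal{A}_0(M)) = \{(\mathcal{Q}^0)^{-1}\}$. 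I expect this closure check to be the main (albeit minor) obstacle, since it is precisely where the hypotheses enter. With it in hand, $\Phi$ is a bijection on objects and on morphisms preserving composition, so the reverse assignments define a strict inverse functor; hence the two quantization categories are isomorphic, and in particular equivalent.
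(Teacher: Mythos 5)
Your proposal is correct and matches the paper's intent: the paper offers no written proof (it declares the proposition ``trivially obtained''), and the explicit object-and-morphism matching you give, together with the closure check $T_i\circ T_\infty^{-1}=T_{i\infty}$ and $\mathcal{Q}^\hbar\circ(\mathcal{Q}^0)^{-1}=T_{\hbar 0}$, is precisely the routine verification being elided. Your version is in fact slightly more careful than the paper's parallel (commented-out) arguments, since you establish a strict isomorphism rather than only an equivalence via natural transformations.
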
\par
This proposition can be immediately extended to the following proposition.
\begin{proposition}\label{equivmrr}
Let $\sharp N$ be a cardinality of the sequence of $\{N_i\}$ in $\mathscr{Q}_{MR}$ and $\sharp \hbar$ be a cardinality of $\hbar$ in $\tilde{\mathscr{Q}}_{DQ}$. When $\sharp N=\sharp \hbar$, if there are no linear isomorphisms except for idetity maps in $QM_{MR}$ and $QM_{DQ}$, or if $QM^{-1}_{MR}=\{T_\infty^{-1}\}$ and $QM^{-1}_{DQ}=\{(\mathcal{Q}^0)^{-1}\}$, then $\mathscr{Q}_{MR}$ and $\tilde{\mathscr{Q}}_{DQ}$ are equivalent categories. More strictly, they are isomorphic.
\end{proposition}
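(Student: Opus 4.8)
The plan is to promote the three-object isomorphism of Proposition~\ref{equivmr} to a full isomorphism of categories $\Phi\colon\mathscr{Q}_{MR}\to\tilde{\mathscr{Q}}_{DQ}$, using the cardinality hypothesis only to match the two index families. The first step is to observe that both categories share one combinatorial template: a source Poisson algebra ($\mathcal{A}(M)$ on one side, $\mathcal{A}_0(M)$ on the other) of character $\chi=0$, a family of intermediate objects ($Mat_{N_i}$ resp.\ $\mathcal{C}^\hbar$) totally ordered by the noncommutative character $\chi$, and a limit object ($Mat_\infty$ resp.\ $\mathcal{C}^0$), again of character $\chi=0$. Crucially, in each category every hom-set is dictated by this order alone: by Definition~\ref{cmrdef} a morphism $T_{ij}$ exists precisely when $N_i\le N_j$, i.e.\ $\chi(Mat_{N_j})\le\chi(Mat_{N_i})$, and by Definition~\ref{cdqdef} a morphism $T_{\hbar\hbar'}$ exists precisely when $\hbar\ge\hbar'$, i.e.\ $\chi(\mathcal{C}^{\hbar'})\le\chi(\mathcal{C}^{\hbar})$.

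Next I would fix a bijection $\phi$ realizing $\sharp N=\sharp\hbar$ and, after re-indexing both families in order of decreasing $\chi$, arrange $\phi$ to be order preserving for $\chi$. On objects I set $\Phi(\mathcal{A}(M))=\mathcal{A}_0(M)$, $\Phi(Mat_\infty)=\mathcal{C}^0$ and $\Phi(Mat_{N_i})=\mathcal{C}^{\phi(i)}$; on morphisms $\Phi(T_i)=\mathcal{Q}^{\phi(i)}$, $\Phi(T_\infty)=\mathcal{Q}^0$, $\Phi(T_{i\infty})=T_{\phi(i)0}$, $\Phi(T_{ij})=T_{\phi(i)\phi(j)}$, identities to identities, and, under the hypothesis $QM^{-1}_{MR}=\{T_\infty^{-1}\}$ with $QM^{-1}_{DQ}=\{(\mathcal{Q}^0)^{-1}\}$, $\Phi(T_\infty^{-1})=(\mathcal{Q}^0)^{-1}$; composites are sent to the corresponding composites. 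Because $\phi$ preserves the $\chi$-order, $\Phi$ is a bijection on objects and on every hom-set, meeting the alternative hypothesis (no nontrivial isomorphisms) trivially and the isomorphism hypothesis through the pairing $T_\infty^{-1}\leftrightarrow(\mathcal{Q}^0)^{-1}$.

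It then remains to verify functoriality. Identities are preserved by construction, and the defining compositions match term by term: the relations $T_i=T_{ij}\circ T_j$ and $T_{ij}\circ T_{jk}=T_{ik}$ from Lemma~\ref{cmrlem}, together with $T_i=T_{i\infty}\circ T_\infty$ and $T_k\circ T_\infty^{-1}=T_{k\infty}$, are carried exactly onto $\mathcal{Q}^{\hbar}=T_{\hbar\hbar'}\circ\mathcal{Q}^{\hbar'}$, $T_{\hbar\hbar'}\circ T_{\hbar'\hbar''}=T_{\hbar\hbar''}$, $\mathcal{Q}^\hbar=T_{\hbar0}\circ\mathcal{Q}^0$ and $\mathcal{Q}^\hbar\circ(\mathcal{Q}^0)^{-1}=T_{\hbar0}$. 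The functor $\Psi\colon\tilde{\mathscr{Q}}_{DQ}\to\mathscr{Q}_{MR}$ built from $\phi^{-1}$ is then a two-sided inverse, so $\Phi$ is an isomorphism and hence an equivalence, which is exactly the pattern already verified object-by-object in Corollary~\ref{qmrcor} and Corollary~\ref{qdqcor}. I expect the one genuinely delicate point to be the passage from \emph{equal cardinality} to an \emph{order-preserving} matching: a bare set bijection need not respect the $\chi$-order on which the hom-sets depend, so one must first re-index both families by decreasing $\chi$ (both being of order type $\omega$ with the classical limit adjoined) before defining $\phi$; once this order isomorphism is in hand, every remaining check is purely formal and identical to Proposition~\ref{equivmr}.
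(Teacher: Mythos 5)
Your construction is the natural one, and it is in fact more detailed than the paper itself, which offers no proof of Proposition~\ref{equivmrr} beyond the remark that it is an ``immediate extension'' of Proposition~\ref{equivmr}; the explicit functor you build (objects matched through a bijection of index families, quantization maps to quantization maps, $T_\infty^{-1}\leftrightarrow(\mathcal{Q}^0)^{-1}$, composites to composites) is exactly the pattern the authors use in their commented-out proof of Proposition~\ref{equivmin}, and your functoriality checks are correct.

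There is, however, a genuine gap at precisely the point you flag and then dismiss: the passage from $\sharp N=\sharp\hbar$ to an order-preserving bijection. Both categories are preorders on their intermediate objects --- as the paper itself notes in the proof of Proposition~\ref{envdq}, $\tilde{\mathscr{Q}}_{DQ}(X,Y)$ contains at most one morphism, and $\tilde{\mathscr{Q}}_{DQ}(\mathcal{C}^{\hbar'},\mathcal{C}^\hbar)\neq\emptyset$ iff $\hbar\ge\hbar'$ --- so an isomorphism of categories forces an order isomorphism of the index sets, not merely a bijection. On the matrix-regularization side $\{N_i\}$ is by definition a strictly increasing sequence of integers, hence of order type $\omega$; but on the deformation side $I$ is only required to be a subset of $\mathbb{R}$ containing $0$ (Definitions~\ref{strict1} and~\ref{cdqdef}), so a countably infinite $I$ need not have order type $\omega$ under $\ge$: for instance $I=\{0\}\cup\{1/n\}_{n\ge1}\cup\{1+1/n\}_{n\ge1}$ gives descending order type $\omega+\omega$, and the resulting total orders are not isomorphic as categories even though the cardinalities agree. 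Your assertion that both families are ``of order type $\omega$ with the classical limit adjoined'' is therefore unjustified for $\tilde{\mathscr{Q}}_{DQ}$ as defined. (A smaller instance of the same issue: you identify $\hbar\ge\hbar'$ with $\chi(\mathcal{C}^{\hbar'})\le\chi(\mathcal{C}^\hbar)$, which fails if $I$ contains negative values, since $\chi=|\hbar|$.) To close the gap you must either add the hypothesis that $I\setminus\{0\}$ is order-isomorphic to $\{N_i\}$ (e.g.\ that it is a decreasing sequence accumulating only at $0$), or argue that the paper implicitly assumes this; with that in hand the rest of your argument goes through verbatim. This defect is arguably inherited from the paper's own statement, but a complete proof cannot avoid addressing it.
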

Recall that $\mathcal{A}(M)$ and $\mathcal{T}_{\mathcal{A}(M)}$ in $\mathscr{Q}_{PQ}$ are isomorphic. So the equivalence with $\mathscr{Q}_{PQ}$ is given as follows.
\begin{proposition}\label{equivmin}
If $QM^{-1}_{MR}=\{T_\infty^{-1}\}$, then $\left.\mathscr{Q}_{MR}\vphantom{\big|} \right|_{Mat_{N_i}}$ and $\mathscr{Q}_{PQ}$ are equivalent categories. More strictly, they are isomorphic.
\end{proposition}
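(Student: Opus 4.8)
The plan is to build an explicit isomorphism of categories $\Phi:\left.\mathscr{Q}_{MR}\vphantom{\big|}\right|_{Mat_{N_i}}\to\mathscr{Q}_{PQ}$ by superimposing the two defining triangles $(\ref{diagrammr})$ and $(\ref{diagrampq})$. First I would record that the hypothesis $QM^{-1}_{MR}=\{T_\infty^{-1}\}$ forces $T_\infty$ to be an isomorphism, so the limit $Mat_\infty$ is isomorphic to the source $\mathcal{A}(M)$; dually, $\iota$ is by construction an isomorphism of $\mathcal{A}(M)$ onto $\mathcal{T}_{\mathcal{A}(M)}$. Hence both categories have the identical shape: a source object, a limit isomorphic to it, a single quantized object, and a quantization map factoring through the limit.

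On objects I would put $\Phi(\mathcal{A}(M))=\mathcal{A}(M)$, $\Phi(Mat_\infty)=\mathcal{T}_{\mathcal{A}(M)}$, $\Phi(Mat_{N_i})=End$, and on the generating morphisms $\Phi(T_\infty)=\iota$, $\Phi(T_i)=P$, $\Phi(T_{i\infty})=T$ (the morphism $T\in\mathscr{P}_{PQ}(\mathcal{T}_{\mathcal{A}(M)},End)$), $\Phi(T_\infty^{-1})=\iota^{-1}$, sending identities to identities. Functoriality then amounts to matching the two factorizations: $T_i=T_{i\infty}\circ T_\infty$ corresponds to $P=T\circ\iota$, and $T_{i\infty}=T_i\circ T_\infty^{-1}$ corresponds to $T=P\circ\iota^{-1}$, the latter being exactly $(\ref{eqpreq})$. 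The remaining compositions, such as $T_\infty\circ T_\infty^{-1}=id_{Mat_\infty}$ and $T_\infty^{-1}\circ T_\infty=id_{\mathcal{A}(M)}$, map to $\iota\circ\iota^{-1}=id_{\mathcal{T}_{\mathcal{A}(M)}}$ and $\iota^{-1}\circ\iota=id_{\mathcal{A}(M)}$. Since $\Phi$ is a bijection on objects and respects sources, targets and composition, it is an isomorphism of categories, a fortiori an equivalence.

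The whole argument is bookkeeping, parallel to Proposition $\ref{equivmr}$, and the only step needing genuine care is verifying that each morphism set, once closed under composition, consists of exactly the \emph{same} seven arrows: three identities, two quantization maps, the limit projection, and the inverse of the limit isomorphism. I would therefore enumerate all composable pairs in $\left.\mathscr{Q}_{MR}\vphantom{\big|}\right|_{Mat_{N_i}}$ (using the finite generator list of Corollary $\ref{qmrcor}$) and in $\mathscr{Q}_{PQ}$ (using Definition $\ref{cpqdef}$), check that every composite collapses to one of the listed arrows, and confirm that $\Phi$ then induces a bijection on the hom-sets. This guarantees that no stray composite on one side is left without a partner on the other, which is the one place the correspondence could conceivably fail.
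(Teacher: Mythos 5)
Your proposal is correct and follows essentially the same route as the paper: an explicit functor matching the two triangles $(\ref{diagrammr})$ and $(\ref{diagrampq})$ object-by-object and arrow-by-arrow ($T_\infty\mapsto\iota$, $T_i\mapsto P$, $T_{i\infty}\mapsto T$, $T_\infty^{-1}\mapsto\iota^{-1}$), with the closure of the seven-element hom-sets under composition checked as in $(\ref{eq1})$ and $(\ref{eqpreq})$. Your assignment $\Phi(T_{i\infty})=T$ is in fact the cleaner bookkeeping (the paper's draft conflates $\pi_i=T_{i\infty}$ with $T_i$), but the argument is the same.
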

\begin{example}
Let $(M,\omega)$ be a quantizable compact K\"{a}hler manifold, $L$ some very ample line bundle over $M$, and $\mathcal{A}(M)$ a Poisson algebra of smooth functions on $M$. Let $\mathcal{H}^{(m)}=\Gamma_{hol}(M,L^{m})$ be the Hilbert space of holomorphic sections in $L^m$, where $L^m:=L^{\otimes m}$. For $f\in \mathcal{A}(M)$, The Toeplitz operator is defined as
\begin{align*}
T^{(m)}(f):=\Pi^{(m)}\circ \mu_f \circ \Pi^{(m)},
\end{align*}
where $\Pi^{(m)}:L^2(M,L^{m})\to \Gamma_{hol}(M,L^m)$ is a projection and $\mu_f$ is the operator of multiplication by $f$. If $m\to \infty$ then ${\rm End}(\mathcal{H}^{(m)})$ and $\mathcal{A}(M)$ are isomorphic~(See \cite{berezin1}). Therefore, this is the case of the above Proposition $\ref{equivmin}$. 
\end{example}\par
\begin{proposition}\label{equivvvv}
If $QM^{-1}_{DQ}=\{(\mathcal{Q}^0)^{-1}\}$, then $\left.\tilde{\mathscr{Q}}_{DQ}\vphantom{\big|} \right|_{\mathcal{C}^\hbar}$ and $\mathscr{Q}_{PQ}$ are equivalent categories. More strictly, they are isomorphic.
\end{proposition}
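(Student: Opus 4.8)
The plan is to exhibit an explicit functor realizing a strict isomorphism of categories, in the same spirit as the (omitted) argument sketched for Proposition \ref{equivmin}. First I would record the structural consequence of the hypothesis: under $QM^{-1}_{DQ}=\{(\mathcal{Q}^0)^{-1}\}$ the morphism $\mathcal{Q}^0$ is invertible, so the restricted category $\left.\tilde{\mathscr{Q}}_{DQ}\vphantom{\big|}\right|_{\mathcal{C}^\hbar}$ of Corollary \ref{qdqcor} has three objects $\mathcal{A}_0(M),\ \mathcal{C}^0,\ \mathcal{C}^\hbar$ with $\mathcal{A}_0(M)\simeq\mathcal{C}^0$, while $\mathscr{Q}_{PQ}$ of Theorem \ref{qpqthm} has three objects $\mathcal{A}(M),\ \mathcal{T}_{\mathcal{A}(M)},\ End$ with $\mathcal{A}(M)\simeq\mathcal{T}_{\mathcal{A}(M)}$ via $\iota$. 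Thus both categories carry exactly the same shape: a source Poisson algebra that maps by an isomorphism onto the limit and by a quantization map onto an intermediate algebra, the latter factoring through the limit, since $\mathcal{Q}^\hbar=T_{\hbar 0}\circ\mathcal{Q}^0$ and $P=T\circ\iota$.

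Next I would define a functor $F:\left.\tilde{\mathscr{Q}}_{DQ}\vphantom{\big|}\right|_{\mathcal{C}^\hbar}\to\mathscr{Q}_{PQ}$ on objects by $\mathcal{A}_0(M)\mapsto\mathcal{A}(M)$, $\mathcal{C}^0\mapsto\mathcal{T}_{\mathcal{A}(M)}$, $\mathcal{C}^\hbar\mapsto End$, and on the generating morphisms by $\mathcal{Q}^0\mapsto\iota$, $(\mathcal{Q}^0)^{-1}\mapsto\iota^{-1}$, $T_{\hbar 0}\mapsto T$, $\mathcal{Q}^\hbar\mapsto P$, with identities going to identities; the functor $G$ in the reverse direction is given by the inverse assignments. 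Functoriality of $F$ then reduces to checking that it respects the single factorization relation and the inverse relations, e.g. $F(T_{\hbar 0})\circ F(\mathcal{Q}^0)=T\circ\iota=P=F(\mathcal{Q}^\hbar)$ and $F(\mathcal{Q}^0)\circ F((\mathcal{Q}^0)^{-1})=\iota\circ\iota^{-1}=id$, the reverse composite being symmetric; the corresponding checks for $G$ use $(\ref{eqpreq})$ and the defining relation of $\mathscr{P}_{DQ}$ in Definition \ref{cdqdef}.

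Finally I would verify $FG=id_{\mathscr{Q}_{PQ}}$ and $GF=id_{\left.\tilde{\mathscr{Q}}_{DQ}\vphantom{\big|}\right|_{\mathcal{C}^\hbar}}$ by inspection on objects and generators, which yields the strict isomorphism rather than a mere equivalence. The main obstacle, and the only step requiring genuine care, is to confirm that the two morphism sets are actually in bijection, i.e. that composing the generators produces no morphism lacking a counterpart on the other side. Concretely I would tabulate the Hom-sets and check closure: each is empty, a singleton identity, or a singleton generator; in particular $\mathcal{Q}^\hbar\circ(\mathcal{Q}^0)^{-1}=T_{\hbar 0}$ and $P\circ\iota^{-1}=T$ produce nothing new, and there is no morphism out of the intermediate object $\mathcal{C}^\hbar$ (resp. $End$) beyond its identity. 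Once this closure is established the composition tables coincide under $F$, and the isomorphism follows.

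Alternatively, the statement follows by transitivity of categorical isomorphism from Propositions \ref{equivmr} and \ref{equivmin}: under $QM^{-1}_{DQ}=\{(\mathcal{Q}^0)^{-1}\}$ one may take $QM^{-1}_{MR}=\{T_\infty^{-1}\}$, whence $\left.\tilde{\mathscr{Q}}_{DQ}\vphantom{\big|}\right|_{\mathcal{C}^\hbar}\cong\left.\mathscr{Q}_{MR}\vphantom{\big|}\right|_{Mat_{N_i}}\cong\mathscr{Q}_{PQ}$, so that composing the two isomorphisms gives the claim.
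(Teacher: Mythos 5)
Your proof is correct and follows essentially the same route as the paper, which disposes of this proposition in one line by appeal to the argument for Proposition \ref{equivmin}: both categories consist of three objects with the source isomorphic to the limit and a single quantization map factoring through it, so the evident object/generator correspondence is a strict isomorphism. Your explicit tabulation of the Hom-sets and closure check simply spells out what the paper leaves implicit.
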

\begin{proof}
The proposition is proved immediately in the same reason as Proposition $\ref{equivmin}$. 
\end{proof}
\bigskip
From Proposition $\ref{equivmin}$ and $\ref{equivvvv}$, the following property is derived immediately
\begin{align*}
 \left.\tilde{\mathscr{Q}}_{DQ}\vphantom{\big|} \right|_{\mathcal{C}^\hbar}\simeq \left.\mathscr{Q}_{MR}\vphantom{\big|} \right|_{Mat_{N_i}} \simeq \mathscr{Q}_{PQ}.
\end{align*}
\begin{proposition}\label{notequivpq}
Suppose that $\tilde{\mathscr{Q}}_{DQ}$ and $\mathscr{Q}_{MR}$ have more than three objects each pair of which are not isomorphic, respectively. Then, the quantization category $\mathscr{Q}_{PQ}$ is not categorically equivalent to either $\tilde{\mathscr{Q}}_{DQ}$ or $\mathscr{Q}_{MR}$.
\end{proposition}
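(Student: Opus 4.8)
The plan is to exploit the fact that an equivalence of categories induces a bijection between isomorphism classes of objects, and then simply compare counts. If $F:\mathscr{C}\to\mathscr{D}$ is an equivalence, then $F$ carries isomorphic objects to isomorphic objects, is fully faithful and hence reflects isomorphisms (so $F(X)\simeq F(Y)$ implies $X\simeq Y$), and is essentially surjective. Consequently the assignment $[X]\mapsto[F(X)]$ on isomorphism classes of objects is a well-defined bijection, so any two equivalent categories must have the same number of isomorphism classes of objects. It therefore suffices to compare this invariant across the three categories.

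First I would count the isomorphism classes of $\mathscr{Q}_{PQ}$. Its objects are $\mathcal{A}(M)$, $End$ and $\mathcal{T}_{\mathcal{A}(M)}$. As remarked after Definition $\ref{cpqdef}$, the morphism $\iota$ is an isomorphism with inverse $\iota^{-1}\in QM^{-1}_{PQ}$, so $\mathcal{A}(M)\simeq\mathcal{T}_{\mathcal{A}(M)}$. Since $QM^{-1}_{PQ}=\{\iota^{-1}\}$, the only non-identity isomorphism present is $\iota$; in particular neither $P$ nor $T$ is invertible in the category, so $End$ is isomorphic to none of the remaining objects. Hence $\mathscr{Q}_{PQ}$ has exactly two isomorphism classes of objects, namely $\{\mathcal{A}(M),\mathcal{T}_{\mathcal{A}(M)}\}$ and $\{End\}$.

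Next I would invoke the hypothesis: by assumption $\tilde{\mathscr{Q}}_{DQ}$ (and likewise $\mathscr{Q}_{MR}$) has more than three pairwise non-isomorphic objects, hence at least four isomorphism classes of objects. Since four exceeds two, the bijection on isomorphism classes demanded by an equivalence cannot exist, and therefore $\mathscr{Q}_{PQ}$ is equivalent to neither $\tilde{\mathscr{Q}}_{DQ}$ nor $\mathscr{Q}_{MR}$.

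The only point requiring care is the bookkeeping of isomorphisms: one must confirm that there are no isomorphisms beyond those recorded in the sets $QM^{-1}$ of each category, so that the counts of isomorphism classes are genuinely $2$ versus at least $4$. This is immediate because the morphism sets are listed explicitly in Definitions $\ref{cpqdef}$, $\ref{cdqdef}$ and $\ref{cmrdef}$, and the pairwise non-isomorphy of the relevant objects is precisely the standing hypothesis; the invariance of the number of isomorphism classes under equivalence then finishes the proof. I note that the conclusion in fact follows as soon as $\tilde{\mathscr{Q}}_{DQ}$ and $\mathscr{Q}_{MR}$ have more than two isomorphism classes, which the stated hypothesis comfortably guarantees.
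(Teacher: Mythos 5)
Your proof is correct and takes essentially the same approach as the paper's: both arguments come down to the fact that an equivalence of categories induces a bijection on isomorphism classes of objects, so the two classes of $\mathscr{Q}_{PQ}$ (with $\mathcal{A}(M)\simeq\mathcal{T}_{\mathcal{A}(M)}$ via $\iota$ and $End$ isolated) cannot be matched with the at least four classes guaranteed by the hypothesis on $\tilde{\mathscr{Q}}_{DQ}$ and $\mathscr{Q}_{MR}$. The paper merely packages this invariant through the small model categories of Propositions \ref{apeB1} and \ref{apeB2} in Appendix \ref{apb}, whereas you state and apply it directly.
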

\begin{proof}
From Proposition \ref{apeB1} in Appendix \ref{apb}, $\mathscr{Q}_{PQ}$ and $\mathscr{B}$ in Proposition \ref{apeB1} are equivalent categories. Since $\tilde{\mathscr{Q}}_{DQ}$ has more than three objects each pair of which are not isomorphic, at least $\tilde{\mathscr{Q}}_{DQ}$ and $\mathscr{B}$ in Proposition \ref{apeB1} are not equivalent categories from Proposition \ref{apeB2} in Appendix \ref{apb}. Thus $\mathscr{Q}_{PQ}$ and $\tilde{\mathscr{Q}}_{DQ}$ are not equivalent categories. The case of $\mathscr{Q}_{MR}$ is proved in the same way.
\end{proof}
\begin{proposition}\label{envdq}
$\mathscr{Q}_{env}$ and $\tilde{\mathscr{Q}}_{DQ}$ are not equivalent categories.
\end{proposition}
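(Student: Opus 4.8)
The plan is to isolate a categorical invariant that is preserved under equivalence and that separates the two categories, namely \emph{thinness}: a category is thin when every hom-set contains at most one morphism (equivalently, it is a preorder). The guiding principle is that if $F\colon\mathcal{C}\to\mathcal{D}$ is an equivalence, then $F$ is fully faithful and essentially surjective, so for all $X,Y\in ob(\mathcal{C})$ the map $\mathrm{Hom}_{\mathcal{C}}(X,Y)\to\mathrm{Hom}_{\mathcal{D}}(FX,FY)$ is a bijection and every pair of objects of $\mathcal{D}$ is isomorphic to the image of a pair from $\mathcal{C}$; hence $\mathcal{C}$ is thin if and only if $\mathcal{D}$ is thin. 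It therefore suffices to prove that $\tilde{\mathscr{Q}}_{DQ}$ is thin while $\mathscr{Q}_{env}$ is not, and the non-equivalence follows immediately.

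First I would show that $\mathscr{Q}_{env}$ fails to be thin. In the defining diagram of $\mathscr{P}_{env}$ (Definition \ref{cenvdef}) the Poisson-algebra object $\mathcal{A}$ carries two morphisms into $U$, namely $\alpha_0\circ\Pi_1$ and $\beta_0\circ\Pi_2$. Since $\Pi_2$ is the identity, these are $\alpha_0$ and $\beta_0$ regarded as $R$-module maps $\mathcal{A}\to U$, and they are distinct: $\alpha_0$ is an algebra homomorphism and $\beta_0$ a Lie homomorphism, and if they coincided then comparing $\beta_0(ab)=\alpha_0(ab)=\alpha_0(a)\alpha_0(b)$ with the defining relation (\ref{eq.beta}) would force $\alpha_0(a)\alpha_0(b)=0$ for all $a,b$, which is impossible for a nontrivial enveloping algebra. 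Hence $|\mathrm{Hom}_{\mathscr{Q}_{env}}(\mathcal{A},U)|\ge 2$.

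Next I would check that $\tilde{\mathscr{Q}}_{DQ}$ is thin by running through its hom-sets (Definition \ref{cdqdef}, Theorem \ref{qdqthm}). Between $\mathcal{A}_0(M)$ and $\mathcal{C}^\hbar$ the only morphism is $\mathcal{Q}^\hbar$; between $\mathcal{C}^{\hbar'}$ and $\mathcal{C}^\hbar$ with $\hbar\ge\hbar'$ the only morphism is $T_{\hbar\hbar'}$; in the reverse directions the hom-sets are empty except for the single arrow $(\mathcal{Q}^0)^{-1}$ present when $QM^{-1}_{DQ}\ne\emptyset$. The point is that every composite collapses onto one of these generators: from $\mathcal{Q}^\hbar=T_{\hbar\hbar'}\circ\mathcal{Q}^{\hbar'}$ and $\mathcal{Q}^0\circ(\mathcal{Q}^0)^{-1}=id_{\mathcal{C}^0}$ one gets, for example, $\mathcal{Q}^\hbar\circ(\mathcal{Q}^0)^{-1}=T_{\hbar 0}$, so the maps collected in $Comp_{DQ}$ introduce no second parallel arrow. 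Thus every hom-set of $\tilde{\mathscr{Q}}_{DQ}$ has at most one element, and combining the three observations yields the claim.

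I expect the main obstacle to be this last step: one must verify carefully that the auxiliary composition maps in $Comp_{DQ}$ genuinely coincide with the already-named generators, since this is precisely where a hidden second morphism could slip in. The verification rests on the relation $\mathcal{Q}^\hbar=T_{\hbar\hbar'}\circ\mathcal{Q}^{\hbar'}$ together with the consistency $T_{\hbar\hbar''}\circ T_{\hbar''\hbar'}=T_{\hbar\hbar'}$ that makes $\tilde{\mathscr{Q}}_{DQ}$ a genuine (thin) category, and it is helped by Theorem \ref{theoQM-1}, which rules out the extra isomorphisms $(\mathcal{Q}^\hbar)^{-1}$ with $\hbar\ne 0$. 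A worthwhile closing remark is that $\mathscr{Q}_{MR}$, $\mathscr{Q}_{PQ}$ and $\tilde{\mathscr{Q}}_{DQ}$ are all thin, consistent with the equivalences already established, whereas the non-thinness coming from the two arrows $\alpha_0,\beta_0\colon\mathcal{A}\rightrightarrows U$ is exactly what sets $\mathscr{Q}_{env}$ apart.
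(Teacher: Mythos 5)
Your proposal is correct and is essentially the paper's own argument in slightly more abstract packaging: both proofs rest on exactly the two facts that $\alpha_0\circ\Pi_1\neq\beta_0\circ\Pi_2$ are parallel morphisms $\mathcal{A}\to U$ in $\mathscr{Q}_{env}$ while every hom-set of $\tilde{\mathscr{Q}}_{DQ}$ has at most one element, the paper deriving the contradiction from the naturality square of $GF\Rightarrow id_{\mathscr{Q}_{env}}$ where you invoke full faithfulness (preservation of thinness) directly. Your extra verifications that the two arrows are genuinely distinct and that $Comp_{DQ}$ introduces no parallel arrows are welcome details the paper leaves implicit, but they do not change the route.
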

\begin{proof}
We consider the following diagram.
\begin{align*}
\xymatrix{
(\mathcal{A},\cdot)\ar[rd]_{\alpha_0} \ar[rdd]_{\alpha_i}&\ar[l]_{~~\Pi_1}\mathcal{A}\ar[r]^{\Pi_2~~}\ar[d]_{\alpha_0 \circ \Pi_1} \ar[d]^{\beta_0 \circ \Pi_2}&\ar[ldd]^{\beta_i} \ar[ld]^{\beta_0}(\mathcal{A}, \{~,~\})\\
{}&U\ar[d]^{h_i}&{}\\
{}&M_i&{}
}
\end{align*}
Suppose that $\mathscr{Q}_{env}$ and $\tilde{\mathscr{Q}}_{DQ}$ are equivalent categories: that is, there exist functors $F:\mathscr{Q}_{env}\to \tilde{\mathscr{Q}}_{DQ}$ and $G:\tilde{\mathscr{Q}}_{DQ}\to \mathscr{Q}_{env}$, $FG\simeq id_{\tilde{\mathscr{Q}}_{DQ}}$ and $GF\simeq id_{\mathscr{Q}_{env}}$. Then $\alpha_0\circ \Pi_1$ and $\beta _0\circ \Pi_2$ must be mapped to $\tilde{\mathscr{Q}}_{DQ}(F(\mathcal{A}), F(U))$ by $F$:
\begin{align*}
F(\alpha_0 \circ \Pi_1),~F(\beta_0 \circ \Pi_2)\in \tilde{\mathscr{Q}}_{DQ}(F(\mathcal{A}), F(U)).
\end{align*}
Recall that $QM_{DQ}^{-1}=\emptyset$ or $QM^{-1}_{DQ}=\{(\mathcal{Q}^0)^{-1}\}$, that is, if $\tilde{\mathscr{Q}}_{DQ}(X,Y)\neq \emptyset$ there is the unique morphism from $X$ to $Y$ for arbitrary $X,Y\in ob(\tilde{\mathscr{Q}}_{DQ})$. So $F(\alpha_0\circ \Pi_1)=F(\beta \circ \Pi_2)$, and
\begin{align*}
GF(\alpha_0 \circ \Pi_1))=GF(\beta_0 \circ \Pi_2))\in \mathscr{Q}_{env}(GF(\mathcal{A}), GF(U)),
\end{align*}
for arbitrary functor $G:\tilde{\mathscr{Q}}_{DQ}\to \mathscr{Q}_{env}$. A diagram of a natural transformation $\mu:GF\to id_{\mathscr{Q}_{env}}$ is given by 
\begin{align*}
\xymatrix{
GF(\mathcal{A}) \ar[r]_{\mu_{\mathcal{A}}} \ar[d]_{GF(\alpha_0\circ \Pi_1)=GF(\beta_0\circ \Pi_2)}& \mathcal{A} \ar@<0.5ex>[d]_{\alpha_0 \circ \Pi_1~} \ar@<-0.5ex>[d]^{~\beta_0\circ \Pi_2}\\
GF(U)\ar[r]_{\mu_U}&U.
}
\end{align*}
However, since $\alpha_0 \circ \Pi_1\neq \beta_0 \circ \Pi_2$ in $\mathscr{Q}_{env}$, there are no natural isomorphisms that satisfies the commutatibility of the diagram. This is a contradiction. Thus, $\mathscr{Q}_{env}$ is not categorically equivalent to $\tilde{\mathscr{Q}}_{DQ}$.
\end{proof}
\begin{proposition}
$\mathscr{Q}_{env}$ is categorically equivalent to neither $\mathscr{Q}_{PQ}$ nor $\mathscr{Q}_{MR}$.
\end{proposition}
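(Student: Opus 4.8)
The plan is to observe that the proof of Proposition \ref{envdq} exploited only one structural feature of $\tilde{\mathscr{Q}}_{DQ}$: every hom-set contains at most one morphism, whereas $\mathscr{Q}_{env}$ carries a genuinely parallel pair of distinct arrows $\alpha_0\circ\Pi_1\neq\beta_0\circ\Pi_2$ in $\mathscr{Q}_{env}(\mathcal{A},U)$. Since this \emph{thinness} of the target category is all that was used, it suffices to show that both $\mathscr{Q}_{PQ}$ and $\mathscr{Q}_{MR}$ are thin, i.e.\ that $\sharp\,\mathscr{Q}_{PQ}(X,Y)\le 1$ and $\sharp\,\mathscr{Q}_{MR}(X,Y)\le 1$ for all objects $X,Y$; the identical contradiction argument then transfers verbatim.

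First I would enumerate the hom-sets of $\mathscr{Q}_{PQ}$. By Definition \ref{cpqdef} the only non-identity generators are $P\colon\mathcal{A}(M)\to End$, $\iota\colon\mathcal{A}(M)\to\mathcal{T}_{\mathcal{A}(M)}$, $T\colon\mathcal{T}_{\mathcal{A}(M)}\to End$ and $\iota^{-1}$, and the relations $T\circ\iota=P$ together with $P\circ\iota^{-1}=T$ in $(\ref{eqpreq})$ show that every composite reduces to one of these. Hence each ordered pair of objects supports at most one morphism, so $\mathscr{Q}_{PQ}$ is thin. I would then carry out the same bookkeeping for $\mathscr{Q}_{MR}$: by Definition \ref{cmrdef} the set $\mathscr{P}_{MR}(\mathcal{A}(M),Mat_{N_i})$ contains the single quantization map $T_i$, the $T_{ij}$ are fixed by the ordered product $(\ref{eqlem})$, the projections from $Mat_\infty$ are determined by the $\pi_i$ of Theorem \ref{qmrthm}, and the only admissible inverse is $T_\infty^{-1}$; all composites again collapse to one of these, so $\mathscr{Q}_{MR}$ is thin as well.

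Finally I would run the argument of Proposition \ref{envdq} with $\tilde{\mathscr{Q}}_{DQ}$ replaced by $\mathscr{Q}_{PQ}$ (respectively $\mathscr{Q}_{MR}$). Assuming an equivalence $F\colon\mathscr{Q}_{env}\to\mathscr{Q}_{PQ}$ with quasi-inverse $G$ and a natural isomorphism $\mu\colon GF\Rightarrow\mathrm{id}_{\mathscr{Q}_{env}}$, the two images $F(\alpha_0\circ\Pi_1)$ and $F(\beta_0\circ\Pi_2)$ lie in the single hom-set $\mathscr{Q}_{PQ}(F(\mathcal{A}),F(U))$, which has at most one element, forcing them to coincide. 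Applying $G$ and feeding the result into the naturality square of $\mu$ then forces $\alpha_0\circ\Pi_1=\beta_0\circ\Pi_2$ in $\mathscr{Q}_{env}$, contradicting their distinctness. The same words dispatch the $\mathscr{Q}_{MR}$ case.

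I expect the only real work to be the thinness verification: one must confirm that the composition closure prescribed in Definitions \ref{cpqdef} and \ref{cmrdef} does not secretly create a second arrow inside an already-occupied hom-set, and that the degenerate situations in which objects become isomorphic (for instance $Mat_\infty\simeq\mathcal{A}(M)$) do not spoil thinness. Equivalently, one can phrase the whole conclusion invariantly: thinness is preserved under equivalence, since an equivalence is fully faithful and essentially surjective, so any category equivalent to a thin one is itself thin; as $\mathscr{Q}_{env}$ fails to be thin, it can be equivalent to neither $\mathscr{Q}_{PQ}$ nor $\mathscr{Q}_{MR}$.
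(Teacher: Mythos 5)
Your proposal is correct and follows essentially the same route as the paper: the paper's proof simply observes that $\mathscr{Q}_{PQ}$ and $\mathscr{Q}_{MR}$ have a unique morphism between any pair of objects and then invokes the argument of the preceding proposition verbatim. Your additional remark that thinness is an equivalence invariant (so a non-thin category cannot be equivalent to a thin one) is a clean restatement of the same idea, and your flagged concern about verifying that composition closure does not create parallel arrows is a reasonable point the paper glosses over.
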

\begin{proof}
Since $\mathscr{Q}_{PQ}$ and $\mathscr{Q}_{MR}$ has the unique morphism between its arbitrary objects, the cases of them are proved in the same manner as Proposition $\ref{envdq}$.
\end{proof}
\section{Conclusions and Discussions}\label{seccon}
In this article, we discuss a category of quantization of Poisson manifolds or Poisson algebras as a subcategory of $R${\rm Mod}, but its objects are commutative and noncommutative algebras. We define the quantization category as a generalization of quantizations of the Poisson algebra, and show that this category contains categories of some known quantizations of the Poisson algebra. We also discuss relationships between the categories of various types of quantizations. \par
The pre-$\mathscr{Q}$ category $\mathscr{P}$ is defined by choosing a fixed Poisson algebra $\mathcal{A}$ and algebras $M_i$ of $\mathcal{A}$'s quantized algebras as objects, and choosing quantization linear maps $T_i:\mathcal{A}\to M_i$ and linear maps between each $M_i$ as morphisms. If a morphism is a linear isomorphism, then its inverse is also a morphism. Each quantization map $T_i$ has a noncommutative parameter $\hbar(T_i)$. The character $\chi(M_i)$ is introduced as the maximum absolute value of $\hbar(T_i)$. The index category $J^\bullet$ and a set of functors $F^\bullet$ are determined by the noncommutative character $\chi$ to consider the classical limit. In addition to these structures, we defined the quantization category as being equipped with homomorphisms of the algebra between Poisson brackets in $\mathcal{A}$ and Lie brackets in the limit determined by $F^\bullet$ of $J^\bullet$. As concrete examples, the quantization category of matrix regularization (including Berezin-Toeplitz quantization), strict deformation quantization, prequantization and Poisson enveloping algebra are constructed. The equivalence or non-equivalence of these categories is also discussed. In particular, we show that when $\sharp N=\sharp \hbar$ the quantization category of matrix regularization and the quantization category of strict deformation quantization are equivalent categories if there are no isomorphisms except for identities in $QM_{MR}$ and $QM_{DQ}$, or if $QM^{-1}_{MR}=\{T_\infty^{-1}\}$ and $QM_{DQ}^{-1}=\{(\mathcal{Q}^0)^{-1}\}$. In addition, we show the equivalence between $\left.\tilde{\mathscr{Q}}_{DQ}\vphantom{\big|} \right|_{\mathcal{C}^\hbar}$, $\left.\mathscr{Q}_{MR}\vphantom{\big|} \right|_{Mat_{N_i}}$ and $\mathscr{Q}_{PQ}$ if there are no isomorphisms except for identities in $QM_{MR}$ and $QM_{DQ}$, or if $QM^{-1}_{MR}=\{T_\infty^{-1}\}$ and $QM_{DQ}^{-1}=\{(\mathcal{Q}^0)^{-1}\}$. For example, this condition is satisfied for compact K\"{a}hler manifolds in the case of Berezin-Toeplitz quantization. On the other hand, it is shown that the quantization category of Poisson enveloping algebra is categorically equivalent to neither $\mathscr{Q}_{MR}$, nor $\mathscr{Q}_{PQ}$ nor $\tilde{\mathscr{Q}}_{DQ}$. \bigskip

We have focused on the quantization category which contains one quantization procedure. However, we define the quantization category such that it is possible to include multiple types of quantization theories. For example, if the intersection of index categories $J^\bullet_1$ and $J^\bullet_2$ of two quantization categories $\mathscr{Q}_1$ and $\mathscr{Q}_2$ with the same Poisson algebra $\mathcal{A}$ is empty, a category consisting of the sum of $\mathscr{Q}_1$ and $\mathscr{Q}_2$ is also a quantization category. Here, the category made up by this summation is a category whose object set is the union of the object sets of $\mathscr{Q}_1$ and $\mathscr{Q}_2$, and its set of morphisms is created from the union of morphisms of $\mathscr{Q}_1$ and $\mathscr{Q}_2$ and adding composite maps to the union so that the whole becomes a category. For example, a category created by the sum of $\mathscr{Q}_{MR}$ and $\tilde{\mathscr{Q}}_{DQ}$ is a quantization category. One of the future tasks will be to examine the concrete construction and to study properties of such a category made up of the sum of quantization categories. It is also necessary to consider the sums of more complex categories whose $J_1^\bullet$ and $J_2^\bullet$ are not disjoint. Such researches should be done as a next step.\par
\bigskip

In this paper, we formulated a quantization category by adopting $(1),(2)$ and $(3)$ among the conditions by Dirac enumerated at the beginning of this article. However, we can choose other combinations. Therefore, the quantization category studied in this paper might be an example of a series of quantization categories that have a variety of quantization conditions. The task of investigating such a large area of quantization categories remains for future work. 
\bigskip

Finally, we will consider potential applications of the quantization category to physics.\\
The universe we live is classically described as a vector bundle.
The base manifold is a Riemannian manifold. 
The fibers are for the electromagnetic field, non-Abelian gauge fields, matter fields and so on.
In the case of the particle physics given by a Hamiltonian formulation of mechanics, 
a cotangent bundle over the Riemannian manifold is its geometry. 
The cotangent bundle is the Poisson manifold.
When we consider $M$ a cotangent bundle over a Riemannian manifold,
then the $A(M)$ in $\mathscr{Q}$ corresponds with a classical physics.
In that case, the character $\chi $ or $\hbar$ should be chosen as the Planck constant or energy scale.
To make concrete predictions or to clarify physical properties, we have to import further structures 
into the quantization category.
The findings of several previous studies might be useful when introducing physical structures.
For example, Ojima and Takeori \cite{Ojima:2006rf} 
studied the correspondence between 
classical and quantum physics, which is called Micro-macro duality, by categorical approach.
Alternatively, 
categorical approaches to quantum mechanics have shown how to describe the 
fine structure of physics as functors.\cite{QM1,QM2}
\\

As an example, let us consider the IKKT
matrix model or noncommutative gauge theory in the context of $\mathscr{Q}_{MR}$ \cite{IKKT1,IKKT2,IKKT3}.
In Section \ref{secmr}, we considered matrix regularization.
The classical IKKT matrix model is regarded as a matrix regularization of the
type IIB string theory with the Schild gauge.
The bosonic part of the Lagrangian is given as
\begin{align*}
\{ X^{\mu} , X^{\nu} \}\{ X^{\mu} , X^{\nu} \},
\end{align*}
where $X^{\mu}$ is a map from a parameter space to a world sheet and
$\{ X^{\mu} , X^{\nu} \}$ is the Poisson bracket on the world sheet.
Using a determinant of the world sheet metric $g$ and Levi-Civita symbol 
$\epsilon^{ab}$, the Poisson bracket is defined as
\begin{align*}
\{ X^{\mu} , X^{\nu} \}:= 
\frac{1}{\sqrt{g}}\epsilon^{ab} \partial_a X^{\mu} \partial_b X^{\nu} .
\end{align*}
The quantization map $T_i$ of the quantization category of matrix regularization in Section $\ref{secmr}$ 
maps this Poisson bracket to a commutator
\begin{align*}
[ X^{\mu} , X^{\nu}] := X^{\mu} X^{\nu} - X^{\nu} X^{\mu},
\end{align*}
up to higher order terms of $\hbar$. Then the Lagrangian is also obtained as
\begin{align*}
[ X^{\mu} , X^{\nu} ][ X^{\mu} , X^{\nu} ],
\end{align*}
at the limit $Mat_{\infty}$. This matrix model is also regarded as a noncommutative 
$U(1)$ gauge theory on noncommutative 
Euclidean space at the limit of the quantization category of matrix regularization.
In this context, type IIB string theory with Schild gauge is
defined on the object $A(M)$ and the IKKT matrix model 
is defined on the object $Mat_\infty$.
\\
\bigskip

In this article, only the ordinary Poisson structure 
has been considered for quantization.
However, there are many other types of classical mechanics, such as
Nambu mechanics \cite{nambu}. 
To attack the quantization problem of the membrane theory, quantization of the Nambu bracket has been shown to be an effective approach.
For this purpose, the Nambu bracket should be replaced with the Lie 3 bracket by the 
quantization morphism \cite{3bra1,3bra2,3bra3,3bra4,3bra5,3bra6}.
The category of quantization defined in this article is naturally generalized to such a quantization type. This would be a suitable problem to address in the next stage of an investigation.\par
\bigskip
The quantization category proposed in this article involves many basic or applied problems, including pure mathematical and physical problems, as mentioned above. All of these should be solved in the future.

%
%
\section*{Acknowledgements}
\noindent 
A.S.\ was supported by JSPS KAKENHI Grant Number 16K05138.
The authors wish to thank A. Cardona and A. Yoshioka for their useful comments and discussions. We thank the anonymous reviewers for their careful reading of our manuscript and their many insightful comments and pointing out mistakes.

\appendix
\section{Definition of $\tilde{O}$}\label{ap1}
Since we have not defined a norm for objects of pre-$\mathscr{Q}$ category $\mathscr{P}$ in this paper, Landau symbol $O$ does not make sense. So we define an order $\tilde{O}$ by $x\in \mathbb{R}$ with the Euclidean norm.

\begin{definition}
Let $\mathcal{M}$ be a $R$-module for a commutative algebra $R$ over $\mathbb{C}$. Let $f_i$ be a complex valued function such that
\begin{align*}
\lim_{x\to 0} \frac{f_i(xz)}{x}=0,
\end{align*}
where $x\in \mathbb{R}$ and $z\in \mathbb{C}$. For $a_i\in \mathcal{M}$ which is independent of $z\in \mathbb{C}$, we denote the element described as $\sum_i f_i(z)a_i\in \mathcal{M}$ by $\tilde{O}(z)$.
\end{definition}
From this definition, the term of $\tilde{O}(\hbar^{1+\epsilon})$ in $(\ref{lie})$ is also $0$ when $\hbar=0$. Note that   $\hbar$ is not necessarily continuous in our definition of the pre-$\mathscr{Q}$ category.
\section{Structures of Simple Categories}\label{apb}
In this section, we discuss structures of simple categories used in Section \ref{secce} and Section \ref{secdq}.
\begin{proposition}\label{apeB1}
For categories $\mathscr{A}$ and $\mathscr{B}$ given as
\begin{align*}
\xymatrix{
A\ar[rd]^{g~~~~~~~~~~~~\mbox{and}} \ar@<0.5ex>[d]^{f}&{}&{~~~~~}&A^\prime \ar[d]^{g^\prime}\\
B\ar@<0.5ex>[u]^{f^{-1}}\ar[r]_h&C&{~~~~~}&C^\prime,
}
\end{align*}
$\mathscr{A}$ and $\mathscr{B}$ are equivalent categories. Here 
\begin{align*}
ob(\mathscr{A})&=\{A,B,C\},\\
Mor(\mathscr{A})&=\{f,f^{-1},g,h, id_A, id_B, id_C\},\\
ob(\mathscr{B})&=\{A^\prime, C^\prime\},\\
Mor(\mathscr{B})&=\{g^\prime , id_{A^\prime}, id_{C^\prime}\}.
\end{align*}
\end{proposition}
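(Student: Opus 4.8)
The plan is to establish the equivalence by exhibiting an explicit pair of functors $F\colon\mathscr{A}\to\mathscr{B}$ and $G\colon\mathscr{B}\to\mathscr{A}$ together with natural isomorphisms $FG\cong id_{\mathscr{B}}$ and $GF\cong id_{\mathscr{A}}$. The guiding observation is that in $\mathscr{A}$ the objects $A$ and $B$ are isomorphic via $f$, so $\mathscr{B}$ is morally the skeleton of $\mathscr{A}$ obtained by identifying $A$ with $B$. Moreover, since $g$ and $h$ are the only non-identity morphisms with codomain $C$, composition in $\mathscr{A}$ forces the relations $h\circ f=g$ and $g\circ f^{-1}=h$; these are the facts that will make everything consistent.

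First I would define $G$ on objects by $G(A')=A$, $G(C')=C$ and on the single non-identity arrow by $G(g')=g$, sending identities to identities. Functoriality of $G$ is immediate since $\mathscr{B}$ admits no nontrivial compositions. Next I would define $F$ by collapsing $A$ and $B$: set $F(A)=F(B)=A'$, $F(C)=C'$, $F(f)=F(f^{-1})=id_{A'}$, and $F(g)=F(h)=g'$. The one computation requiring attention is that $F$ preserves composition; the nontrivial instances are $F(h\circ f)=F(g)=g'=g'\circ id_{A'}=F(h)\circ F(f)$ and the symmetric identity coming from $g\circ f^{-1}=h$, both of which hold precisely because of the forced relations above.

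With these functors defined, one checks that $FG$ equals $id_{\mathscr{B}}$ on the nose, so no work is needed on the $\mathscr{B}$ side. The composite $GF$, however, sends $B$ to $A$ rather than back to $B$, so a genuine natural isomorphism $\eta\colon GF\to id_{\mathscr{A}}$ must be produced. I would take $\eta_A=id_A$, $\eta_C=id_C$, and---this is the key choice---$\eta_B=f$; since $f$ is invertible, every component of $\eta$ is an isomorphism, so $\eta$ will be a natural isomorphism once naturality is verified.

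The main obstacle, and essentially the only substantive step, is checking the naturality square $\eta_Y\circ GF(\phi)=\phi\circ\eta_X$ for each generating morphism $\phi$. The delicate case is $h\colon B\to C$, where naturality reduces to $g=h\circ f$, which is exactly the composition relation holding in $\mathscr{A}$; the cases of $f$ and $f^{-1}$ reduce to $f\circ id_A=f$ and $f^{-1}\circ f=id_A$, and the remaining morphisms are trivial. Once these squares commute, $\eta$ is a natural isomorphism, $FG=id_{\mathscr{B}}$, and the two categories are equivalent as claimed.
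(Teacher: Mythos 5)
Your proof is correct and follows essentially the same route as the paper: the same functors $F$ (collapsing $B$ onto $A$) and $G$, with $FG=id_{\mathscr{B}}$ and a natural isomorphism $GF\cong id_{\mathscr{A}}$ built from the isomorphism $f$. The only difference is that you spell out the component $\eta_B=f$ and verify the naturality squares (in particular $g=h\circ f$), which the paper dismisses as ``trivial''; this is a welcome but not substantively different elaboration.
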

\begin{proof}
We define functors $F:\mathscr{A}\to \mathscr{B}$ and $G:\mathscr{B}\to \mathscr{A}$ by
\begin{align*}
F(A)&=F(B)=A^\prime ,\\
F(C)&=C^\prime ,\\
F(f)&=F(f^{-1})=id_{A^\prime},\\
F(g)&=F(h)=g^\prime ,\\
\\
G(A^\prime)&=A,\\
G(C^\prime)&=C,\\
G(g^\prime)&=g.
\end{align*}
Since $A$ and $B$ are isomorphic by $f$ and $f^{-1}$, trivially, $FG\simeq id_{\mathscr{B}}$ and $GF\simeq id_{\mathscr{A}}$. 
\end{proof}
\begin{proposition}\label{apeB2}
For categories $\mathscr{A}$ and $\mathscr{B}$ given as
\begin{align*}
\xymatrix{
A\ar[rd]^{g~~~~~~~~~~~~\mbox{and}} \ar[d]_{f}&{}&{~~~~~}&A^\prime \ar[d]^{g^\prime}\\
B\ar[r]_h&C&{~~~~~}&C^\prime,
}
\end{align*}
$\mathscr{A}$ and $\mathscr{B}$ are not equivalent categories. Here 
\begin{align*}
ob(\mathscr{A})&=\{A,B,C\},\\
Mor(\mathscr{A})&=\{f,g,h, id_A, id_B, id_C\},\\
ob(\mathscr{B})&=\{A^\prime, C^\prime\},\\
Mor(\mathscr{B})&=\{g^\prime , id_{A^\prime}, id_{C^\prime}\}.
\end{align*}
\end{proposition}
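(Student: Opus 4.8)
The plan is to exhibit a categorical invariant that is preserved under equivalence but on which $\mathscr{A}$ and $\mathscr{B}$ disagree. The natural choice is the number of isomorphism classes of objects: any equivalence of categories is fully faithful and essentially surjective, and these two properties together force the assignment $[X]\mapsto [F(X)]$ to be a well-defined bijection between the isomorphism classes of objects of the two categories (well-definedness and surjectivity come from preservation of isomorphisms and essential surjectivity, injectivity from the fact that a fully faithful functor reflects isomorphisms). I would therefore count isomorphism classes on each side and observe that they differ.

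First I would determine the isomorphisms in $\mathscr{A}$. Reading off $Mor(\mathscr{A})=\{f,g,h,id_A,id_B,id_C\}$, the only morphism out of $B$ besides $id_B$ is $h\colon B\to C$, and the only morphism out of $C$ is $id_C$; in particular there is no morphism $B\to A$ and no morphism $C\to A$. Hence neither $f$, $g$, nor $h$ can admit a two-sided inverse, so the identities are the only isomorphisms in $\mathscr{A}$, and $A$, $B$, $C$ lie in three distinct isomorphism classes.

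Next I would repeat the count in $\mathscr{B}$. With $Mor(\mathscr{B})=\{g^\prime,id_{A^\prime},id_{C^\prime}\}$, the only non-identity morphism is $g^\prime\colon A^\prime\to C^\prime$, which has no inverse since the sole morphism out of $C^\prime$ is $id_{C^\prime}$. Thus $\mathscr{B}$ has exactly two isomorphism classes of objects, namely $\{A^\prime\}$ and $\{C^\prime\}$. Since $3\neq 2$, no bijection between the isomorphism classes can exist, and therefore $\mathscr{A}$ and $\mathscr{B}$ are not equivalent.

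The only step requiring genuine attention — and it is precisely where the contrast with Proposition \ref{apeB1} resides — is verifying the absence of non-identity isomorphisms in $\mathscr{A}$. Here $f$ is not invertible, whereas in Proposition \ref{apeB1} the arrow $f$ came paired with an explicit $f^{-1}$, which collapsed $A$ and $B$ into a single isomorphism class and reduced the count in $\mathscr{A}$ from three to two, matching $\mathscr{B}$. Removing that inverse is exactly what breaks the equivalence, so I expect no serious obstacle beyond carefully reading off that no backward arrows are present.
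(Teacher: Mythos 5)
Your proof is correct and rests on the same two observations as the paper's: $\mathscr{A}$ has no non-identity isomorphisms, and any candidate equivalence must identify two of its three objects. The paper phrases this by unwinding the naturality squares for $\theta\colon GF\to id_{\mathscr{A}}$ directly, whereas you package it as the standard invariant ``number of isomorphism classes''; the substance is identical.
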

\begin{proof}
If $\mathscr{A}\simeq \mathscr{B}$, then there exists a natural isomorphism $\theta:GF\to id_{\mathscr{A}}$ for $F:\mathscr{A}\to \mathscr{B}$ and $G:\mathscr{B}\to \mathscr{A}$. That is, for all objects $X,Y\in \mathscr{A}$, if there exists $m\in \mathscr{A}(X,Y)$ the following diagram commutes.
\begin{align*}
\xymatrix{
GF(X)\ar@<0.5ex>[r]^{\theta_X}\ar[d]_{GF(m)}&X\ar@<0.5ex>[l]^{\theta_X^{-1}}\ar[d]^{m}\\
GF(Y)\ar@<0.5ex>[r]^{\theta_Y}&Y\ar@<0.5ex>[l]^{\theta_Y^{-1}}.
}
\end{align*} 
However, there are no natural isomorphisms such that the following diagrams commute  
\begin{align*}
\xymatrix{
GF(A)\ar[r]^{\theta_{A}} \ar[d]_{GF(f)}&A \ar[d]^{f} & {}&GF(A)\ar[r]^{\theta_{A}} \ar[d]_{GF(g)}&A \ar[d]^{g} \\
GF(B)\ar[r]_{\theta_{B}}&B,&{}&GF(C)\ar[r]_{\theta_{C}}&C.
}
\end{align*}
Indeed, at least there exists an object $X\in ob(\mathscr{A})$ such that $GF(X)=GF(Y)$ for $X\neq Y$, and $Mor(\mathscr{A})$ has no isomorphisms.
\end{proof}
For the category $\mathscr{A}$ in Proposition \ref{apeB2}, we consider the case that there exists $g^{-1}\in \mathscr{A}(C,A)$ such that $g\circ g^{-1}=id_{C}, g^{-1}\circ g=id_{A}$.
\begin{proposition}\label{apeB3}
Let $\mathscr{A}$ be a category in Proposition \ref{apeB2}. If $g\in \mathscr{A}(A,C)$ is an isomorphism, then $\mathscr{A}^\prime$ given by 
\begin{align*}
ob(\mathscr{A}^\prime)&:=\{A,B,C\}\\
Mor(\mathscr{A}^\prime)&:=\{f,h,g,g^{-1},f^\prime,h^\prime,e_B, id_A, id_B, id_C\}
\end{align*}
is a category. Here
\begin{align*}
f^\prime&=g^{-1}\circ h\\
h^\prime&=f\circ g^{-1}\\
e_B&=f\circ g^{-1}\circ h.
\end{align*}
\end{proposition}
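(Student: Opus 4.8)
The plan is to verify the three category axioms for $\mathscr{A}'$: that the listed morphism set is closed under composition, that composition is associative, and that $id_A$, $id_B$, $id_C$ act as units. Since every arrow of $\mathscr{A}'$ is an honest composite of the underlying morphisms $f$, $g$, $g^{-1}$, $h$ (living in the ambient category in which $g^{-1}$ exists), associativity is inherited for free and the unit laws are immediate; the genuine content lies entirely in closure under composition.

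First I would record the relation that is already forced in Proposition \ref{apeB2}: because $\mathscr{A}$ is a category and the only morphism $A\to C$ in it is $g$, the composite $h\circ f$, which is necessarily a morphism $A\to C$, must equal $g$. Thus $h\circ f=g$. Together with the isomorphism identities $g\circ g^{-1}=id_C$ and $g^{-1}\circ g=id_A$, this single relation drives every subsequent computation. I would also note at the outset that $e_B=h'\circ h=f\circ f'$ and that $e_B$ is idempotent: $e_B\circ e_B=f\circ g^{-1}\circ(h\circ f)\circ g^{-1}\circ h=f\circ g^{-1}\circ g\circ g^{-1}\circ h=f\circ g^{-1}\circ h=e_B$.

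The main step is to tabulate, hom-set by hom-set, every composable pair and confirm the result lands in $Mor(\mathscr{A}')$. Sorting the ten arrows by source and target, the only hom-set with more than one element is $\mathscr{A}'(B,B)=\{id_B,e_B\}$; all others are singletons or the identities on $A$ and $C$. Each composite then collapses via $h\circ f=g$ and the isomorphism identities. Representative cases are $h'\circ g=f\circ g^{-1}\circ g=f$, $\ g\circ f'=g\circ g^{-1}\circ h=h$, $\ f'\circ f=g^{-1}\circ(h\circ f)=g^{-1}\circ g=id_A$, and $h\circ h'=(h\circ f)\circ g^{-1}=g\circ g^{-1}=id_C$; so every endomorphism composite of $A$ and of $C$ reduces to the relevant identity, and every off-diagonal composite returns the unique arrow in its hom-set. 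The potentially delicate hom-set $\mathscr{A}'(B,B)$ is closed because $f\circ f'=h'\circ h=e_B$ and $e_B$ is idempotent, so $\{id_B,e_B\}$ is exhausted.

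I expect the main obstacle to be bookkeeping rather than any conceptual difficulty: one must enumerate \emph{all} composable pairs, in particular the two distinct composites $f\circ f'$ and $h'\circ h$ that both yield the adjoined endomorphism $e_B$, and the several composites that silently collapse to $id_A$ or $id_C$, and then confirm that no composite forces an arrow outside the stated ten. Once the closure table is complete, associativity and the unit laws follow because composition in $\mathscr{A}'$ is just composition of the underlying morphisms, and we conclude that $\mathscr{A}'$ is a category.
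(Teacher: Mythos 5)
Your proposal is correct and follows essentially the same route as the paper: the paper's proof is exactly an exhaustive closure-under-composition table (with associativity and units inherited from the ambient composition), and your representative computations agree entry-for-entry with that table. The only presentational difference is that you make explicit the forced relation $h\circ f=g$ and the idempotency of $e_B$, which the paper's table uses silently.
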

\begin{proof}
We show that its composition is closed for all morphisms. The composition operation table is given as
\begin{align*}
\begin{array}{|c|c|c|c|c|c|c|c|c|}
    \hline
              &     f&    h&    g&    g^{-1}&    f^\prime&    h^\prime&    e_B\\
    \hline
       \circ f&    -&    g&    -&    -&    id&    -&    f\\
    \hline
       \circ h&    -&   - &    -&    f^\prime&   - &   e_B &   - \\
    \hline
       \circ g&    -&    -&    -&    id&   - &   f &   -     \\
    \hline
       \circ g^{-1}&   h^\prime &   - &   id & -   & -   & -       &-    \\
    \hline
       \circ f^\prime&   e_B &  -  &  h  & -   & -   & -       & -    \\
    \hline
       \circ h^\prime&  -  &  id  & -   & -   & g^{-1}   &    -    &h^\prime  \\
    \hline
       \circ e_B&  -  &  h  &  -  &  -  & f^\prime   & -   & e_B       \\
    \hline
\end{array}
\end{align*}
Thus, $\mathscr{A}^\prime$ is a category.
\end{proof}

\end{document}